\renewcommand{\mathbf}[1]{{\bm{#1}}}
\def\munderbar#1{\underline{\sbox\tw@{$#1$}\dp\tw@\z@\box\tw@}}
\algnewcommand\algorithmicforeach{\textbf{for each}}
\newcommand{\eg}{\emph{e.g.}}
\newcommand{\etal}{\emph{et al.}}
\newcommand{\ie}{\emph{i.e.}}
\declaretheorem[name=Theorem, sibling=algorithm]{theorem}
\declaretheorem[name=Proposition, sibling=algorithm]{proposition}
\declaretheorem[name=Definition, sibling=algorithm, style=definition, 
    qed=\qedsymbol]{definition}
\declaretheorem[name=Example, sibling=algorithm, style=definition, 
    qed=\qedsymbol]{example}
\DeclareMathSymbol{\Alpha}{\mathalpha}{operators}{"41}
\DeclareMathSymbol{\Beta}{\mathalpha}{operators}{"42}
\DeclareMathSymbol{\Epsilon}{\mathalpha}{operators}{"45}
\DeclareMathSymbol{\Zeta}{\mathalpha}{operators}{"5A}
\DeclareMathSymbol{\Eta}{\mathalpha}{operators}{"48}
\DeclareMathSymbol{\Iota}{\mathalpha}{operators}{"49}
\DeclareMathSymbol{\Kappa}{\mathalpha}{operators}{"4B}
\DeclareMathSymbol{\Mu}{\mathalpha}{operators}{"4D}
\DeclareMathSymbol{\Nu}{\mathalpha}{operators}{"4E}
\DeclareMathSymbol{\Omicron}{\mathalpha}{operators}{"4F}
\DeclareMathSymbol{\Rho}{\mathalpha}{operators}{"50}
\DeclareMathSymbol{\Tau}{\mathalpha}{operators}{"54}
\DeclareMathSymbol{\Chi}{\mathalpha}{operators}{"58}
\DeclareMathSymbol{\omicron}{\mathord}{letters}{"6F}
\newcommand{\Sum}{\sum\limits}
\newcommand{\Prod}{\prod\limits}
\DeclareMathOperator{\tr}{tr}
\newcommand{\defeq}{\triangleq}
\newcommand{\Herm}{\dag}
\newcommand{\bra}[1]{\left\lvert#1\right\rangle}
\newcommand{\ket}[1]{\left\langle#1\right\rvert}
\newcommand{\braket}[1]{\bra{#1}\!\ket{#1}}
\newcommand{\conj}[1]{\overline{#1}}
\newcommand{\abs}[1]{\left\lvert#1\right\rvert}
    \newcommand{\bigabs}[1]{\bigl\lvert#1\bigr\rvert}
\newcommand{\norm}[1]{\left\lVert#1\right\rVert}
\newcommand{\size}[1]{\left\lvert#1\right\rvert}
\newcommand{\D}{\mathrm{d}} 
\newcommand{\lvec}[1]{\accentset{\leftharpoonup}{#1}} 
\newcommand{\rvec}[1]{\accentset{\rightharpoonup}{#1}} 
\newcommand{\grad}{\nabla}
\newcommand{\id}{\mathcal{I}}
\newcommand{\tensor}{\otimes}
\newcommand{\set}[1]{\mathcal{#1}}
\newcommand{\operator}[1]{\mathcal{#1}}
\newcommand{\system}[1]{\mathsf{#1}} \newcommand{\sys}{\system}
\newcommand{\rv}[1]{\mathsf{#1}}
\newcommand{\entropy}{\boldsymbol{\mathrm{H}}}
\newcommand{\qEntropy}{\entropy}
\newcommand{\mutualInfo}{\boldsymbol{\mathrm{I}}}
\newcommand{\qmutualInfo}{\mutualInfo}
\newcommand{\infoRate}{\mathrm{I}}
\newcommand{\entropicRate}{\mathrm{H}}
\newcommand{\capacity}{\mathrm{C}}
\newcommand{\DensOp}{\mathfrak{D}}
\DeclarePairedDelimiterX{\infdivx}[2]{(}{)}{#1\;\delimsize\|\;#2}
\newcommand{\infdiv}[2]{\mathrm{D_{KL}}\infdivx*{#1}{#2}}
\newcommand{\IRUB}{\bar{\infoRate}}
\newcommand{\IRLB}{\munderbar{\infoRate}}
\newcommand{\hilbert}{\mathcal{H}}
\newcommand{\What}{\hat{W}}
\newcommand{\QW}{(Q\mkern-1.5mu W)}
\newcommand{\QWaux}{(Q\mkern-1.5mu\hat{W})}
\newcommand{\vect}[1]{\mathbf{#1}}
\newcommand{\vs}{\vect{s}}
\newcommand{\cs}{\check{s}}
\newcommand{\vx}{\vect{x}}
\newcommand{\cx}{\check{x}}
\newcommand{\cvx}{\check{\vect{x}}}
\newcommand{\vy}{\vect{y}}
\newcommand{\cy}{\check{y}}
\newcommand{\cvy}{\check{\vect{y}}}
\newcommand{\muY}{\mu^{\rv{Y}}}
\newcommand{\muXY}{\mu^{\rv{XY}}}
\newcommand{\bmuY}{\bar \mu^{\rv{Y}}}
\newcommand{\bmuXY}{\bar \mu^{\rv{XY}}}
\newcommand{\sigmaY}{\sigma^{\rv{Y}}}
\newcommand{\sigmaXY}{\sigma^{\rv{XY}}}
\newcommand{\bsigmaY}{\bar \sigma^{\rv{Y}}}
\newcommand{\bsigmaXY}{\bar \sigma^{\rv{XY}}}
\newcommand{\lambdaY}{\lambda^{\rv{Y}}}
\newcommand{\lambdaXY}{\lambda^{\rv{XY}}}
\newcommand{\pgood}{p_{\mathrm{g}}}
\newcommand{\pbad}{p_{\mathrm{b}}}
\newcounter{mytempeqcounter}
\newcommand{\bigformulatop}[2]{%
    \begin{figure*}[!t]
        \normalsize
        \setcounter{mytempeqcounter}{\value{equation}}
        \setcounter{equation}{#1}
        #2
        \setcounter{equation}{\value{mytempeqcounter}}
        \hrulefill
        \vspace*{4pt}
    \end{figure*}
}
\pgfplotsset{compat = 1.7}
\tikzset{-*/.style={shorten >=#1, decoration={
    markings, mark={at position 1 with {\draw[fill] circle [radius=#1];}}},
    postaction=decorate},
    -*/.default=1.3pt}
\tikzset{*-/.style={shorten <=#1, decoration={
    markings, mark={at position 0 with {\draw[fill] circle [radius=#1];}}},
    postaction=decorate},
    *-/.default=1.3pt}
\tikzset{*-*/.style={shorten <=#1,shorten >=#1,decoration={
    markings, mark={at position 0 with {\draw[fill] circle [radius=#1];}},
    mark={at position 1 with {\draw[fill] circle [radius=#1];}},},
    postaction=decorate},
    *-*/.default=1.3pt}
\newcommand{\scalefactorA}{0.85}
\newcommand{\scalefactorB}{0.75}
\begin{document}
\title{Bounding and Estimating the Classical Information Rate of
       Quantum Channels with Memory}
\author{Michael~X.~Cao\thanks{M.~X.~Cao is with the Department of Information
        Engineering, The Chinese University of Hong Kong, Shatin, N.T.,
        Hong Kong. E-mail: m.x.cao@ieee.org.},~\IEEEmembership{Student
        Member,~IEEE,} and
        Pascal~O.~Vontobel\thanks{P.~O.~Vontobel is with the Department of
        Information Engineering and the Institute of Theoretical Computer
        Science and Communications, The Chinese University of Hong Kong.
        Email: pascal.vontobel@ieee.org.},~\IEEEmembership{Senior Member,~IEEE}
        \thanks{The work described in this paper was partially supported by
                grants from the Research Grants Council of the Hong Kong
                Special Administrative Region, China
                (Project Nos. CUHK 14209317 and CUHK 14207518).}
        \thanks{This paper was presented in part at the IEEE International
                Symposium on Information Theory (ISIT), Aachen, Germany,
                July 2017~\cite{cao2017estimating}, and the IEEE International
                Symposium on Information Theory (ISIT), Paris, France,
                July 2019~\cite{cao2019optimizing}.}
        \thanks{Submitted. Date of current version: \today.}
}
\maketitle
\begin{abstract}
We consider the scenario of classical communication over a finite-dimensional
quantum channel with memory using a separable-state input ensemble
and local output measurements.
We propose algorithms for estimating the information rate of such communication
setups, along with algorithms for bounding the information rate based on
so-called auxiliary channels.
Some of the algorithms are extensions of their counterparts for (classical)
finite-state-machine channels. Notably, we discuss suitable graphical models
for doing the relevant computations. Moreover, the auxiliary channels are
learned in a data-driven approach; \ie, only input/output sequences of the
true channel are needed, but not the channel model of the true channel.
\end{abstract}
\begin{IEEEkeywords}
Quantum Channel, Memory, Information Rate, Bounds
\end{IEEEkeywords}
\section{Introduction} \label{sec:1:Introduction}
\IEEEPARstart{W}{e} consider the transmission rate of classical information
over a finite-dimensional quantum channel with memory~\cite{bowen2004quantum,
kretschmann2005quantum, caruso2014quantum}. 
Recall that in the memoryless case, given an input system $\system{A}$ and
an output system $\system{B}$, described by some Hilbert spaces
$\hilbert_\system{A}$ and $\hilbert_\sys{B}$,
respectively, a memoryless quantum channel can be modeled as a 
\emph{completely positive trace-preserving} (CPTP) map from the set of density
operators on $\hilbert_\system{A}$ to the set of density operators on 
$\hilbert_\system{B}$~\cite{nielsen2011quantum, wilde2017quantum}; such a
quantum channel is said to be finite-dimensional if both  
$\hilbert_\system{A}$ and $\hilbert_\system{B}$ are of finite dimension.
A \emph{quantum channel with memory} is a quantum channel equipped
with a memory system $\system{S}$; namely it is a CPTP map from
the set of density operators on 
    $\hilbert_\system{A}\tensor\hilbert_\system{S}$ 
to the set of  density operators on
    $\hilbert_\system{B}\tensor\hilbert_{\system{S}'}$,
where $\hilbert_\system{S}$ $(\equiv\hilbert_{\system{S}'})$ is the Hilbert
space describing $\system{S}$, and $\tensor$ stands for the tensor product.
The system $\system{S}$ can be understood either as a state of the channel
(as illustrated in Fig.~\ref{fig:interpret:1}), or as a part of the environment
that does not decay between consecutive channel uses (as illustrated in
Fig.~\ref{fig:interpret:2}). Interesting examples of quantum channels with 
memory include spin chains~\cite{bose2003quantum} and fiber optic 
links~\cite{ball2004exploiting}.
\par 
Classical communication over such channels is accomplished by encoding 
classical data into some density operators before the transmission and
applying measurements at the outputs of the 
channel~\cite{nielsen2011quantum, wilde2017quantum}.
In the most generic case, a joint input ensemble and a joint output measurement
across multiple channels can be used for encoding and decoding, respectively.
The scenario involving a $k$-channel joint ensemble and a $k$-channel joint
measurement is depicted in Fig.~\ref{fig:generic_memory}, where
\begin{itemize}
\item the encoding process $\mathcal{E}$ is described by some ensemble 
    $\{P_\rv{X}(x),\rho_{\system{A}_1^k}^{(x)}\}_{x\in\set{X}}$
    on the joint input system $(\system{A}_1,\ldots,\system{A}_k)$, with
    $\set{X}$ being the input alphabet, $P_\rv{X}(x)$ being the input 
    distribution, and $\rho_{\system{A}_1^k}^{(x)}$ being the density
    operator on the input systems $\system{A}_1^k$ corresponding to
    the classical input $x$;
\item the decoding process $\mathcal{D}$ is described by some positive-operator 
    valued measure (POVM) 
    $\{\Lambda_{\system{B}_1^k}^{(y)}\}_{y\in\set{Y}}$ 
    on the joint output system $(\system{B}_1,\ldots,\system{B}_k)$, with 
    $\set{Y}$ being the output alphabet;
\item the classical input and output are represented by some
    random variables $\rv{X}$ and $\rv{Y}$, respectively.
\end{itemize}
For comparison, Fig.~\ref{fig:generic_memoryless} shows the corresponding
memoryless setup.
The above arrangement results in a (classical) channel
from $\rv{X}$ to $\rv{Y}$, whose rate of transmission is given by
\begin{align}\label{eq:capacity:jointk}
\infoRate(\mathcal{E},\operator{N}^{\boxtimes k},\mathcal{D})=
\limsup_{n\rightarrow\infty}\frac{1}{n}
    \mutualInfo(\rv{X}_1^{n};\rv{Y}_1^{n}),
\end{align}
where we use the above transmission scheme $n$ times consecutively (as depicted in Fig.~\ref{fig:generic_memory_multiple}), and where 
\[
\begin{aligned}
\operator{N}^{\boxtimes k} \defeq
    &\left(\operator{N}_{\system{A}_k\system{S}_{k-1}\rightarrow
                        \system{B}_k\system{S}_k}
          \otimes \id_{\system{B}_1^{k-1}\rightarrow\system{B}_1^{k-1}}\right)
    \circ\\
    &\left(\id_{\system{A}_k\rightarrow\system{A}_k} \otimes
          \operator{N}_{\system{A}_{k-1}\system{S}_{k-2}\rightarrow
                        \system{B}_{k-1}\system{S}_{k-1}}
          \otimes \id_{\system{B}_1^{k-2}\rightarrow\system{B}_1^{k-2}}\right)
    \circ\\ 
    &\cdots \circ
    \Big(\id_{\system{A}_2^k\rightarrow\system{A}_2^k}
          \otimes \operator{N}_{\system{A}_1\system{S}_0\rightarrow
                        \system{B}_1\system{S}_1}\Big).
\end{aligned}
\]
Here, $\mutualInfo$ stands for the mutual information. As a fundamental result,
this quantity can be simplified to $\mutualInfo(\rv{X};\rv{Y})$ for 
the memoryless case~\cite{shannon2001mathematical,cover2012elements}.
Optimizing $\infoRate(\mathcal{E},\operator{N}^{\boxtimes k},\mathcal{D})$ over
$\mathcal{E}$ and $\mathcal{D}$ (with $k\rightarrow\infty$) yields the
classical capacity of the quantum channel $\operator{N}$, namely
\begin{equation}
\capacity(\operator{N}) = \limsup_{k}\frac{1}{k}\sup_{\mathcal{E},\mathcal{D}}
    \infoRate(\mathcal{E},\operator{N}^{\boxtimes k},\mathcal{D}).
\end{equation}
\par 
\begin{figure}[t]
\centering
\begin{subfigure}[b]{0.48\columnwidth}
\centering 
    \begin{tikzpicture}[
    smallcircle/.style={draw, minimum size = 10pt, circle},
    scale=0.9,every node/.style={transform shape}]
\node[draw, minimum width = 20pt, minimum height = 15pt] (C) {$\operator{N}$};
\node[smallcircle, below = 12pt of C] (M) {};
\node[left = 2.5pt of M.center, anchor=center, smallcircle] {};
\node[right = 2.5pt of M.center, anchor=center, smallcircle] {};
\draw (C.west|-M.south) -- (C.east|-M.south);
\draw (C.west|-M.south) arc (270:90:13pt);
\draw (C.east|-M.south) arc (-90:90:13pt);
\draw ([yshift=3pt]C.west) -- ([yshift=3pt,xshift=-15pt]C.west)
    node[pos=1,left](A) {$\system{A}$};
\draw ([yshift=3pt]C.east) -- ([yshift=3pt,xshift=15pt]C.east)
    node[pos=1,right](B) {$\system{B}$};
\node[yshift=13pt] at (A|-M.south) {$\system{S}$};
\node[yshift=13pt] at (B|-M.south) {$\system{S}^\prime$};
\node[below=0pt of M,font=\scriptsize] {quantum memory};
\end{tikzpicture}
\caption{Memory as the state of the channel}
\label{fig:interpret:1}
\end{subfigure}
~
\begin{subfigure}[b]{0.48\columnwidth}
\centering 
    \begin{tikzpicture}[
    scale=0.9,every node/.style={transform shape}]
\node[draw, minimum width = 20pt, minimum height = 50pt] (U) {$\operator{U}$};
\draw ([yshift=18pt]U.west) -- ([yshift=18pt,xshift=-15pt]U.west)
    node[left, pos=1] (A) {$\system{A}$};
\draw (U.west) -- ([xshift=-15pt]U.west)
    node[left, pos=1] (E) {$\bra{0}$};
\draw[latex-] ([yshift=-18pt]U.west) -- ([yshift=-18pt,xshift=-15pt]U.west)
    --([yshift=-32pt,xshift=-15pt]U.west)
    --([yshift=-32pt,xshift=28pt]U.east)
    --([yshift=-18pt,xshift=28pt]U.east)
    --([yshift=-18pt]U.east);
\draw (U.east) -- ([xshift=9pt]U.east) 
    node [pos=1, right, draw, minimum size=12pt] (M) {};

\node[minimum size = 1.5pt, fill = black, inner sep = 0pt, outer sep = 0pt,
    below = 3pt of M.center, circle, anchor = center] (m) {};
\draw [decoration={markings,mark=at position 1 with
    {\arrow[scale=0.5,>=latex]{>}}},postaction={decorate}, draw = none]
    (m.center) -- ([yshift=7pt, xshift=3pt]m.center);
\draw (m.center) -- ([yshift=6.3pt, xshift=2.7pt]m.center);
\draw ([xshift=4pt]m.center) arc (0:180:4pt);

\draw ([yshift=18pt]U.east) -- ([yshift=18pt,xshift=28pt]U.east)
    node[right, pos=1] (B) {$\system{B}$};
\draw[line width = 1.3pt] (M.east) -- ([xshift=13.5pt]M.east) 
    node[pos=1, minimum size = 5pt, inner sep = 0pt, outer sep = 0pt] (Eend) {};
\draw[line width = 1.3pt] (Eend.south west) -- (Eend.north east);
\draw[line width = 1.3pt] (Eend.north west) -- (Eend.south east);

\node[yshift=-18pt] at (A|-U.west) (S) {$\system{S}$};
\node[yshift=-18pt] at (B|-U.east) {$\system{S}^\prime$};

\path[draw=none] (E) edge[draw=none] node[xshift=-12pt](env) {$\Bigg\{$} (S);
\node[left=0pt of env, anchor = center, rotate=90, font = \scriptsize]
    {environment};
\node[right=82pt of env] {$\Bigg\}$};
\end{tikzpicture}
\caption{Memory as undecayed partial environment}
\label{fig:interpret:2}
\end{subfigure}
\caption{Interpretations of quantum channels with memory.}
\end{figure}
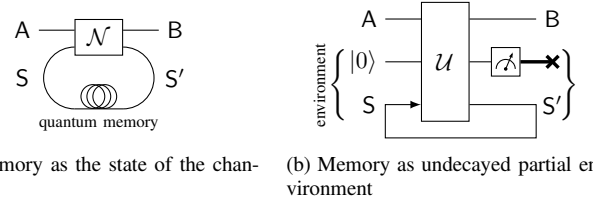
\begin{figure}[t]
\centering
\begin{subfigure}[b]{0.48\columnwidth}
\begin{center} 
    \begin{tikzpicture}[
	bignode/.style={minimum height = 70pt, minimum width = 15pt, draw, font=\small},
	smallnode/.style={minimum size = 5pt, draw, font=\small},
	scale=0.9,every node/.style={transform shape}]
\node[bignode] (cq) {$\mathcal{E}$};
\node[bignode, right = 50pt of cq] (qc) {$\mathcal{D}$};
\path ([yshift=30pt]cq.east) edge[draw=none] node[smallnode, midway] (T1)
	{$\operator{N}$} ([yshift=30pt]qc.west);
\path ([yshift=30pt]cq.east) edge node[above=-2pt, midway, font=\footnotesize]
    {$\system{A}_1$} (T1.west);
\path (T1.east) edge node[above=-2pt, midway, font=\footnotesize]
    {$\system{B}_1$} ([yshift=30pt]qc.west);

\path ([yshift=7pt]cq.east) edge[draw=none] node[smallnode, midway] (T2)
	{$\operator{N}$} ([yshift=7pt]qc.west);
\path ([yshift=7pt]cq.east) edge node[above=-2pt, midway, font=\footnotesize]
    {$\system{A}_2$} (T2.west);
\path (T2.east) edge node[above=-2pt, midway, font=\footnotesize]
    {$\system{B}_2$} ([yshift=7pt]qc.west);

\path ([yshift=-7pt]cq.east) edge[draw=none] node {$\vdots$} ([yshift=-7pt]qc.west);

\path ([yshift=-30pt]cq.east) edge[draw=none] node[smallnode, midway] (Tn)
	{$\operator{N}$} ([yshift=-30pt]qc.west);
\path ([yshift=-30pt]cq.east) edge node[above=-2pt, midway, font=\footnotesize]
    {$\system{A}_k$} (Tn.west);
\path (Tn.east) edge node[above=-2pt, midway, font=\footnotesize]
    {$\system{B}_k$} ([yshift=-30pt]qc.west);

\path (cq.west) edge[line width=1.5pt] node[pos=0.75,above,font=\small]
	{$\rv{X}$} ([xshift=-20pt]cq.west);
\path (qc.east) edge[line width=1.5pt] node[pos=0.75,above,font=\small]
	{$\rv{Y}$} ([xshift=20pt]qc.east);
\draw[draw=none] ([yshift=10pt]T1.north) -- (T1.north) 
    node[pos=0,above=0pt,font=\footnotesize,color=white] {$\system{S}_0$};
\draw[draw=none] ([yshift=-10pt]Tn.south) -- (Tn.south) 
    node[pos=0,below=0pt,font=\footnotesize,color=white] {$\system{S}_n$};
\end{tikzpicture} \end{center} \vspace{-10pt}
\caption{Memoryless channel ($\operator{N}^{\tensor k}$)}
\label{fig:generic_memoryless}
\end{subfigure}
~
\begin{subfigure}[b]{0.48\columnwidth}
\begin{center} 
    \begin{tikzpicture}[
	bignode/.style={minimum height = 70pt, minimum width = 15pt, draw},
	smallnode/.style={minimum size = 5pt, draw,font=\small},
	scale=0.9,every node/.style={transform shape}]
\node[bignode] (cq) {$\mathcal{E}$};
\node[bignode, right = 50pt of cq] (qc) {$\mathcal{D}$};
\path ([yshift=30pt]cq.east) edge[draw=none] node[smallnode, midway] (T1)
	{$\operator{N}$} ([yshift=30pt]qc.west);
\path ([yshift=30pt]cq.east) edge node[above=-2pt, midway, font=\footnotesize]
    {$\system{A}_1$} (T1.west);
\path (T1.east) edge node[above=-2pt, midway, font=\footnotesize]
    {$\system{B}_1$} ([yshift=30pt]qc.west);

\path ([yshift=7pt]cq.east) edge[draw=none] node[smallnode, midway] (T2)
	{$\operator{N}$} ([yshift=7pt]qc.west);
\path ([yshift=7pt]cq.east) edge node[above=-2pt, midway, font=\footnotesize]
    {$\system{A}_2$} (T2.west);
\path (T2.east) edge node[above=-2pt, midway, font=\footnotesize]
    {$\system{B}_2$} ([yshift=7pt]qc.west);

\path ([yshift=-9pt]cq.east) edge[draw=none] node {$\vdots$} ([yshift=-9pt]qc.west);

\path ([yshift=-30pt]cq.east) edge[draw=none] node[smallnode, midway] (Tn)
	{$\operator{N}$} ([yshift=-30pt]qc.west);
\path ([yshift=-30pt]cq.east) edge node[above=-2pt, midway, font=\footnotesize]
    {$\system{A}_k$} (Tn.west);
\path (Tn.east) edge node[above=-2pt, midway, font=\footnotesize]
    {$\system{B}_k$} ([yshift=-30pt]qc.west);

\draw ([yshift=10pt]T1.north) -- (T1.north) 
    node[pos=0,above=0pt,font=\footnotesize] {$\system{S}_0$};
\draw (T1) -- (T2) node[midway,right=-2pt,font=\footnotesize] {$\system{S}_1$};
\draw (T2.south) -- ([yshift=-5pt]T2.south) 
	node[pos=1,right=-2pt,font=\footnotesize] {$\system{S}_2$};
\draw ([yshift=5pt]Tn.north) -- (Tn.north)
	node[pos=0,right=-2pt,font=\footnotesize] {$\system{S}_{k-1}$};
\draw ([yshift=-10pt]Tn.south) -- (Tn.south) 
    node[pos=0,below=0pt,font=\footnotesize] {$\system{S}_k$};
\path (cq.west) edge[line width=1.5pt] node[pos=0.75,above,font=\small]
	{$\rv{X}$} ([xshift=-20pt]cq.west);
\path (qc.east) edge[line width=1.5pt] node[pos=0.75,above,font=\small]
	{$\rv{Y}$} ([xshift=20pt]qc.east);


\end{tikzpicture} \end{center} \vspace{-10pt}
\caption{Channel with memory ($\operator{N}^{\boxtimes k}$)}
\label{fig:generic_memory}
\end{subfigure}
~
\begin{subfigure}[b]{0.96\columnwidth}
\begin{center} 
    \begin{tikzpicture}[smallnode/.style={minimum size = 5pt, draw, font=\small},
                    scale=0.9,every node/.style={transform shape}]
\node[smallnode] (cq1) {$\mathcal{E}$};
\node[smallnode, below =30pt of cq1.center, anchor = center] (cq2)
     {$\mathcal{E}$};
\node[font=\small,below =20pt of cq2.center, anchor = center] (cqdots)
     {$\vdots$};
\node[smallnode,below =30pt of cqdots.center, anchor = center] (cqn)
     {$\mathcal{E}$};

\node[smallnode, right = 110pt of cq1] (qc1) {$\mathcal{D}$};
\node[smallnode, right = 110pt of cq2] (qc2) {$\mathcal{D}$};
\node[font=\small, below = 20pt of qc2.center, anchor = center] (qcdots) {$\vdots$};
\node[smallnode, right = 110pt of cqn] (qcn) {$\mathcal{D}$};

\path (cq1) edge[draw=none] node[smallnode, midway] (T1)
    {$\operator{N}^{\boxtimes k}$} (qc1);
\path (cq1) edge node[above=-2pt, midway, font=\footnotesize]
    {$\system{A}_1^k$} (T1);
\path (T1) edge node[above=-2pt, midway,font=\footnotesize]
    {$\system{B}_1^k$} (qc1);

\path (cq2) edge[draw=none] node[smallnode, midway] (T2)
    {$\operator{N}^{\boxtimes k}$} (qc2);
\path (cq2) edge node[above=-2pt, midway,font=\footnotesize]
    {$\system{A}_{k+1}^{2k}$} (T2);
\path (T2) edge node[above=-2pt, midway,font=\footnotesize]
    {$\system{B}_{k+1}^{2k}$} (qc2);

\path (cqdots) edge[draw=none] node[font=\small,midway] {$\vdots$} (qcdots);
\path (cqn) edge[draw=none] node[smallnode, midway] (Tn)
    {$\operator{N}^{\boxtimes k}$} (qcn);
\path (cqn) edge node[above=-2pt, midway,font=\footnotesize]
    {$\system{A}_{(n\!-\!1)\cdot k+1}^{n\cdot k}$} (Tn);
\path (Tn) edge node[above=-2pt, midway,font=\footnotesize]
    {$\system{B}_{(n\!-\!1)\cdot k+1}^{n\cdot k}$} (qcn);

\draw[line width = 1.5pt] (cq1.west) -- ([xshift=-20pt]cq1.west) 
    node[above=-2pt, font=\footnotesize, pos = 0.8] {$\rv{X}_1$};
\draw[line width = 1.5pt] (cq2.west) -- ([xshift=-20pt]cq2.west)
    node[above=-2pt, font=\footnotesize, pos = 0.8] {$\rv{X}_2$};
\draw[line width = 1.5pt] (cqn.west) -- ([xshift=-20pt]cqn.west)
    node[above=-2pt, font=\footnotesize, pos = 0.8] {$\rv{X}_n$};

\draw[line width = 1.5pt] (qc1.east) -- ([xshift=20pt]qc1.east)
    node[above=-2pt, font=\footnotesize, pos = 0.7] {$\rv{Y}_1$};
\draw[line width = 1.5pt] (qc2.east) -- ([xshift=20pt]qc2.east)
    node[above=-2pt, font=\footnotesize, pos = 0.7] {$\rv{Y}_2$};
\draw[line width = 1.5pt] (qcn.east) -- ([xshift=20pt]qcn.east)
    node[above=-2pt, font=\footnotesize, pos = 0.7] {$\rv{Y}_n$};

\draw ([yshift=10pt]T1.north) -- (T1.north) 
    node[pos=0,above=0pt,font=\footnotesize] {$\system{S}_0$};
\draw (T1) -- (T2) node[midway,right=-2pt,font=\footnotesize] {$\system{S}_k$};
\draw (T2.south) -- ([yshift=-7pt]T2.south)
	node[pos=1,right=-2pt,font=\footnotesize] {$\system{S}_{2k}$};
\draw ([yshift=9pt]Tn.north) -- (Tn.north)
	node[pos=0,right=-2pt,font=\footnotesize] {$\system{S}_{(n\!-\!1)\cdot k}$};
\draw ([yshift=-10pt]Tn.south) -- (Tn.south) 
    node[pos=0,below=0pt,font=\footnotesize] {$\system{S}_{n\cdot k}$};
\end{tikzpicture} \end{center} \vspace{-10pt}
\caption{Generic classical communication corresponding to
         Eq.~\eqref{eq:capacity:jointk}}
\label{fig:generic_memory_multiple}
\end{subfigure}
\caption{Classical communications over quantum channels.}
\label{fig:ccq}
\end{figure}
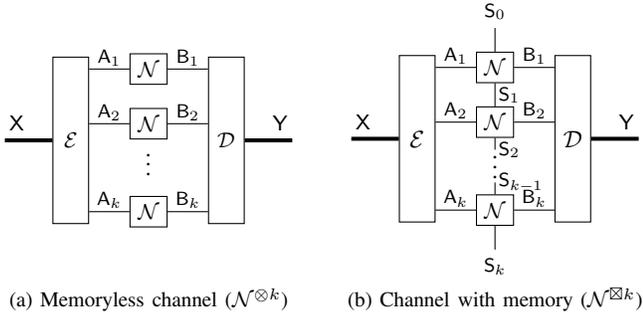
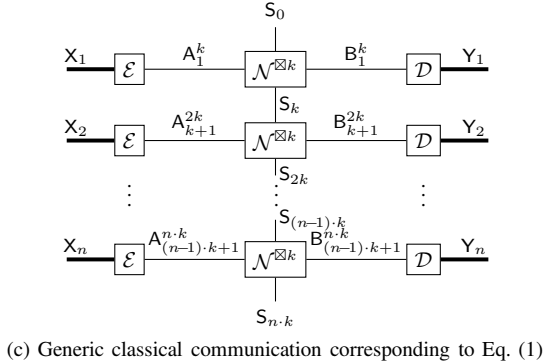
In this paper, we are interested in computing and bounding the information 
rate as in~\eqref{eq:capacity:jointk} for finite-dimensional quantum 
channels with memory using only separable input ensembles and local output 
measurements, \ie, the case $k=1$, which is depicted in Fig.~\ref{fig:ccq2}.
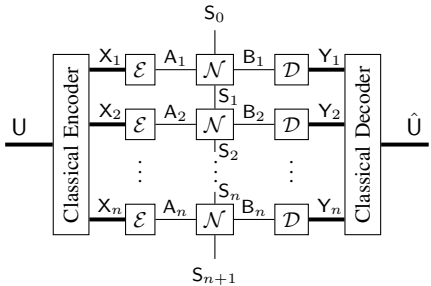
\begin{figure}[t]
\begin{center} 
    \begin{tikzpicture}[
	smallnode/.style={minimum size = 5pt, draw, font=\small},
	scale=0.9,every node/.style={transform shape}]
\node[smallnode] (cq1) {$\mathcal{E}$};
\node[smallnode,below =23pt of cq1.center, anchor = center] (cq2) {$\mathcal{E}$};
\node[font=\small,below =17pt of cq2.center, anchor = center] (cqdots) {$\vdots$};
\node[smallnode,below =23pt of cqdots.center, anchor = center] (cqn) {$\mathcal{E}$};

\node[smallnode, right = 50pt of cq1] (qc1) {$\mathcal{D}$};
\node[smallnode, right = 50pt of cq2] (qc2) {$\mathcal{D}$};
\node[font=\small, below = 17pt of qc2.center, anchor = center] (qcdots) {$\vdots$};
\node[smallnode, right = 50pt of cqn] (qcn) {$\mathcal{D}$};

\path (cq1) edge[draw=none] node[smallnode, midway] (T1)
	{$\operator{N}$} (qc1);
\path (cq1) edge node[above=-2pt, midway, font=\footnotesize]
    {$\system{A}_1$} (T1);
\path (T1) edge node[above=-2pt, midway, font=\footnotesize]
    {$\system{B}_1$} (qc1);

\path (cq2) edge[draw=none] node[smallnode, midway] (T2)
	{$\operator{N}$} (qc2);
\path (cq2) edge node[above=-2pt, midway, font=\footnotesize]
    {$\system{A}_2$} (T2);
\path (T2) edge node[above=-2pt, midway, font=\footnotesize]
    {$\system{B}_2$} (qc2);

\path (cqdots) edge[draw=none] node[font=\small,midway] {$\vdots$} (qcdots);

\path (cqn) edge[draw=none] node[smallnode, midway] (Tn)
	{$\operator{N}$} (qcn);
\path (cqn) edge node[above=-2pt, midway, font=\footnotesize]
    {$\system{A}_n$} (Tn);
\path (Tn) edge node[above=-2pt, midway, font=\footnotesize]
    {$\system{B}_n$} (qcn);

\node[left = 15pt of cq1.north west, draw, rectangle, minimum width = 15pt,
minimum height = 75.8159pt, anchor = north east] (M) {};
\node[rotate=90, font=\small] at (M) {Classical Encoder};
\draw[line width = 1.5pt] (cq1.west-|M.east) |- (cq1.west) 
    node[above=-2pt, font=\footnotesize, pos = 0.8] {$\rv{X}_1$};
\draw[line width = 1.5pt] (cq2.west-|M.east) |- (cq2.west)
    node[above=-2pt, font=\footnotesize, pos = 0.8] {$\rv{X}_2$};
\draw[line width = 1.5pt] (cqn.west-|M.east) |- (cqn.west)
    node[above=-2pt, font=\footnotesize, pos = 0.8] {$\rv{X}_n$};
\path (M.west) edge[line width = 1.5pt] node[above, pos=0.7] {$\rv{U}$}
	([xshift=-20pt]M.west);

\node[right = 15pt of qc1.north east, draw, rectangle, minimum width = 15pt,
minimum height = 75.8159pt, anchor = north west] (Mp) {};
\node[rotate=90, font=\small] at (Mp) {Classical Decoder};
\draw[line width = 1.5pt] (qc1.east-|Mp.west) |- (qc1.east)
    node[above=-2pt, font=\footnotesize, pos = 0.7] {$\rv{Y}_1$};
\draw[line width = 1.5pt] (qc2.east-|Mp.west) |- (qc2.east)
    node[above=-2pt, font=\footnotesize, pos = 0.7] {$\rv{Y}_2$};;
\draw[line width = 1.5pt] (qcn.east-|Mp.west) |- (qcn.east)
    node[above=-2pt, font=\footnotesize, pos = 0.7] {$\rv{Y}_n$};;
\path (Mp.east) edge[line width = 1.5pt] node[above, pos=0.7] {$\hat{\rv{U}}$}
	([xshift=20pt]Mp.east);

\draw ([yshift=10pt]T1.north) -- (T1.north) 
    node[pos=0,above=0pt,font=\footnotesize] {$\system{S}_0$};
\draw (T1) -- (T2) node[midway,right=-2pt,font=\footnotesize] {$\system{S}_1$};
\draw (T2.south) -- ([yshift=-5pt]T2.south)
	node[pos=1,right=-2pt,font=\footnotesize] {$\system{S}_2$};
\draw ([yshift=5pt]Tn.north) -- (Tn.north)
	node[pos=0,right=-2pt,font=\footnotesize] {$\system{S}_n$};
\draw ([yshift=-10pt]Tn.south) -- (Tn.south) 
    node[pos=0,below=0pt,font=\footnotesize] {$\system{S}_{n+1}$};
\end{tikzpicture} \end{center}
\caption{Classical communication over a quantum channel with memory
         using a separable ensemble and local measurements.}
\label{fig:ccq2}
\end{figure}
This restriction is equivalent to the scenario where no quantum computing
device is present at the sending or receiving end; or the scenario where
our manipulation of the channel is limited to a single-channel use.
The difficulty of the problem lies with the presence of the quantum memory.
In the simplest situation, the memory system exhibits classical properties
under certain ensembles and measurements. In this case, the resulting
classical communication setup is equivalent to a finite-state-machine
channel (FSMC)~\cite{gallager1968information}. Though the evaluation of the
information rate of an FSMC is nontrivial in general, efficient stochastic
methods for estimating and bounding this quantity have been
developed~\cite{arnold2006simulation, sadeghi2009optimization}.
\par 
Our work is highly inspired by~\cite{arnold2006simulation}, where the authors 
considered the information rate of FSMCs. In particular,
for an indecomposable FSMC~\cite{gallager1968information} with channel law $W$,
its information rate, which is independent from the initial channel state,
is given by
\begin{equation}\label{eq:def:fsmc:ir:1}
\infoRate_W(Q) = \lim_{n\rightarrow\infty}
    \frac{1}{n}\mutualInfo(\rv{X}_1^n;\rv{Y}_1^n),
\end{equation}
where $\rv{X}_1^n = \left(\rv{X}_1,\ldots,\rv{X}_n\right)$ is the channel input 
process characterized by some sequence of distributions $\{Q^{(n)}\}_{n}$,
and where $\rv{Y}_1^n = \left(\rv{Y}_1,\ldots,\rv{Y}_n\right)$ is the channel
output process. Although, except for very special cases, there are no 
single-letter or other simple expressions for information rates available, 
efficient stochastic techniques have been developed for estimating the 
information rate for \emph{stationary} and \emph{ergodic} input processes 
$\{Q^{(n)}\}_{n}$~\cite{arnold2006simulation, sharma2001entropy, 
pfister2001achievable}.
(For these techniques, under mild conditions, the numerical estimate of the
information rate converges with probability one to the true value when the 
length of the channel input sequence goes to infinity.) In this paper, we
extend such techniques to quantum channels with memory; in particular, 
we use similar (but extended) graphical models, namely \emph{factor graphs}
for quantum probabilities~\cite{loeliger2017factor} for estimating quantities
of interest. These graphical models are useful for visualizing the relevant
computations and for providing a clear comparison between the setup considered
in this paper and its classical counterparts in~\cite{arnold2006simulation}
and~\cite{sadeghi2009optimization}.\footnote{Clearly, the graphical models
that we use are very similar to tensor networks (see, for example, the
discussion in Appendix~A of~\cite{loeliger2017factor}).
A benefit of the graphical models that we use (including the corresponding
terminology), is that they are compatible with the graphical models that are
being used in classical information processing.}
\par 
Our work is also partially inspired by~\cite{sadeghi2009optimization}, 
where the authors proposed upper and lower bounds based on some so-called
auxiliary FSMCs, which are often lower-complexity approximations of the original
FSMC. They also provided efficient methods for optimizing these bounds. 
Such techniques have been proven useful for FSMCs with large state spaces,
when the above-mentioned information rate estimation techniques can be overly 
time-consuming. Interestingly enough, the lower bounds represent achievable 
rates under mismatched decoding, where the decoder bases its computations not
on the true FSMC but on the auxiliary FSMC~\cite{ganti2000mismatched}. (See the
paper~\cite{sadeghi2009optimization} for a more detailed discussion of
this topic and for further references.) In this paper, we also consider 
auxiliary channels and their induced bounds. However, the auxiliary channels
of our interest are chosen from a larger set of channels called
\emph{quantum-state channels}, which will be defined in Section~\ref{sec:3:QCM}.
We also propose a method for optimizing these bounds. In particular, our method
for optimizing the lower bound is ``data-driven'' in the sense that only 
the input/output sequences of the original channel are needed, but not the
mathematical model of the original channel.
\par 
One must note that even if we can efficiently compute or bound the information
rate, it is still a long way to go to compute the classical capacity of a quantum
channel with memory. On the 
one hand, maximizing $\infoRate(\mathcal{E},\operator{N},\mathcal{D})$ is a
difficult problem. 
(The analogous classical problems have been addressed in 
\cite{arimoto1972algorithm}, \cite{blahut1972computation}, and 
\cite{vontobel2008generalization}.)
On the other hand, due to the superadditivity 
property~\cite{hastings2009superadditivity} of quantum channels, 
which happens to be more common for quantum channels with memory 
\cite{macchiavello2004transition, karimipour2006entanglement, 
lupo2010transitional} (compared with memoryless quantum channels),
it is inevitable to consider joint ensembles on input systems
and joint measurements on output systems across multiple channel uses.
\par 
The rest of this paper is organized as follows. Section~\ref{sec:2:FSMC}
reviews the method of estimating the information rate of an FSMC.
Section~\ref{sec:3:QCM} models the classical communication scheme
over a quantum channel with memory, and defines the notion of 
quantum-state channels as an equivalent description. A graphical notation for 
representing such channels is also presented in this section.
Section~\ref{sec:4:IR} estimates the information rate of such channels. 
Section~\ref{sec:5:UBLB} considers the upper and lower bounds induced by 
auxiliary quantum-state channels, and presents methods for optimizing them.
Section~\ref{sec:6:example} contains numerical examples. 
Section~\ref{sec:7:conclusion} concludes the paper.
\subsection{Further references}
In the following, we assume that the reader is familiar with the basic
elements of quantum information theory (see \cite{nielsen2011quantum} or 
\cite{wilde2017quantum} for an introduction).  For a general
introduction to quantum channels with memory, we refer to the papers by
Kretschmann and Werner~\cite{kretschmann2005quantum} and by Caruso
\etal~\cite{caruso2014quantum}.
\par 
Moreover, some familiarity with graphical models (like factor
graphs)~\cite{kschischang2001factor, forney2001codes, loeliger2004introduction}
and with techniques for estimating the information rate of an FSMC as
presented in~\cite{arnold2006simulation, sadeghi2009optimization} will be
beneficial. Recall that graphical models are a popular approach for representing 
multivariate functions with \emph{non-trivial} factorizations and for doing 
computations like marginalization~\cite{kschischang2001factor, 
forney2001codes,loeliger2004introduction}. In particular, graphical models can 
be used to represent joint probability mass functions (pmfs) / probability 
density functions (pdfs). In the present paper we will heavily rely on the
paper~\cite{loeliger2017factor}, which 
discussed an approach for using normal factor graphs (NFGs) for representing
functions that typically appear when doing computations w.r.t. some quantum
systems. Alternatively, we could also have used the slightly more compact 
double-edge normal factor graphs (DE-NFGs)~\cite{cao2017double}.
Probabilities of interest are then obtained by suitably applying the
sum-product algorithm or closing-the-box operations.
\subsection{Notations}
\label{sec:notations:1}
We use the following conventions throughout the paper:
\begin{itemize}
\item Vectors are denoted using boldface letters.
\item Sans-serif letters are being used to denote either random variables
      or quantum systems.
\item Lower and upper indices are used as the starting and ending indices,
      respectively, of the elements in a vector or an ordered collections
      of random variables or quantum systems. For example,
      \begin{itemize}
          \item $\vx_1^n \equiv (x_1,x_2,\ldots,x_n)$ denotes an 
                $n$-tuple with elements $x_1$ up to $x_n$;
          \item $\rv{X}_1^n\equiv (\rv{X}_1,\rv{X}_2,\ldots,
                 \rv{X}_n)$ denotes a sequence of random variables;
          \item $\system{S}_0^n\equiv (\system{S}_0,\system{S}_1,\ldots,
                 \system{S}_n)$ denotes the collective quantum system 
                 consisting of subsystems $\system{S}_0$ up to $\system{S}_n$.
      \end{itemize}
\item The set of all density operators over a Hilbert space $\hilbert$ is
      denoted by $\DensOp\left(\hilbert\right)$; its elements are represented 
      using Greek letters, \eg, $\rho_\system{S}$ denotes a density operator
      of some quantum system $\system{S}$.
\end{itemize}
As it should be clear from the context, we also overload the symbol $\entropy$ 
to denote either the Shannon entropy or the von Neumann entropy, and the 
symbol $\mutualInfo$ to denote either the classical or quantum mutual 
information. 
\section{Review of (Classical) Finite-State Machine Channels:
         Information Rate, its Estimation, and Bounds}
\label{sec:2:FSMC}
In this section, we review the methods developed 
in~\cite{arnold2006simulation} for estimating the information rate of a
(classical) FSMC,
and the auxiliary-channel-induced upper and lower bounds studied
in~\cite{sadeghi2009optimization}. As we will see, the development in later
sections about quantum channels will have many similarities, but also some
important differences. We emphasize that this section is a \emph{brief review}
of~\cite{arnold2006simulation} and~\cite{sadeghi2009optimization} for the 
purpose of introducing necessary tools and ideas for later sections.
\subsection{Finite-State Machine Channels (FSMCs) and their Graphical 
            Representation}
A (time-invariant) finite-state machine channel (FSMC) consists of an input 
alphabet $\set{X}$, an output alphabet $\set{Y}$, a state alphabet $\set{S}$,
all of which are finite, and a channel law $W(y,s'|x,s)$, where the latter
equals the probability of receiving $y\in\set{Y}$ and ending up in state 
$s'\in\set{S}$  given channel input $x\in\set{X}$ and previous channel 
state $s\in\set{S}$. The relationship among the input, output, and state 
processes $\rv{X}_1^n,\rv{Y}_1^n,\rv{S}_0^n$ of $n$-channel uses can be 
described by the conditional pmf
\begin{equation}\label{eq:FSMC:n:conditional}
\begin{aligned}
W(\vy_1^n,\vs_1^n|\vx_1^n,s_0) 
&\defeq P_{\rv{Y}_1^n,\rv{S}_1^n|\rv{X}_1^n,\rv{S}_0}
       (\vy_1^n,\vs_1^n|\vx_1^n,s_0)\\
&= \Prod_{\ell=1}^{n} W(y_\ell,s_\ell|x_\ell,s_{\ell-1}),
\end{aligned}
\end{equation}
where $x_\ell\in\set{X}$, $y_\ell\in\set{Y}$, and $s_\ell\in\set{S}$ for each
$\ell$.
\begin{example}[Gilbert--Elliott channels]\label{example:GEC}
A notable class of examples of FSMCs are the Gilbert--Elliott
channels~\cite{mushkin1989capacity}, 
which behave like a binary symmetric channel (BSC) with cross-over
probability $p_s$ controlled by the channel state
$s\in\{``\mathrm{b}",``\mathrm{g}"\}$, where usually
$\bigl| \pbad - \frac{1}{2} \bigr| < \bigl| \pgood - \frac{1}{2} \bigr|$.
The state process itself is a first-order stationary
ergodic Markov process that is independent of the input
process.\footnote{The independence of the state process on the input process
is a particular feature of the Gilbert--Elliott channel. In general, the
state process of a finite-state channel can depend on the input process.}
(For more details, see, \eg, the discussions in~\cite{sadeghi2009optimization}.)
\end{example}
Given an input process $\{Q^{(n)}\}_{n}$ and an initial state pmf
$P_{\rv{S}_0}(s_0)$, we can write down the joint pmf of 
$(\rv{X}_1^n,\rv{Y}_1^n,\rv{S}_0^n)$ as 
\begin{align}
\label{eq:FSMC:glabal:distribution}
g(\vx_1^n,\vy_1^n,\vs_0^n)
&\defeq P_{\rv{X}_1^n,\rv{Y}_1^n,\rv{S}_0^n}(\vx_1^n,\vy_1^n,\vs_0^n)\\
&\hspace{-20pt}=P_{\rv{S}_0}(s_0)\cdot Q^{(n)}(\vx_1^n) \cdot
    \Prod_{\ell=1}^{n} W(y_\ell,s_\ell|x_\ell,s_{\ell-1}).
\end{align}
The factorization of $g(\vx_1^n,\vy_1^n,\vs_0^n)$ as shown
in~\eqref{eq:FSMC:glabal:distribution} can be visualized with the help of 
a normal factor graph (NFG) as in Fig.~\ref{fig:FMSC:high:level:1}.
In this context, $g(\vx_1^n,\vy_1^n,\vs_0^n)$ is called the global function
of the NFG.
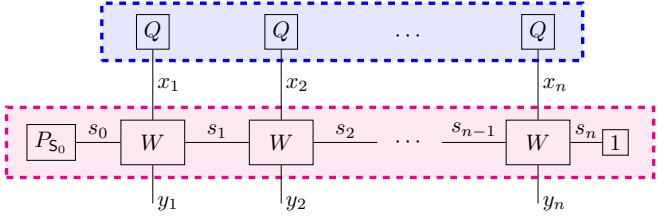
\begin{figure}
  \centering\resizebox{\columnwidth}{!}{
     \begin{tikzpicture}[
    factor/.style={rectangle, minimum width=1cm, minimum height=.7cm,draw},
    sfactor/.style={rectangle, minimum size=.4cm, draw}]
\node[sfactor] (S) {$P_{\rv{S}_0}$};
\node[factor] (E1) [right=.7cm of S] {$W$};
\draw (S) -- (E1) node[above=-.05cm,midway] {$s_0$};
\node[sfactor] (X1) [above=1.1cm of E1] {$Q$};
\draw (X1) -- (E1) node[right=-.05cm, midway] {$x_1$};
\draw (E1.south) -- ([yshift=-.6cm]E1.south) node[right=-.05cm] {$y_1$};

\node[factor] (E2) [right=1cm of E1] {$W$};
\draw (E1) -- (E2) node[above=-.05cm,midway] {$s_1$};
\node[sfactor] (X2) [above=1.1cm of E2] {$Q$};
\draw (X2) -- (E2) node[right=-.05cm, midway] {$x_2$};
\draw (E2.south) -- ([yshift=-.6cm]E2.south) node[right=-.05cm] {$y_2$};

\node[factor,draw=none] (Edummy) [right=1cm of E2] {$\cdots$};
\node[sfactor,draw=none] (Xdummy) [above=1.1cm of Edummy] {$\cdots$};

\node[factor] (En) [right=1cm of Edummy] {$W$};
\draw (E2) -- (Edummy) node[above=-.05cm,midway] {$s_2$};

\draw (Edummy) -- (En) node[above=-.05cm,midway] {$s_{n-1}$};
\node[sfactor] (Xn) [above=1.1cm of En] {$Q$};
\draw (Xn) -- (En) node[right=-.05cm, midway] {$x_n$};
\draw (En.south) -- ([yshift=-.6cm]En.south) node[right=-.05cm] {$y_n$};

\node[sfactor, inner sep=0pt] (ee) [right=.5cm of En] {$1$};
\draw (En) -- (ee) node[above=-.05cm, midway] {$s_n$};
    
\begin{pgfonlayer}{bg}
    \draw[dashed, blue, line width=1.5pt, fill=blue!10] 
        ([xshift=-.8cm,yshift=0.45cm]X1) rectangle 
        ([xshift=.7cm,yshift=-0.45cm]Xn);
    \draw[dashed, magenta, line width=1.5pt, fill= magenta!10] 
        ([xshift=-.7cm,yshift=0.55cm]S) rectangle
        ([xshift=.6cm,yshift=-0.55cm]ee);
\end{pgfonlayer}
\end{tikzpicture}}
  \caption{Channel with a classical state: closing the {\color{cyan}top} box
  yields the input process $Q^{(n)}$, closing the
  {\color{magenta}bottom} box yields the joint channel law 
  $W(\vy_1^n|\vx_1^n)$.}
  \label{fig:FMSC:high:level:1}
\end{figure}
In particular:
\begin{enumerate}[label=\alph*)]
\item \label{closing:the:box} The part of the NFG inside the
  {\color{magenta}bottom} box
  represents $W\left(\vy_1^n,\vs_1^n|\vx_1^n,s_0\right)$, \ie, the
  probability of obtaining $\vy_1^n$ and $\vs_1^n$ given $\vx_1^n$ and $s_0$.
  After applying the \emph{closing-the-box operation}, \ie, after summing over 
  all the variables associated with edges completely inside the
  {\color{magenta}bottom} box, we 
  obtain the joint channel law $W(\vy_1^n|\vx_1^n)\defeq \sum_{\vs_0^n}
  P_{\rv{S}_0}(s_0)\cdot W(\vy_1^n,\vs_1^n|\vx_1^n,s_0) $.
\item The part of the NFG inside the {\color{cyan}top} box represents the
  input process
  $Q^{(n)}(\vx_1^n)$. Here, for simplicity, the input process is an i.i.d. 
  process characterized by the pmf $Q$, \ie, $Q^{(n)}(\vx_1^n) = 
  \prod_{\ell=1}^{n} Q(x_{\ell})$.
\item The function $g(\vx_1^n,\vy_1^n) \defeq \sum_{\vs_0^n} g(\vx_1^n,
  \vy_1^n,\vs_0^n)$, which is obtained by summing the global function
  $g(\vx_1^n,\vy_1^n,\vs_0^n)$ over $\vs_0^n$,
  represents the corresponding marginal pmf over $\vx_1^n$ and $\vy_1^n$.
  The function $g(\vs_0^n)\defeq \sum_{\vx_1^n,\vy_1^n} g(\vx_1^n,\vy_1^n,
  \vs_0^n)$, which is obtained by summing the global function over $\vx_1^n$
  and $\vy_1^n$, represents the corresponding marginal pmf over $\vs_0^n$.
  Other marginal pmfs can be obtained similarly.
\end{enumerate}
Equipped with the notion of the closing-the-box operation
(see item~\ref{closing:the:box} above), such NFG representations can be 
useful in computing a number of quantities of interests. For example, to prove 
that~\eqref{eq:FSMC:n:conditional} is indeed a valid conditional pmf,
it suffices to show that
\begin{equation}\label{eq:verfy:FSMC:conditional:distribution}
\sum_{\vs_1^n,\vy_1^n} W(\vy_1^n,\vs_1^n|\cvx_1^n,\cs_0) = 1
\quad \forall \cvx_1^n\in\set{X}^{n},\,\cs_0\in\set{S},
\end{equation}
which can be verified via a sequence of closing-the-box operations as shown in 
Fig.~\ref{fig:FMSC:high:level:2} in the Appendix.
Such techniques are at the heart of the information-rate-estimation methods
as in~\cite{arnold2006simulation}.
The details are reviewed in the next subsection.
\subsection{Information Rate Estimation}
\label{sec:FSMC:IR}
The approach of~\cite{arnold2006simulation} for estimating information rates of 
FSMCs, as reviewed in this section, is based on the Shannon--McMillan--Breiman 
theorem (see \eg,~\cite{cover2012elements})
and suitable generalizations.
We make the following assumptions.
\begin{itemize}
\item As already mentioned, the derivations in this paper are for the case
  where the input process $\rv{X} = (\rv{X}_1,\rv{X}_2,\ldots)$ is an i.i.d.
  process. The results can be generalized to other stationary ergodic input
  processes that can be represented by a finite-state-machine source
  (FSMS). Technically, this is done by defining a new state that combines the
  source state and the channel state.
\item We assume that the FSMC is indecomposable, which roughly means that in
  the long term the behavior of the channel is independent of the initial
  channel state distribution $P_{\rv{S}_0}$
  (see~\cite[Section~4.6]{gallager1968information}
  for the exact definition). For such channels and stationary ergodic input 
  processes, the information rate $\infoRate_W$ in~\eqref{eq:def:fsmc:ir:1} is
  well defined.
\end{itemize}
Let $W(\vy_1^n|\vx_1^n)$ be the joint channel law of an FSMC satisfying the
assumptions above. As aforementioned, the information rate of such a channel 
using the i.i.d. input distribution $\{Q^{(n)}\defeq Q^{\tensor{n}}\}_{n}$ is 
given by~\eqref{eq:def:fsmc:ir:1}, \ie, by
\begin{equation*}
\infoRate_W(Q) = \lim_{n\rightarrow\infty}
    \frac{1}{n}\mutualInfo(\rv{X}_1^n;\rv{Y}_1^n),
\tag{\ref{eq:def:fsmc:ir:1}'}
\end{equation*}
where the input process $\rv{X}_1^n$ and the output process $\rv{Y}_1^n$ are
jointly distributed according to 
\begin{equation}\label{eq:joint:fsmc}
P_{\rv{X}_1^n,\rv{Y}_1^n}\left(\vx_1^n,\vy_1^n\right) = 
    \Prod_{\ell=1}^{n} Q(x_\ell) \cdot W(\vy_1^n|\vx_1^n).
\end{equation}
One can rewrite~\eqref{eq:def:fsmc:ir:1} as
\begin{equation}\label{eq:def:fsmc:ir:2}
    \infoRate_W(Q) = 
    \entropicRate(\rv{X})+\entropicRate(\rv{Y})-\entropicRate(\rv{X},\rv{Y}),  
\end{equation}
where the \emph{entropic rates} $\entropicRate(\rv{X})$, $\entropicRate(\rv{Y})$
and $\entropicRate(\rv{X},\rv{Y})$ are defined as
\begin{align}
    \entropicRate(\rv{X}) &\defeq
        \lim_{n\to\infty} \frac{1}{n} \entropy(\rv{X}_1^n), \\
    \entropicRate(\rv{Y}) &\defeq 
        \lim_{n\to\infty} \frac{1}{n} \entropy(\rv{Y}_1^n), \\
    \entropicRate(\rv{X},\rv{Y}) &\defeq 
        \lim_{n\to\infty} \frac{1}{n} \entropy(\rv{X}_1^n,\rv{Y}_1^n).
  \end{align}
\par 
We proceed as in~\cite{arnold2006simulation}. (For more background
information, see the references in~\cite{arnold2006simulation}, in 
particular~\cite{ephraim2002hidden}.)
Namely, because of~\eqref{eq:def:fsmc:ir:2} and 
\begin{alignat}{3}
  && -\frac{1}{n}\log{P_{\rv{X}_1^n}(\rv{X}_1^n)} 
      &\stackrel{n\rightarrow\infty}
      {\longrightarrow} \entropicRate(\rv{X}) 
      \quad && \text{w.p. $1$},
      \label{eq:converge:x:1} \\
  && -\frac{1}{n}\log{P_{\rv{Y}_1^n}(\rv{Y}_1^n)} 
      &\stackrel{n\rightarrow\infty}
      {\longrightarrow}\entropicRate(\rv{Y})
      \quad && \text{w.p. $1$},
      \label{eq:converge:y:1} \\
  && -\frac{1}{n}\log{P_{\rv{X}_1^n,\rv{Y}_1^n}(\rv{X}_1^n,\rv{Y}_1^n)} 
      &\stackrel{n\rightarrow\infty}
      {\longrightarrow} \entropicRate(\rv{X},\rv{Y})
      \quad && \text{w.p. $1$},
      \label{eq:converge:xy:1}
\end{alignat}
by choosing some large number $n$, we have the approximation
\begin{equation}\label{eq:FSMC:ir:estimate:1}
    \infoRate_W(Q) \approx 
    \begin{aligned}[t]
        &-\frac{1}{n}\log{P_{\rv{X}_1^n}(\cvx_1^n)}
        -\frac{1}{n}\log{P_{\rv{Y}_1^n}(\cvy_1^n)}\\
        &+\frac{1}{n}\log{P_{\rv{X}_1^n,\rv{Y}_1^n}(\cvx_1^n,\cvy_1^n)},
    \end{aligned}
\end{equation}
where $\cvx_1^n$ and $\cvy_1^n$ are some input and output sequences,
respectively, randomly generated according to
\begin{equation}\label{eq:fsmc:joint:xy:distribution}
\resizebox{.88\columnwidth}{!}{$\displaystyle\hspace{-12pt}
P_{\rv{X}_1^n,\rv{Y}_1^n}(\cvx_1^n,\cvy_1^n) =
    \Sum_{\vs_0^n} P_{\rv{S}_0}(s_0)\cdot Q^{(n)}(\cvx_1^n)
    \cdot W(\cvy_1^n,\vs_1^n|\cvx_1^n,s_0),$}\hspace{-10pt}
\end{equation}
where $W(\cvy_1^n,\vs_1^n|\cvx_1^n,s_0)$ is defined 
in~\eqref{eq:FSMC:n:conditional}.
Note that $\cvx_1^n$ can be obtained by simulating the input process, and 
$\cvy_1^n$ can be obtained by simulating the channel for the given input string 
$\cvx_1^n$. The latter can be done by keeping track of
    $P_{\rv{Y}_\ell|\rv{X}_1^\ell,\rv{Y}_1^{\ell\!-\!1}}
        (y_\ell|\cvx_1^\ell,\cvy_1^{\ell\!-\!1})$,
which is proportional to
    $P_{\rv{Y}_\ell,\rv{Y}_1^{\ell\!-\!1}|\rv{X}_1^\ell}
        (y_\ell,\cvy_1^{\ell\!-\!1}|\cvx_1^\ell)$,
and can be efficiently calculated by applying suitable closing-the-box
operations as in Fig.~\ref{fig:CFSM:channel:simulation:Y} in the Appendix.
\par 
We continue by showing how the three terms appearing on the right-hand side
of~\eqref{eq:FSMC:ir:estimate:1} can be computed
efficiently. We show it explicitly for the second term, and then outline it
for the first and the third term.
\par 
In order to efficiently compute the second term on the right-hand side
of~\eqref{eq:FSMC:ir:estimate:1}, \ie,
$-\frac{1}{n}\log{P_{\rv{Y}_1^n}(\cvy_1^n)}$, we consider the
\emph{state metric} defined in~\cite{arnold2006simulation} as
\begin{equation}\label{eq:def:classical:channel:state:metric:Y:1}
\muY_{\ell}(s_{\ell}) \defeq
    \sum_{\vx_1^{\ell}} \sum_{\vs_0^{\ell\!-\!1}}
    P_{\rv{S}_0}(s_0)\cdot Q^{(\ell)}(\vx_1^\ell)\cdot 
    W(\cvy_1^\ell, \vs_1^\ell | \vx_1^\ell,  s_0).
\end{equation}
In this case,
\begin{equation} \label{eq:calculate:p:y:1}
P_{\rv{Y}_1^n}(\cvy_1^n) = \sum_{s_n} \muY_n(s_n),
\end{equation}
and the calculation of $\muY_\ell(s_{\ell})$ can be done iteratively as
\begin{align}
\muY_\ell(s_{\ell})
  &= \sum_{x_{\ell}}\sum_{s_{\ell\!-\!1}}
    \muY_{\ell\!-\!1}(s_{\ell\!-\!1}) \!\cdot\! 
    Q(x_{\ell} | \vx_1^{\ell\!-\!1}) \!\cdot\!
    W(\cy_{\ell},s_{\ell} | x_{\ell},s_{\ell\!-\!1})\nonumber\\
  &= \sum_{x_{\ell}} \sum_{s_{\ell\!-\!1}}
    \muY_{\ell\!-\!1}(s_{\ell\!-\!1}) \!\cdot\! 
     Q(x_{\ell})\!\cdot\!
    W(\cy_{\ell},s_{\ell} |  x_{\ell},s_{\ell\!-\!1}).
    \label{eq:recursive:state:metric:Y:1}
\end{align}
Eq.~\eqref{eq:recursive:state:metric:Y:1} is visualized in 
Fig.~\ref{fig:CFSM:estimate:hY} as applying suitable closing-the-box operations
to the NFG in Fig.~\ref{fig:FMSC:high:level:1}.
\par 
However, since the value of $\muY_{\ell}(s_\ell)$ tends to zero as
$\ell$ grows, such recursive calculations are numerically inconvenient.
A solution is to \emph{normalize} $\muY_{\ell}(s_\ell)$ after each use of
\eqref{eq:recursive:state:metric:Y:1} and to keep track of the scaling
coefficients. Namely,
\begin{equation}\label{eq:recursive:state:metric:Y:2}
\bmuY_{\ell}(s_\ell) \defeq 
    \frac{1}{\lambdaY_\ell} \sum_{x_\ell}\sum_{s_{\ell\!-\!1}}
    \bmuY_{\ell\!-\!1}(s_\ell) \!\cdot\!  Q(x_\ell) \!\cdot\!
    W(\cy_\ell,s_\ell|x_\ell,s_{\ell\!-\!1}),\hspace{-1pt}
\end{equation}
where the scaling factor $\lambdaY_\ell > 0$ is chosen such that
$\sum_{s_\ell} \bmuY_\ell(s_\ell) = 1$. With this,
Eq.~\eqref{eq:calculate:p:y:1} can be rewritten as
\begin{equation} \label{eq:calculate:p:y:2}
    P_{\rv{Y}_1^n}(\cvy_1^n) = \Prod_{\ell=1}^n \lambdaY_{\ell}.
\end{equation}
Finally, we arrive at the following efficient procedure for computing
$-\frac{1}{n}\log{P_{\rv{Y}_1^n}(\vy_1^n)}$:
\begin{itemize}
\item For $\ell = 1,\ldots, n$, iteratively compute the normalized state
    metric and with that the scaling factors $\lambdaY_{\ell}$.
\item Conclude with the result
\begin{equation} \label{eq:calculate:p:y:3}
-\frac{1}{n}\log{P_{\rv{Y}_1^n}(\vy_1^n)} = \frac{1}{n} \sum_{\ell=1}^n
    \log(\lambdaY_\ell).
\end{equation}
\end{itemize}
\par 
The third term on the right-hand side
of~\eqref{eq:FSMC:ir:estimate:1} can be evaluated
by an analogous procedure, where the state metric $\muY_\ell(s_\ell)$ 
is replaced by the state metric
\begin{equation}\label{eq:def:classical:channel:state:metric:XY:1}
  \muXY_\ell(s_\ell) \defeq
  \sum_{\vs_0^{\ell\!-\!1}} P_{\rv{S}_0}(s_0) \cdot Q^{(\ell)}(\cvx_1^\ell)
  \cdot W(\cvy_1^\ell,\vs_1^\ell|\cvx_1^\ell).      
\end{equation}
The iterative calculation of $\muXY_{\ell}(s_{\ell})$ is visualized in
Fig.~\ref{fig:CFSM:estimate:hXY}.
\par 
Finally, the first term on the right-hand side of~\eqref{eq:FSMC:ir:estimate:1}
can be trivially evaluated if $\rv{X}$ is an i.i.d. process, and with a similar
approach as above if it is described by an FSMS.
\par 
The above discussion is summarized as Algorithm~\ref{alg:SPA}.
On the side, note that for each $\ell=2,\ldots,n$, the quantities
$\lambdaY_\ell$ and $\lambdaXY_\ell$ in the algorithm
are the conditional probabilities
$P_{\rv{Y}_\ell|\rv{Y}_{1}^{\ell\!-\!1}}(\cy_\ell|\cvy_1^{\ell\!-\!1})$ and 
$P_{\rv{X}_\ell\rv{Y}_\ell|\rv{X}_{1}^{\ell\!-\!1}\rv{Y}_{1}^{\ell\!-\!1}}(\cx_{\ell},\cy_\ell|\cvx_1^{\ell\!-\!1},\cvy_1^{\ell\!-\!1})$,
respectively.
\subsection{Auxiliary Channels and Bounds on the Information Rate}
\label{sec:aux}
As already mentioned in Section~\ref{sec:1:Introduction}, auxiliary 
channels\footnote{Technically speaking, an auxiliary channel can be 
defined as \emph{any} channel with the same input/output alphabet.
For example, an auxiliary channel for an 
FSMC can be just another FSMC with smaller state space; whereas in 
Section~\ref{sec:5:UBLB}, an auxiliary channel can also be a 
quantum-state channel.} are introduced when the state space of the FSMC
is too large, making the calculation in Algorithm~\ref{alg:SPA} (pratically)
intractable. More precisely, given an auxiliary forward FSMC (AF-FSMC)
$\hat{W}(y_\ell,\hat{s}_\ell|x_\ell,\hat{s}_{\ell\!-\!1})$
and an auxiliary backward FSMC (AB-FSMC)
$\hat{V}(x_\ell,\hat{s}_\ell|y_\ell,\hat{s}_{\ell\!-\!1})$, 
a pair of upper and lower bounds of the information rate
is given in~\cite{arnold2006simulation, sadeghi2009optimization} as
\begin{align}
\label{eq:def:IRUB}
\IRUB^{(n)}_{W}(\hat{W}) &\defeq 
    \frac{1}{n} \Sum_{\vx_1^n,\vy_1^n}
    Q(\vx_1^n) W(\vy_1^n|\vx_1^n)
    \log{\frac{W(\vy_1^n|\vx_1^n)}{\QWaux(\vy_1^n)}},\\
\label{eq:def:IRLB}
\IRLB_{W}^{(n)}(\hat{V})  &\defeq
    \frac{1}{n} \Sum_{\vx_1^n,\vy_1^n}
    Q(\vx_1^n) W(\vy_1^n|\vx_1^n)
    \log{\frac{\hat{V}(\vx_1^n|\vy_1^n)}{Q(\vx_1^n)}},
\end{align}
where $\QWaux(\vy_1^n)\defeq\sum_{\vx_1^n} Q(\vx_1^n)\cdot
       \hat{W}(\vy_1^n|\vx_1^n)$.
To see that~\eqref{eq:def:IRUB} and~\eqref{eq:def:IRLB} are, respectively,
upper and lower bounds, one can verify the following two equalities,
\begin{align}
\label{eq:IRUB:minus:IR}
&\IRUB_{W}(\hat{W})-\infoRate_{W} =\frac{1}{n} \infdiv{\QW(\rv{Y}_1^n)}
{\QWaux(\rv{y}_1^n)},\\
\label{eq:IR:minus:IRLB}
&\begin{aligned}
\infoRate_{W}-\IRLB_{W}(\hat{V}) = \frac{1}{n} \sum_{\vy_1^n} & \QW(\vy_1^n)\cdot\\
&\infdiv{V(\rv{X}_1^n|\vy_1^n)}{\hat{V}(\rv{X}_1^n|\vy_1^n)},
\end{aligned}
\end{align}
where $\infdiv{\cdot}{\cdot}$ stands for the Kullback–Leibler (KL) divergence,
and where the backward channel $V(\vx|\vy)$ is defined as
$V(\vx|\vy)\defeq Q(\vx)W(\vy|\vx)/\QW(\vy)$.
In particular, given an AF-FSMC $\hat{W}$, the 
paper~\cite{sadeghi2009optimization} considered the induced AB-FSMC
$\hat{V}(\vx|\vy)\defeq Q(\vx)\hat{W}(\vy|\vx)/\QWaux(\vy)$. In this case, 
\begin{equation}
\IRLB_{W}^{(n)}(\hat{V}) =
    \frac{1}{n} \Sum_{\vx_1^n,\vy_1^n}
    Q(\vx_1^n) W(\vy_1^n|\vx_1^n)
    \log\frac{\hat{W}(\vy_1^n|\vx_1^n)}{\QWaux(\vy_1^n)}.
\end{equation}
The difference function $\Delta_{W}^{(n)}(\hat{W})$ is defined as
\begin{equation}\label{eq:delta}
\hspace{0pt}
\begin{aligned}[t]
\Delta_{W}^{(n)}(\hat{W}) &\defeq \IRUB^{(n)}_W(\hat{W}) - \IRLB_{W}^{(n)}(\hat{V})\\
&=\frac{1}{n} \Sum_{\vx_1^n,\vy_1^n}Q(\vx_1^n)W(\vy_1^n|\vx_1^n)
    \log\left(\frac{W(\vy_1^n|\vx_1^n)}{\hat{W}(\vy_1^n|\vx_1^n)}\right)\\
&=\frac{1}{n} \infdiv{\!Q(\rv{X}_1^n) W(\rv{Y}_1^n|\rv{X}_1^n)\!}
    {\!Q(\rv{X}_1^n) \hat{W}(\rv{Y}_1^n|\rv{X}_1^n)\!}.
\end{aligned}\hspace{-30pt}
\end{equation}
Apparently, $\Delta_{W}^{(n)}(\hat{W})\geqslant 0$, and equality holds if and only
if $\hat{W}(\vy_1^n|\vx_1^n)=W(\vy_1^n|\vx_1^n)$ for all $\vx_1^n$ and $\vy_1^n$
with positive support w.r.t. $P_{\rv{X}_1^n,\rv{Y}_1^n}$ defined
in~\eqref{eq:joint:fsmc}. An efficient algorithm for finding a local minimum of
the difference function was proposed in~\cite{sadeghi2009optimization};
we refer to~\cite{sadeghi2009optimization} for further details.
\begin{algorithm}[t]
\caption{Estimating the information rate of an FSMC}
\begin{algorithmic}[1] 
\Require{indecomposable FSMC channel law $W$,
         input distribution $Q$,
         positive integer $n$ large enough.}
\Ensure{$\infoRate_W(Q) \approx \entropy(\rv{X}) +
         \hat\entropicRate(\rv{Y}) - \hat\entropicRate(\rv{X},\rv{Y})$.}
\State Initialize the channel state distribution $P_{\rv{S}_0}$ as a uniform
       distribution over $\set{S}$
\State Generate an input sequence $\cvx_1^n \sim Q^{\tensor n}$
\State Generate a corresponding output sequence $\cvy_1^n$
\State $\bmuY_{0}\gets P_{\rv{S}_0}$
\ForEach{$\ell=1,\ldots,n$}
\State $\muY_\ell(s_\ell) \!\gets\! \sum_{x_\ell,s_{\ell\!-\!1}}
    \bmuY_{\ell\!-\!1}(s_{\ell\!-\!1}) \!\cdot\! Q(x_\ell) \!\cdot\!
    W(\cy_\ell,s_\ell|x_\ell,s_{\ell\!-\!1})$
\State $\lambdaY_\ell \gets \sum_{s_\ell} \muY_\ell(s_\ell)$;
\State $\bmuY_\ell \gets \muY_\ell / \lambdaY_\ell$
\EndFor
\State $\hat\entropicRate(\rv{Y})\gets
        -\frac{1}{n} \sum_{\ell=1}^n \log(\lambdaY_\ell)$
\State $\bmuXY_{0}\gets P_{\rv{S}_0}$
\ForEach{$\ell=1,\ldots,n$}
\State $\muXY_\ell(s_\ell) \!\gets\! \sum_{s_{\ell\!-\!1}}
    \bmuXY_{\ell\!-\!1}(s_{\ell\!-\!1}) \!\cdot\! Q(\cx_\ell) \!\cdot\!
    W(\cy_\ell,s_\ell|\cx_\ell,s_{\ell\!-\!1})$
\State $\lambdaXY_\ell \gets \sum_{s_\ell} \muXY_\ell(s_\ell)$
\State $\bmuXY_\ell \gets \muXY_\ell/\lambdaXY_\ell$
\EndFor
\State $\hat\entropicRate(\rv{X},\rv{Y})\gets
        -\frac{1}{n} \sum_{\ell=1}^n \log(\lambdaXY_\ell)$
\State $\entropy(\rv{X}) \gets -\sum_{x} Q(x) \log{Q(x)}$
\State Estimate $\infoRate_W(Q)$ as $\entropy(\rv{X}) +
       \hat\entropicRate(\rv{Y}) - \hat\entropicRate(\rv{X},\rv{Y})$.
\end{algorithmic}
\label{alg:SPA}
\end{algorithm}
\section{Quantum Channel with Memory and
         their Graphical Representation}\label{sec:3:QCM}
In this section, we formalize our notations and modeling of quantum channels 
with memory~\cite{bowen2004quantum, kretschmann2005quantum, caruso2014quantum}
and of classical communications over such channels. In particular,
we will define a class of channels named \emph{quantum-state channels}, 
which is an alternative description of the classical communications
over quantum channels with memory.
In addition, we will introduce several NFGs for representing
these channels and processes.
\subsection{Classical Communication over a Quantum Channel with Memory}
\label{sec:classical:comm:quantum}
As already mentioned in Section~\ref{sec:1:Introduction}, a quantum channel
with memory is a completely positive trace-preserving (CPTP) map
\begin{equation}\label{eq:def:qcm}
\operator{N}:\DensOp(\hilbert_\system{A}\tensor\hilbert_\system{S})
\rightarrow \DensOp(\hilbert_\system{B}\tensor\hilbert_{\system{S}'}),
\end{equation}
where $\system{A}$ is the input system, $\system{B}$ is the output system,
$\system{S}$ and $\system{S}'$ are, respectively, the memory systems
before and after the channel use. The Hilbert spaces $\hilbert_\system{A}$, 
$\hilbert_\system{B}$, and $\hilbert_\system{S} \equiv 
\hilbert_{\system{S}'}$ are the state spaces corresponding to those
systems.
\par 
In the present paper, we consider classical communication over such
channels using some separable input ensemble and local output measurements;
namely, the encoder and decoder are, respectively, some classical-to-quantum and
quantum-to-classical channels involving a single input or output system.
In particular, given an ensemble $\{\rho^{(x)}_\system{A}\}_{x\in\set{X}}$
and a measurement $\{\Lambda_\system{B}^{(y)}\}_{y\in\set{Y}}$, we define
the encoding and decoding function, respectively, as
\begin{alignat}{2}
&\text{Encoding }\mathcal{E}\!:
    p_\rv{X} \mapsto \sum_{x\in\set{X}} p_\rv{X}(x)\rho^{(x)}_\system{A}
    && \forall\ p_\rv{X} \text{ over }\set{X},\\
&\text{Decoding }\mathcal{D}\!:
    \sigma_\system{B} \mapsto \left\{\tr(\Lambda_\system{B}^{(y)}
    \!\cdot\!\sigma_\system{B})\right\}_{y\in\set{Y}}
    && \forall\ \sigma_\system{B} \text{ over }
       \hilbert_\system{B}.
\end{alignat}
We emphasize that in our setup, the
ensemble $\{\rho^{(x)}_\system{A}\}_{x\in\set{X}}$ and measurements
$\{\Lambda_\system{B}^{(y)}\}_{y\in\set{Y}}$ are given and fixed.
Furthermore, we assume that one does not have access to the memory systems
of the channel.
For the case of i.i.d. inputs, the memory system $\system{S}$ before each
channel use shall be independent of the input system $\system{A}$, namely,
the joint memory-input operator shall take the form of
$\rho_\system{A}\tensor\rho_\system{S}$ at each channel input.\footnote{
More generally, for FSMSs, this statement also holds by conditioning on all
previous inputs.}
\par 
With this, the probability of receiving $y\in\set{Y}$, given that $x\in\set{X}$ 
was sent and given that the density operator of the memory system \emph{before}
the usage of the channel was $\rho_\system{S}$, equals
\begin{equation}\label{eq:channel:law:1}
P_{\rv{Y}|\rv{X};\system{S}}(y|x;\rho_\system{S}) = \tr\left(
    \Lambda_\system{B}^{(y)} \cdot
    \tr_{\system{S}'}\left(
        \operator{N}(\rho^{(x)}_\system{A}\tensor\rho_\system{S})
        \right)
    \right),
\end{equation}
which can also be written as
\begin{equation}\label{eq:channel:law:2}
P_{\rv{Y}|\rv{X};\system{S}}(y|x;\rho_\system{S}) = \tr\left(
    (\Lambda_\system{B}^{(y)}\tensor I_\system{S}) \cdot \operator{N}(
    \rho^{(x)}_\system{A}\tensor\rho_\system{S})\right),
\end{equation}
where $\tr_{\system{S}'}$ stands for the \emph{partial trace} operator
(see, \eg~\cite[Section~2.4.3]{nielsen2011quantum})
that extracts the subsystem $\system{B}$ from the joint system 
$(\system{B}\system{S}')$.
Moreover, assuming that $y$ was observed, the density operator of the
memory system \emph{after} the channel use is given by
\begin{equation}\label{eq:channel:evolution:1}
\rho_{\system{S}'} = \frac{
    \tr_\system{B}\left((\Lambda_\system{B}^{(y)}\tensor I_\system{S}) \cdot
    \operator{N}(\rho^{(x)}_\system{A}\tensor\rho_\system{S})\right)}{
    \tr\left((\Lambda_\system{B}^{(y)}\tensor I_\system{S}) \cdot
    \operator{N}(\rho^{(x)}_\system{A}\tensor\rho_\system{S})\right)}.
\end{equation}
Notice that the denominator in~\eqref{eq:channel:evolution:1} equals the
expressions in~\eqref{eq:channel:law:1} and~\eqref{eq:channel:law:2}.
One should note that, though the input and the memory systems are independent
before each channel use (given i.i.d. inputs), the output and the memory systems
after each channel use can be correlated or even entangled.
In particular, this translates to the fact that the measurement 
outcome $y$ can have an influence on the memory system as indicated
in~\eqref{eq:channel:evolution:1}.
\par 
Consider using the channel $n$ times consecutively with the above scheme.
The joint channel law, namely the conditional pmf of the channel
outputs $\rv{Y}_1^n$ given the channel inputs $\rv{X}_1^n$ and the initial
channel state $\rho_{\system{S}_0}$, can be computed iteratively 
using~\eqref{eq:channel:law:2} and~\eqref{eq:channel:evolution:1}.
In particular, the joint conditional pmf can be computed as
\begin{equation}\label{eq:joint:1}
P_{\rv{Y}_1^n|\rv{X}_1^n;\system{S}_0}(\vy_1^n|\vx_1^n;\rho_{\system{S}_0})
= \prod_{\ell=1}^n P_{\rv{Y}_\ell|\rv{X}_\ell;\system{S}_{\ell\!-\!1}}
                     (y_\ell|x_\ell;\rho_{\system{S}_{\ell\!-\!1}}),
\end{equation}
where we compute the density operators $\{\rho_{\system{S}_\ell}\}_{\ell=1}^n$
iteratively using~\eqref{eq:channel:evolution:1} as
\begin{equation}\label{eq:channel:evolution:2}
\rho_{\system{S}_\ell} = \frac{
    \tr_\system{B}\left((\Lambda_\system{B}^{(y_\ell)}\tensor I_\system{S}) \cdot
    \operator{N}(\rho^{(x_\ell)}_\system{A}\tensor\rho_{\system{S}_{\ell\!-\!1}})
    \right)}{
    \tr\left((\Lambda_\system{B}^{(y_\ell)}\tensor I_\system{S}) \cdot
    \operator{N}(\rho^{(x_\ell)}_\system{A}\tensor\rho_{\system{S}_{\ell\!-\!1}})
    \right)}.
\end{equation}
\subsection{Quantum-State Channels} \label{sec:quantum-state-channel}
For each channel-ensemble-measurement configuration ($\operator{N}$,
$\{\rho^{(x)}_\system{A}\}_{x\in\set{X}}$,
$\{\Lambda_\system{B}^{(y)}\}_{y\in\set{Y}}$) as introduced above,
one ends up with a joint conditional pmf, as 
in~\eqref{eq:joint:1}. However, this relationship is not bijective. In 
particular, consider unitary operators $U_\system{A}$ and $U_\system{B}$ acting
on $\hilbert_\system{A}$ and $\hilbert_\system{B}$, respectively. The following
setup induces exactly the same joint conditional pmf:
\begin{align*}
&\tilde{\operator{N}}: \tilde\rho_\system{AS} \mapsto 
    (U_\system{B}\!\tensor\! I_\system{S}) \cdot
    \operator{N}\left(\!(U_\system{A}\!\tensor\! I_\system{S})
    \tilde\rho_\system{AS}
    (U_\system{A}^\Herm\!\tensor\! I_\system{S})\!\right)
    \cdot (U_\system{B}^\Herm\!\tensor\! I_\system{S}),\\
&\tilde\rho^{(x)}_\system{A} \defeq 
    U_\system{A}^\Herm \cdot \rho^{(x)}_\system{A} \cdot U_\system{A}
    \quad\forall x\in\set{X},\\
&\tilde\Lambda_\system{B}^{(y)} \defeq
    U_\system{B}^\Herm \cdot \Lambda_\system{B}^{(y)} \cdot U_\system{B}
    \quad\forall y\in\set{Y}.
\end{align*}
Such redundancy is not only tedious, but also detrimental when we try to
compare different channels; in particular, when we try to introduce proper
auxiliary channels to approximate the original communication scheme.
\par 
In this subsection, we introduce a class of channels called~\emph{quantum-state
channels} to eliminate such redundancies. In particular, notice that the
statistical behavior of the aforementioned communication scheme is fully
specified via~\eqref{eq:channel:law:2} and~\eqref{eq:channel:evolution:1};
which are in turn determined by the set of completely positive mappings
$\{\operator{N}^{y|x}\}_{x\in\set{X},y\in\set{Y}}$ defined as
\begin{equation}\label{eq:def:qsc}
\operator{N}^{y|x}: \rho_\system{S} \mapsto 
    \tr_\system{B}\left((\Lambda_\system{B}^{(y)}\tensor I_\system{S})\cdot
    \operator{N}(\rho^{(x)}_\system{A}\tensor\rho_\system{S})\right).
\end{equation}
In this case,~\eqref{eq:channel:law:2},~\eqref{eq:channel:evolution:1},
and~\eqref{eq:joint:1} can be rewritten, respectively, as
\begin{align}
\label{eq:channel:law:3}
P_{\rv{Y}|\rv{X};\system{S}}(y|x;\rho_\system{S}) &= 
    \tr\left(\operator{N}^{y|x}(\rho_\system{S})\right),\\
\label{eq:channel:evolution:3}
\rho_{\system{S}'} &=\operator{N}^{y|x}(\rho_\system{S})\big/
    \tr\left(\operator{N}^{y|x}(\rho_\system{S})\right),\\
\label{eq:joint:2}\hspace{-10pt}
P_{\rv{Y}_1^n|\rv{X}_1^n;\system{S}_0}(\vy_1^n|\vx_1^n;\rho_{\system{S}_0})
&=\tr\!\left(\!
  \operator{N}^{y_n|x_n}\!\circ\cdots\circ\!\operator{N}^{y_1|x_1}
  (\rho_{\system{S}_0})\!\right)\!.\hspace{-10pt}
\end{align}
Thus, the operators $\{\operator{N}^{y|x}\}_{x\in\set{X},y\in\set{Y}}$ 
fully specify the joint conditional pmf as in~\eqref{eq:joint:2}.
Moreover, such specification is also \emph{unique}; namely, any two sets of
channel-ensemble-measurement configuration shall end up with the same joint
channel law if and only if the mappings defined in~\eqref{eq:def:qsc} are
identical. This inspires us to make the following definition.
\begin{definition}[Quantum-State Channel] A (finite indexed) set of 
completely positive operators $\{\operator{N}^{y|x}\}_{x\in\set{X},y\in\set{Y}}$
(acting on the same Hilbert space) is said to be a (classical-input
classical-output) \emph{quantum-state channel}
(CC-QSC) if $\sum_{y\in\set{Y}}\operator{N}^{y|x}$ is trace-preserving for each 
$x\in\set{X}$.
\end{definition}
Given any channel-ensemble-measurement configuration as described in
Section~\ref{sec:classical:comm:quantum}, one can always 
define a corresponding CC-QSC by~\eqref{eq:def:qsc}.
On the other hand, as stated in the proposition below, the converse is also
true.
\begin{proposition}\label{prop:quantum:state:channel}
For any CC-QSC
$\{\operator{N}^{y|x}\}_{x\in\set{X},y\in\set{Y}}$, 
there exists some quantum channel with memory $\operator{N}$ as
in~\eqref{eq:def:qcm} such that~\eqref{eq:def:qsc} holds with
ensemble $\{\rho_\system{A}^{(x)}=\braket{x}\}_{x\in\set{X}}$ and
measurement $\{\Lambda_\system{B}^{(y)}=\braket{y}\}_{y\in\set{Y}}$.
Here, $\hilbert_\system{A}$ and $\hilbert_\system{B}$ are defined such that
$\{\bra{x}\}_x$ and $\{\bra{y}\}_y$ are orthonormal bases of 
$\hilbert_\system{A}$ and $\hilbert_\system{B}$, respectively.
\end{proposition}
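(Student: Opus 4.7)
The plan is to give an explicit construction of $\operator{N}$ that realizes the given CC-QSC in the prescribed computational-basis setup. Let $\hilbert_\system{A}$ and $\hilbert_\system{B}$ be Hilbert spaces with chosen orthonormal bases $\{\bra{x}\}_{x\in\set{X}}$ and $\{\bra{y}\}_{y\in\set{Y}}$, respectively, and define the map
\begin{equation*}
\operator{N}(\sigma_\system{AS}) \defeq
    \sum_{x\in\set{X}} \sum_{y\in\set{Y}}
    \braket{y}_\system{B} \tensor
    \operator{N}^{y|x}\!\bigl(\ket{x}_\system{A}\, \sigma_\system{AS}\,
                               \bra{x}_\system{A}\bigr),
\end{equation*}
on $\DensOp(\hilbert_\system{A}\tensor\hilbert_\system{S})$. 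Intuitively, this channel first dephases the $\system{A}$-register in the $\{\bra{x}\}_x$ basis, and then conditionally on the classical label $x$ applies the completely positive map $\operator{N}^{y|x}$ on the memory while writing the classical label $y$ into the $\system{B}$-register.

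First I would verify that $\operator{N}$ is CPTP. Complete positivity is inherited from each $\operator{N}^{y|x}$: writing each $\operator{N}^{y|x}$ in Kraus form $\operator{N}^{y|x}(\cdot)=\sum_{k} K^{y|x}_{k}(\cdot)(K^{y|x}_{k})^{\Herm}$ yields Kraus operators $\bra{y}_\system{B}\tensor K^{y|x}_{k}\tensor \ket{x}_\system{A}$ for $\operator{N}$, which manifestly exhibit $\operator{N}$ as CP. Trace preservation follows from the hypothesis that $\sum_{y}\operator{N}^{y|x}$ is trace preserving for every $x$, since
\begin{equation*}
\tr\!\bigl(\operator{N}(\sigma_\system{AS})\bigr)
= \sum_{x}\tr\!\Bigl(\bigl(\!\sum_{y}\operator{N}^{y|x}\bigr)
  \!\bigl(\ket{x}_\system{A}\sigma_\system{AS}\bra{x}_\system{A}\bigr)\Bigr)
= \sum_{x}\tr\!\bigl(\ket{x}_\system{A}\sigma_\system{AS}\bra{x}_\system{A}\bigr)
= \tr(\sigma_\system{AS}).
\end{equation*}

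Next I would check that~\eqref{eq:def:qsc} holds for the stated ensemble $\rho_\system{A}^{(x)}=\braket{x}$ and POVM $\Lambda_\system{B}^{(y)}=\braket{y}$. Feeding $\braket{x}_\system{A}\tensor\rho_\system{S}$ into $\operator{N}$ collapses the inner sum by orthonormality, $\ket{x'}_\system{A}\braket{x}_\system{A}\bra{x'}_\system{A}=\delta_{xx'}\braket{x}_\system{A}$ followed by the partial inner product, leaving
\begin{equation*}
\operator{N}\!\bigl(\braket{x}_\system{A}\tensor\rho_\system{S}\bigr)
= \sum_{y}\braket{y}_\system{B}\tensor\operator{N}^{y|x}(\rho_\system{S}).
\end{equation*}
Multiplying by $\braket{y}_\system{B}\tensor I_\system{S}$ picks out the $y$-th summand (using $\braket{y}\braket{y'}=\delta_{yy'}\braket{y}$), and the partial trace over $\system{B}$ removes the $\braket{y}_\system{B}$ factor (since $\tr\braket{y}=1$), yielding exactly $\operator{N}^{y|x}(\rho_\system{S})$, as required.

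The construction is essentially an explicit Kraus-operator assembly and I do not anticipate a real obstacle; the only subtlety worth flagging is the placement of the bras, kets, and partial traces under the paper's convention ($\bra{\cdot}$ for kets, $\ket{\cdot}$ for bras, and $\braket{\cdot}$ for the associated rank-one projector), and the need to use the trace-preservation hypothesis on $\sum_y \operator{N}^{y|x}$ \emph{per input symbol} $x$ rather than on the individual $\operator{N}^{y|x}$. Once these are handled carefully, the verification reduces to a one-line computation in each of the two directions.
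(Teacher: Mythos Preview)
Your construction is correct and is essentially the same as the paper's: both build $\operator{N}$ via Kraus operators of the form $\bra{y}\!\ket{x}\otimes E^{y|x}_k$ (your $\bra{y}_\system{B}\tensor K^{y|x}_{k}\tensor \ket{x}_\system{A}$ is the same object written with a slightly different tensor ordering), then verify CPTP using the per-$x$ trace-preservation hypothesis and check the defining identity on product inputs $\braket{x}\tensor\rho_\system{S}$. The paper's write-up is terser (it states the target identity and the Kraus form and checks only the TP condition explicitly), but the argument is the same.
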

\begin{proof}
It suffices to show that there exists a CPTP map $\operator{N}:
\DensOp(\hilbert_\system{A}\tensor\hilbert_\system{S})\rightarrow
\DensOp(\hilbert_\system{B}\tensor\hilbert_\system{S})$ such that
for all $\rho_\system{S}\in\DensOp(\hilbert_\system{S})$, and $x\in\set{X}$,
\[
\operator{N}: \braket{x}\tensor\rho_\system{S} \mapsto
    \sum_{y\in\set{Y}} \braket{y}\tensor\operator{N}^{y|x}(\rho_\system{S}).
\]
Such an $\operator{N}$ can be constructed as
\[
\operator{N}: \rho \mapsto
\sum_{x,y,k}\left(\bra{y}\!\ket{x}\otimes E^{y|x}_k\right)\cdot\rho\cdot
\left(\bra{y}\!\ket{x}\otimes E^{y|x}_k\right)^{\Herm},
\]
where $\left\{E^{y|x}_k\right\}_{k}$ is a Kraus representation of
$\operator{N}^{y|x}$, namely,
\[
\operator{N}^{y|x}(\rho_{\system{S}}) \equiv
\sum_{k} E^{y|x}_k\cdot\rho_\system{S}\cdot(E^{y|x}_k)^\Herm
\qquad \forall \rho_\system{S}\in\DensOp(\hilbert_\system{S}).
\]
It remains to check if $\operator{N}$ is a CPTP, which is indeed the case:
\[\begin{aligned}
&\sum_{x,y,k} \left(\bra{y}\!\ket{x}\otimes E^{y|x}_k\right)^{\Herm}\cdot
\left(\bra{y}\!\ket{x}\otimes E^{y|x}_k\right)\\
= &\sum_x\sum_{y,k}
\braket{x} \otimes (E^{y|x}_k)^\Herm E^{y|x}_k = \sum_x \braket{x}\otimes I
= I. \qedhere
\end{aligned}\]
\end{proof}
\subsection{Visualization using Normal Factor Graphs} \label{sec:NFGs}
In this subsection, we focus on the computations of \eqref{eq:channel:law:3},
\eqref{eq:channel:evolution:3}, and~\eqref{eq:joint:2} for the situation where
the involved channel $\operator{N}$ is of finite dimension.
In analogy to the FSMCs, we demonstrate how to use NFGs
to facilitate and visualize the relevant computations. Our use of NFGs for
describing quantum systems follows~\cite{loeliger2017factor}.
\par 
By Proposition~\ref{prop:quantum:state:channel}, let us consider a CC-QSC
$\{\operator{N}^{y|x}\}_{x\in\set{X},y\in\set{Y}}$ acting on
$\hilbert_\system{S}$, where $d=\dim(\hilbert_\system{S})$ is finite, and
$\{\bra{s}\}_{s\in\set{S}}$ is an orthonormal basis of $\hilbert_\system{S}$.
(Apparently, $\size{\set{S}}=d$.) Since for each $x$ and $y$,
$\operator{N}^{y|x}$ is a completely positive map, there must exist finitely
many (\emph{not} necessarily unique) matrices
$\{F_k^{y|x}\in\mathbb{C}^{\set{S}\times\set{S}}\}_{k}$ such that
\begin{equation} \label{eq:Kraus:quantum:state:channel}
    \bigl[\operator{N}^{y|x}(\rho_\system{S})\bigr] \equiv 
    \sum_{k} F_k^{y|x}\cdot [\rho_\system{S}] \cdot (F_k^{y|x})^\Herm
        \quad \forall \rho_\system{S}\in\DensOp(\hilbert_\system{S}),
\end{equation}
where $\bigl[\operator{N}^{y|x}(\rho_\system{S})\bigr]$ and $[\rho_\system{S}]$
are, respectively, the matrix representation of the operator 
$\operator{N}^{y|x}(\rho_\system{S})$ and $\rho_\system{S}$ under
$\{\bra{s}\}_{s\in\set{S}}$. The reason for such matrices $\{F_k^{y|x}\}_{k}$
to exist is the same as for the Kraus operators of CPTP maps
(see~\cite[Theorems~8.1 and~8.3]{nielsen2011quantum}).
Also note that $\sum_{y\in\set{Y}} \mathcal{E}^{y|x}$ is
trace-preserving, thus it must hold that
\begin{equation} \label{eq:operator:sum:representation:condition:1}
    \sum_{y\in\set{Y}}\sum_{k} (F_k^{y|x})^\Herm F_k^{y|x} = I
        \quad \forall x \in\set{X}.
\end{equation}
Now, define a set of functions
$\{W^{y|x}\}_{x\in\set{X},y\in\set{Y}}$ as
\begin{equation}\label{eq:def:channel:function:representation}
    W^{y|x}:(s',s,\tilde{s}',\tilde{s}) \mapsto
    \sum_k F_k^{y|x}(s',s)\conj{F_k^{y|x}(\tilde{s}',\tilde{s})},
\end{equation}
where $s',s,\tilde{s}',\tilde{s}\in\set{S}$ are indices of the corresponding
matrices, namely, $F_k^{y|x}(s',s)$ is the $(s',s)$-th entry of matrix $F_k^{y|x}$.
In this case, one can rewrite 
\eqref{eq:channel:law:3},~\eqref{eq:channel:evolution:3}
and~\eqref{eq:joint:2}, respectively, into
\begin{equation}
\label{eq:channel:law:4}
P_{\rv{Y}|\rv{X};\system{S}}(y|x;\rho_\system{S}) =
    \sum_{s',\tilde{s}':\atop s'=\tilde{s}'} \sum_{s,\tilde{s}}
    W^{y|x}(s',s,\tilde{s}',\tilde{s})\cdot[\rho_\system{S}]_{s,\tilde{s}},	
\end{equation}
\begin{equation}
\label{eq:channel:evolution:4}
[\rho_{\system{S}'}]_{s',\tilde{s}'} =
    \frac{\sum_{s,\tilde{s}} W^{y|x}(s',s,\tilde{s}',\tilde{s})
          \cdot [\rho_\system{S}]_{s,\tilde{s}}}
         {\sum_{s',\tilde{s}':\atop s'=\tilde{s}'}\sum_{s,\tilde{s}}
          W^{y|x}(s',s,\tilde{s}',\tilde{s})
          \cdot [\rho_\system{S}]_{s,\tilde{s}}},
\end{equation}
\begin{equation}
\label{eq:joint:3}
\begin{aligned}
P_{\rv{Y}_1^n|\rv{X}_1^n;\system{S}_0}(\vy_1^n|\vx_1^n;\rho_{\system{S}_0})
    =&\sum_{s_n,\tilde{s}_n:\atop s_n=\tilde{s}_n}
    \sum_{\vs_0^{n\!-\!1},\tilde{\vs}_0^{n\!-\!1}}
    [\rho_{\system{S}_0}]_{s_0,\tilde{s}_0}\cdot \\
    &\prod_{\ell=1}^n W^{y_\ell|x_\ell}
    (s_\ell,s_{\ell\!-\!1},\tilde{s}_\ell,\tilde{s}_{\ell\!-\!1}).
\end{aligned}
\end{equation}
By rearranging the entries of $W^{y|x}$ (for each $x,y$) into a matrix
$[W^{y|x}]\in\mathbb{C}^{\set{S}^2\times\set{S}^2}$ as
\begin{equation}\label{eq:qsc:matrix:1}
    [W^{y|x}]_{(s',\tilde{s}'),(s,\tilde{s})} \defeq
    W^{y|x}(s',s,\tilde{s}',\tilde{s}),
\end{equation}
where $(s',\tilde{s}')\in\set{S}^2$ is the first index, and 
$(s,\tilde{s})\in\set{S}^2$ is the second index of $[W^{y|x}]$,
we can simplify~\eqref{eq:channel:law:4},
\eqref{eq:channel:evolution:4}, and~\eqref{eq:joint:3} as
\begin{align}
\label{eq:channel:law:5}
P_{\rv{Y}|\rv{X};\system{S}}(y|x;\rho_\system{S}) &=
    \tr([W^{y|x}] \cdot [\rho_\system{S}]),\\
\label{eq:channel:evolution:5}
[\rho_{\system{S}'}] &=
    \frac{[W^{y|x}] \cdot [\rho_\system{S}]}
         {\tr([W^{y|x}] \cdot [\rho_\system{S}])},\\
\label{eq:joint:4}\hspace{-10pt}
P_{\rv{Y}_1^n|\rv{X}_1^n;\system{S}_0}(\vy_1^n|\vx_1^n;\rho_{\system{S}_0})&=
    \tr\!\left(\![W^{y_{n}|x_{n}}]\cdots[W^{y_{1}|x_{1}}]
             \!\cdot\![\rho_{\system{S}_0}]\!\right)\!,\hspace{-10pt}
\end{align}
respectively. Here we treat $[\rho_\system{S}]$ as a length-$d^2$ vector
indexed by $(s,\tilde{s})\in\set{S}^2$ in the above equations.
\par 
By considering $\{W^{y|x}\}_{x,y}$ as a function of six variables, we can
represent it using a factor node of degree six in an NFG as in
Fig.~\ref{fig:NFG:QSC:single}. In this case, Eqs.~\eqref{eq:channel:law:4}
and~\eqref{eq:channel:law:5} can be visualized as ``closing the
{\color{magenta} outer} box''
in the NFG, \ie, summing over all the variables represented by the edges
interior to the box. Similarly,~\eqref{eq:channel:evolution:4} 
and~\eqref{eq:channel:evolution:5} can be visualized as ``closing the
{\color{cyan}inner} box''.
The NFG corresponding to using the channel $n$~times consecutively
is depicted in Fig.~\ref{fig:NFG:QSC:multiple}, where~\eqref{eq:joint:3}
and~\eqref{eq:joint:4} are visualized as closing the {\color{magenta}outermost}
box.
Interestingly, this closing-the-box operation can be carried out by a sequence
of simpler closing-the-box operations as shown in the figure.
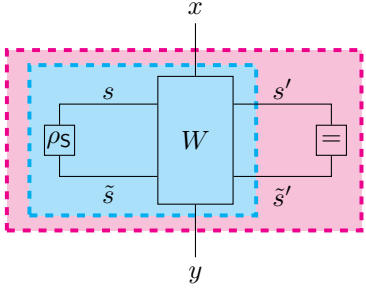
\begin{figure}
\centering
    \begin{tikzpicture}[node/.style={draw=none},
                    factor/.style={rectangle, minimum width=1cm, minimum height=1.7cm, draw},
                    sfactor/.style={rectangle, minimum size=.4cm, draw}]
    \node[factor] (W) {$W$};
    \node[above left=-.5cm and 1.2cm of W] (sp) {};
    \node[below left=-.5cm and 1.2cm of W] (sp') {};
    \node[above right=-.5cm and 1.2cm of W] (s) {};
    \node[below right=-.5cm and 1.2cm of W] (s') {};
    \node[above=.7cm of W] (X) {$x$};
    \node[below=.7cm of W] (Y) {$y$};
    \node[sfactor, inner sep=0pt, midway] (EW) [right=1.6cm of W] {$=$};
    \node[sfactor, inner sep=0pt, midway] (rho) [left=1.6cm of W] {};
    \node at (rho) {$\rho_\system{S}$};
    
    \draw (sp.east-|W.west) -| (rho) node[above=-0.05cm,pos=0.25] {$s$};
    \draw (sp'.east-|W.west) -| (rho) node[below,pos=0.25] {$\tilde{s}$};
    \draw (sp.east-|W.east) -| (EW) node[above=-0.05cm,pos=0.25] {$s'$};
    \draw (sp'.east-|W.east) -| (EW) node[below,pos=0.25] {$\tilde{s}'$};
    \draw (X) -- (W);
    \draw (Y) -- (W);
    
    \begin{pgfonlayer}{bg}
    \draw[dashed, magenta, line width=1.5pt,fill= magenta!30]
        ([xshift=-.7cm,yshift=1.2cm]rho) rectangle
        ([xshift=.4cm,yshift=-1.2cm]EW);
    \draw[dashed, cyan, line width=1.5pt,fill= cyan!30]
        ([xshift=-.4cm,yshift=1cm]rho) rectangle
        ([xshift=-1cm,yshift=-1cm]EW);
    \end{pgfonlayer}
\end{tikzpicture}
\caption{Representation of $\{W^{y|x}\}_{x,y}$ using an NFG.}
  \label{fig:NFG:QSC:single}
\end{figure}
\begin{figure}
\centering
    \begin{tikzpicture}[
    factor/.style ={rectangle, minimum width=1cm, minimum height=1.7cm, draw},
    sfactor/.style={rectangle, minimum size=.4cm, draw},
    label/.style={anchor=south east, circle, draw, inner sep=.5pt, 
        outer sep=5pt, font=\scriptsize}]
\node[sfactor] (S) {}; \node at (S) {\resizebox{.375cm}{!}{$\rho_{\system{S}_0}$}};
\node[factor] (E1) [right=.5cm of S] {$W$};
\draw (S.north) |- ([yshift=.6cm]E1.west)
    node[above=-0.05cm,pos=.75] {$s_0$};
\draw (S.south) |- ([yshift=-.6cm]E1.west) node[below,pos=.75] {$\tilde{s}_0$};
\draw (E1.north) -- ([yshift=1cm]E1.north) node[right] {$x_1$};
\draw (E1.south) -- ([yshift=-1cm]E1.south) node[right] {$y_1$};

\node[factor] (E2) [right=.6cm of E1] {$W$};
\draw ([yshift=.6cm]E1.east) |- ([yshift=.6cm]E2.west)
    node[above=-0.05cm,pos=.75] {$s_1$};
\draw ([yshift=-.6cm]E1.east) |- ([yshift=-.6cm]E2.west)
    node[below,pos=.75] {$\tilde{s}_1$};
\draw (E2.north) -- ([yshift=1cm]E2.north) node[right] {$x_2$};
\draw (E2.south) -- ([yshift=-1cm]E2.south) node[right] {$y_2$};

\node[factor, draw=none] (Edummy) [right=.6cm of E2] {$\cdots$};
\draw ([yshift=.6cm]E2.east) |- ([yshift=.6cm]Edummy.west) 
    node[above=-0.05cm,pos=.75] {$s_2$};
\draw ([yshift=-.6cm]E2.east) |- ([yshift=-.6cm]Edummy.west)
    node[below,pos=.75] {$\tilde{s}_2$};

\node[factor] (En) [right=.6cm of Edummy] {$W$};
\draw ([yshift=.6cm]Edummy.east) |- ([yshift=.6cm]En.west)
    node[above=-0.05cm,pos=.75] {$s_{\!n-\!1}$};
\draw ([yshift=-.6cm]Edummy.east) |- ([yshift=-.6cm]En.west)
    node[below,pos=.75] {$\tilde{s}_{n\!-\!1}$};
\draw (En.north) -- ([yshift=1cm]En.north) node[right] {$x_n$};
\draw (En.south) -- ([yshift=-1cm]En.south) node[right] {$y_n$};

\node[sfactor, draw=none] (Edummy1) [above=-0.50cm of Edummy] {$\cdots$};
\node[sfactor, draw=none] (Edummy2) [below=-0.45cm of Edummy] {$\cdots$};

\node[sfactor,draw=none] (Xdummy) at ([yshift=1cm]E1.north-|Edummy) {$\cdots$};
\node[sfactor,draw=none] (Ydummy) at ([yshift=-1cm]E1.south-|Edummy) {$\cdots$};

\node[sfactor, inner sep=0pt] (ee) [right=.5cm of En] {$=$};
\draw ([yshift=.6cm]En.east) -| (ee.north) node[above=-0.05cm,pos=.25] {$s_n$};
\draw ([yshift=-.6cm]En.east) -| (ee.south) node[below,pos=.25] {$\tilde{s}_n$};

\begin{pgfonlayer}{bg}
\draw[dashed, magenta,line width=1.5pt,fill= magenta!10]
    ([xshift=-.7cm,yshift=1.5cm]S) rectangle
    ([xshift=1.5cm,yshift=-1.6cm]En);
\draw[dashed, magenta,line width=1.5pt,fill= magenta!20]
    ([xshift=-.6cm,yshift=1.4cm]S) rectangle
    ([xshift=.6cm,yshift=-1.5cm]En);
\node[label] at ([xshift=.6cm,yshift=-1.5cm]En) {$n$};
\draw[dashed, magenta,line width=1.5pt,fill= magenta!30]
    ([xshift=-.5cm,yshift=1.3cm]S) rectangle
    ([xshift=.6cm,yshift=-1.4cm]E2);
\node[label] at ([xshift=.6cm,yshift=-1.4cm]E2) {2};
\draw[dashed, magenta,line width=1.5pt,fill= magenta!50] 
    ([xshift=-.4cm,yshift=1.2cm]S) rectangle
    ([xshift=.6cm,yshift=-1.3cm]E1);
\node[label] at ([xshift=.6cm,yshift=-1.3cm]E1) {1};
\end{pgfonlayer}
\end{tikzpicture}
\caption{The joint channel law~\eqref{eq:joint:3} and~\eqref{eq:joint:4} can be
    visualized as the result of the ``closing of the {\color{cyan}outermost}
    box'' above, which can in turn be carried out by a sequence of
    ``closing-the-box'' operations as indicated.}
\label{fig:NFG:QSC:multiple}
\end{figure}
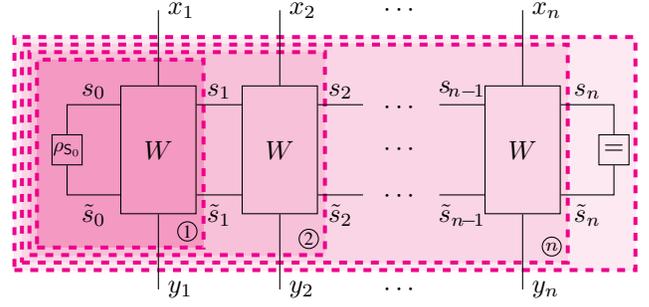
\par 
A number of statistical quantities and density operators of interest can be 
computed and visualized as closing-the-box operations on suitable NFGs similar
to that of Fig.~\ref{fig:NFG:QSC:multiple}.
The following example highlights how quantities of this kind can be computed
in such a manner.
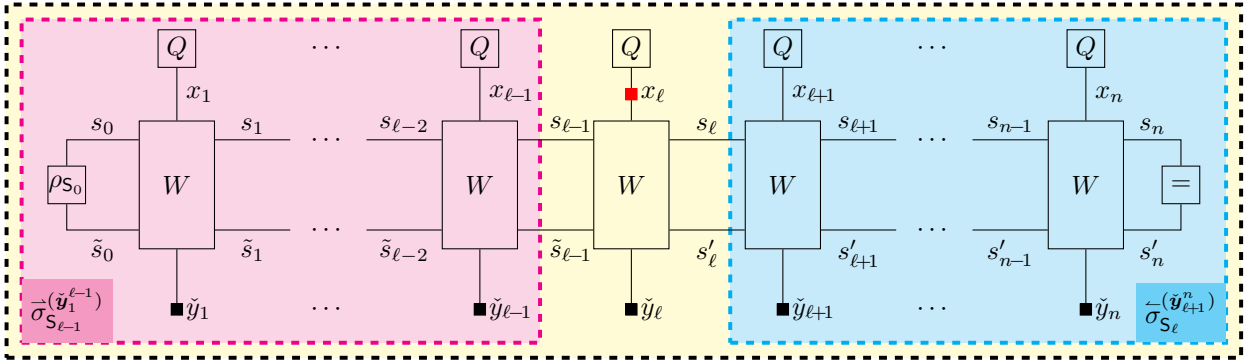
\begin{figure*}[t]
\centering
    \begin{tikzpicture}[
    node/.style={draw=none},
    factor/.style={rectangle, minimum width=1cm, minimum height=1.7cm, draw},
    sfactor/.style={rectangle, minimum size=.5cm, draw},
    darksolid/.style={rectangle, minimum size=.15cm, draw, fill = black,
    inner sep=0pt, outer sep = 0pt}]
\node[sfactor] (S) {}; \node at (S) {$\rho_{\system{S}_0}$};
\node[factor] (E1) [right=.7cm of S] {$W$};
\draw (S.north) |- ([yshift=.6cm]E1.west) node[above=-0.05cm,pos=.75] {$s_0$};
\draw (S.south) |- ([yshift=-.6cm]E1.west) node[below,pos=.75] {$\tilde{s}_0$};
\node[sfactor] (X1) [above=.7cm of E1] {}; \node at (X1) {$Q$};
\draw (X1) -- (E1) node[right, midway] {$x_1$};
\draw (E1.south) -- ([yshift=-.8cm]E1.south) node[darksolid]{} 
    node (Y1) [right] {$\cy_1$};

\node[factor, draw=none] (Edummy1) [right=1cm of E1] {};
\node at ([yshift=.6cm]E1.east-|Edummy1) {$\cdots$};
\node at ([yshift=-.6cm]E1.east-|Edummy1) {$\cdots$};
\draw ([yshift=.6cm]E1.east) |- ([yshift=.6cm]Edummy1.west)
    node[above=-0.05cm,pos=.75] {$s_1$};
\draw ([yshift=-.6cm]E1.east) |- ([yshift=-.6cm]Edummy1.west)
    node[below,pos=.75] {$\tilde{s}_1$};
\node at (X1-|Edummy1) {$\cdots$};
\node at (Y1-|Edummy1) {$\cdots$};

\node[factor] (El0) [right=1cm of Edummy1] {$W$};
\draw ([yshift=.6cm]Edummy1.east) |- ([yshift=.6cm]El0.west)
    node[above=-0.05cm,pos=.75] {$s_{\ell-2}$};
\draw ([yshift=-.6cm]Edummy1.east) |- ([yshift=-.6cm]El0.west)
    node[below,pos=.75] {$\tilde{s}_{\ell-2}$};
\node[sfactor] (Xl0) [above=.7cm of El0] {}; \node at (Xl0) {$Q$};
\draw (Xl0) -- (El0) node[right, midway] {$x_{\ell\!-\!1}$};
\draw (El0.south) -- ([yshift=-.8cm]El0.south) node[darksolid]{}
    node[right] {$\cy_{\ell\!-\!1}$};

\node[factor] (El) [right=1cm of El0] {$W$};
\draw ([yshift=.6cm]El0.east) |- ([yshift=.6cm]El.west) 
    node[above=-0.05cm,pos=.85] {$s_{\ell\!-\!1}$};
\draw ([yshift=-.6cm]El0.east) |- ([yshift=-.6cm]El.west) 
    node[below,pos=.85] {$\tilde{s}_{\ell\!-\!1}$};
\node[sfactor] (Xl) [above=.7cm of El] {}; \node at (Xl) {$Q$};
\draw (El.north) -- (Xl.south) node[darksolid,midway,red,fill=red]{}
    node[right,midway] {$x_{\ell}$};
\draw (El.south) -- ([yshift=-.8cm]El.south) node[darksolid]{}
    node[right] {$\cy_{\ell}$};

\node[factor] (El2) [right=1cm of El] {$W$};
\draw ([yshift=.6cm]El.east) |- ([yshift=.6cm]El2.west)
    node[above=-0.05cm,pos=.75] {$s_\ell$};
\draw ([yshift=-.6cm]El.east) |- ([yshift=-.6cm]El2.west)
    node[below,pos=.75] {$s_\ell'$};
\node[sfactor] (Xl2) [above=.7cm of El2] {}; \node at (Xl2) {$Q$};
\draw (Xl2) -- (El2) node[right, midway] {$x_{\ell\!+\!1}$};
\draw (El2.south) -- ([yshift=-.8cm]El2.south) node[darksolid]{}
    node[right] {$\cy_{\ell\!+\!1}$};
    
\node[factor, draw=none] (Edummy2) [right=1cm of El2] {};
\node at ([yshift=.6cm]El2.east-|Edummy2) {$\cdots$};
\node at ([yshift=-.6cm]El2.east-|Edummy2) {$\cdots$};
\draw ([yshift=.6cm]El2.east) |- ([yshift=.6cm]Edummy2.west)
    node[above=-0.05cm,pos=.75] {$s_{\ell\!+\!1}$};
\draw ([yshift=-.6cm]El2.east) |- ([yshift=-.6cm]Edummy2.west)
    node[below,pos=.75] {$s_{\ell\!+\!1}'$};
\node at (X1-|Edummy2) {$\cdots$};
\node at (Y1-|Edummy2) {$\cdots$};
    
\node[factor] (En) [right=1cm of Edummy2] {$W$};
\draw ([yshift=.6cm]Edummy2.east) |- ([yshift=.6cm]En.west)
    node[above=-0.05cm,pos=.75] {$s_{n\!-\!1}$};
\draw ([yshift=-.6cm]Edummy2.east) |- ([yshift=-.6cm]En.west)
    node[below,pos=.75] {$s_{n\!-\!1}'$};
\node[sfactor] (Xn) [above=.7cm of En] {}; \node at (Xn) {$Q$};
\draw (Xn) -- (En) node[right, midway] {$x_n$};
\draw (En.south) -- ([yshift=-.8cm]En.south) node[darksolid]{}
    node[right] {$\cy_n$};
    
\node[sfactor, inner sep=0pt] (ee) [right=.5cm of En] {$=$};
\draw ([yshift=.6cm]En.east) -| (ee.north) node[above=-0.05cm,pos=.25] {$s_n$};
\draw ([yshift=-.6cm]En.east) -| (ee.south) node[below,pos=.25] {$s_n'$};
    
\begin{pgfonlayer}{bg}
    \draw[dashed, black, line width=1.5pt, fill=yellow!20]
        ([xshift=-.8cm,yshift=2.4cm]S) rectangle ([xshift=.8cm,yshift=-2.3cm]ee);
    \draw[dashed, cyan, line width=1.5pt, fill=cyan!20]
        ([xshift=-.7cm,yshift=2.2cm]El2) rectangle 
        ([xshift=.6cm,yshift=-2.1cm]ee);
    \node[anchor=south east, fill= cyan!50] at ([xshift=.6cm,yshift=-2.1cm]ee)
        {$\lvec{\sigma}_{\system{S}_{\ell}}^{(\cvy_{\ell\!+\!1}^n)}$};
    \draw[dashed, magenta, line width=1.5pt, fill=magenta!20]
        ([xshift=-.6cm,yshift=2.2cm]S) rectangle 
        ([xshift=.8cm,yshift=-2.1cm]El0);
    \node at ([xshift=-.6cm,yshift=2.2cm]S) (tempA) {};
    \node at ([xshift=.8cm,yshift=-2.1cm]El0) (tempB) {};
    \node[anchor=south west,fill=magenta!50] at (tempA |- tempB) 
        {$\rvec{\sigma}_{\system{S}_{\ell\!-\!1}}^{(\cvy_1^{\ell\!-\!1})}$};
\end{pgfonlayer}
\end{tikzpicture}
\caption{Computation of the marginal pmf
$P_{\rv{X}_\ell|\rv{Y}_1^n;\system{S}_0}$ using a sequence of closing-the-box
operations.}
\label{fig:qsc:bcjr}
\end{figure*}
\begin{example}[BCJR~\cite{bahl1974optimal} decoding for CC-QSCs]
For fixed $\cvy_1^n\in\set{Y}^n$ and a given initial density operator
$\rho_{\system{S}_0}$, the conditional probability
$P_{\rv{X}_\ell|\rv{Y}_1^n;\system{S}_0}(x_\ell|\cvy_1^n;\rho_{\system{S}_0})$
can be computed via
\begin{equation}\label{eq:qsc:bcjr:1}
P_{\rv{X}_\ell|\rv{Y}_1^n;\system{S}_0}(\cdot|\cvy_1^n;\rho_{\system{S}_0})
\propto
P_{\rv{X}_\ell,\rv{Y}_1^n|\system{S}_0}(\cdot,\cvy_1^n|\rho_{\system{S}_0}),
\end{equation}
where the right-hand side of~\eqref{eq:qsc:bcjr:1} is a marginal pmf defined as
\begin{equation}\label{eq:qsc:bcjr:2}
\begin{aligned}
P_{\rv{X}_\ell,\rv{Y}_1^n|\system{S}_0}(x_\ell,\cvy_1^n|\rho_{\system{S}_0})
= \sum_{\vx_{1}^{\ell\!-\!1},\vx_{\ell\!+\!1}^{n}}
  \sum_{\vs_0^n,\tilde{\vs}_0^n} [\rho_{\system{S}_0}]_{s_0,\tilde{s}_0}\cdot\\
  \prod_{i=1}^n Q(x_i) \cdot\prod_{j=1}^n W^{\cy_j|x_j}
    (s_j,s_{j\!-\!1},\tilde{s}_j,\tilde{s}_{j\!-\!1}),
\end{aligned}
\end{equation}
where we have assumed that the input process $\rv{X}_1^n$ is i.i.d.
characterized by some pmf $Q$. The evaluation of~\eqref{eq:qsc:bcjr:2} can
be carried out efficiently using a sequence of closing-the-box operations as
visualized in Fig.~\ref{fig:qsc:bcjr}. These operations can be roughly divided
into the following three steps.
\begin{enumerate}
\item Closing the {\color{magenta}left inner} box: this results in an operator 
  $\rvec{\sigma}_{\system{S}_{\ell\!-\!1}}^{(\cvy_1^{\ell\!-\!1})}$ on 
  $\hilbert_{\system{S}_{\ell\!-\!1}}$.
\item Closing the {\color{cyan}right inner} box: this results in another
  operator
  $\lvec{\sigma}_{\system{S}_{\ell}}^{(\cvy_{\ell\!+\!1}^n)}$ on 
  $\hilbert_{\system{S}_{\ell}}$.
\item Applying the closing-the-box operation to the yellow box: the result is
  the marginal pmf $P_{\rv{X}_\ell,\rv{Y}_1^n|\system{S}_0}
  (x_\ell,\cvy_1^n|\rho_{\system{S}_0})$, from which the desired conditional
  probability $P_{\rv{X}_\ell|\rv{Y}_1^n;\system{S}_0}
  (x_\ell|\cvy_1^n;\rho_{\system{S}_0})$ can be easily obtained by
  normalization.
\end{enumerate}
The operators mentioned in 1) and 2) can be computed recursively, using 
a sequence of closing-the-box operations. Namely, one can carry out the
computations in 1) consecutively with $\ell=1,2,\ldots,n$; or the computations
in 2) consecutively with $\ell=n,n\!-\!1,\ldots,1$. This provides an efficient
way to evaluate $P_{\rv{X}_\ell|\rv{Y}_1^n;\system{S}_0}(x_\ell|\cvy_1^n;
\rho_{\system{S}_0})$ for each $\ell=1,\ldots,n$; and thus provides an 
efficient symbol-wise decoding algorithm. The idea in this example is 
conceptually identical to that of the BCJR decoding algorithm for an FSMC.
\end{example}
As shown in the above example, very often the desired functions or quantities
are based on the same partial results. The NFG framework is very helpful to
visualize these partial results and to show how they can be combined to obtain
the desired functions and quantities.
\par 
We emphasize that the functions
$\{W^{y|x}\}_{x,y}$ defined in~\eqref{eq:def:channel:function:representation} 
are unique for a given finite-dimensional CC-QSC $\{\operator{N}^{y|x}\}_{x,y}$; 
even though such uniqueness does not apply to the Kraus operators $
\{F^{y|x}\}_k$ being used to define $\{W^{y|x}\}_{x,y}$. This can be proven
by making the identification that
\begin{equation}\label{eq:qsc:matrix:identification}
\bigl[\operator{N}^{y|x}(\rho_\system{S})\bigr]\equiv
[W^{y|x}]\cdot[\rho_\system{S}]
\qquad \forall \rho_\system{S}\in\DensOp(\hilbert_\system{S}),
\end{equation}
for all $x$ and $y$. Moreover, we argue that the functions $\{W^{y|x}\}_{x,y}$,
are an \emph{equivalent} way to specify a CC-QSC, or classical communication
over a quantum channel with memory as described at the beginning of this
section. Namely, for any set of complex-valued functions $\{W^{y|x}\}_{x,y}$ 
on $\set{S}^{4}$ satisfying some constraints to be clarified later,
there must exist a unique CC-QSC $\{\operator{N}^{y|x}\}_{x,y}$
such that~\eqref{eq:qsc:matrix:identification} holds; and thus, there must
exist some corresponding channel-ensemble-measurement configuration,
unique up to its channel law. As for such constraints, we rearrange the entries
of $W^{y|x}$ (for each $x,y$) into another matrix
$\llbracket W^{y|x}\rrbracket \in\mathbb{C}^{\set{S}^2\times\set{S}^2}$
(a.k.a. Choi--Jamio{\l}kowski matrix~\cite{jamiolkowski1972linear}),
whose entries are defined as
\begin{equation}\label{eq:qsc:matrix:2}
\llbracket W^{y|x}\rrbracket_{(s',s),(\tilde{s}',\tilde{s})} \defeq 
W^{y|x}(s',s,\tilde{s}',\tilde{s}),
\end{equation}
where $(s',s)\in\set{S}^2$ is the first index, and 
$(\tilde{s}',\tilde{s})\in\set{S}^2$ is the second index of
$\llbracket W^{y|x}\rrbracket$. Notice that, $\llbracket W^{y|x}\rrbracket$
is a positive semi-definite (p.s.d.) matrix, and it satisfies the following
equation
\begin{equation}\label{eq:operator:sum:representation:condition:2}
    \sum_{y\in\set{Y}} \sum_{s',\tilde{s}':\: s'=\tilde{s}'}
        \llbracket W^{y|x}\rrbracket_{(s',s),(\tilde{s}',\tilde{s})}
    = \delta_{s,\tilde{s}} \quad \forall x\in\set{X},
\end{equation}
where $\delta_{s,\tilde{s}}$ is the Kronecker-delta function.
In this case, the ``equivalence'' can be shown by the following proposition.
\begin{proposition}\label{prop:CFR}
Let $\set{X}$, $\set{Y}$ be finite sets, and $\hilbert_\system{S}$ be a 
finite-dimensional Hilbert space with an orthonormal basis 
$\{\bra{s}\}_{s\in\set{S}}$. For any set of functions
\[
\{W^{y|x}:\set{S}\times\set{S}\times\set{S}\times\set{S}
          \rightarrow \mathbb{C}\}_{x\in\set{X},y\in\set{Y}}
\]
such that their matrix form $\{\llbracket W^{y|x}\rrbracket\}_{x,y}$ 
consists of p.s.d. matrices and 
satisfies~\eqref{eq:operator:sum:representation:condition:2},
there must exist a unique CC-QSC $\{\operator{N}^{y|x}\}_{x,y}$ acting
on $\hilbert_\system{S}$ such that~\eqref{eq:qsc:matrix:identification} holds.
\end{proposition}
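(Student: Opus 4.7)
The plan is to prove this via the Choi--Jamio{\l}kowski isomorphism, which establishes a bijection between completely positive maps and positive semi-definite matrices on the doubled Hilbert space. The two matrix representations $[W^{y|x}]$ in \eqref{eq:qsc:matrix:1} and $\llbracket W^{y|x}\rrbracket$ in \eqref{eq:qsc:matrix:2} are both simply rearrangements of the same four-index function $W^{y|x}$; the former is the ``natural'' (Liouville-superoperator) representation acting on vectorized density matrices, while the latter is exactly the Choi matrix. This dual perspective is precisely what makes the proof work.

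First, I would invoke the Choi--Jamio{\l}kowski isomorphism (essentially a rephrasing of~\cite[Theorems~8.1 and~8.3]{nielsen2011quantum}): since each $\llbracket W^{y|x}\rrbracket$ is p.s.d. by hypothesis, a spectral decomposition yields complex matrices $\{F_k^{y|x}\}_k$ on $\set{S}\times\set{S}$ such that
\[
W^{y|x}(s',s,\tilde{s}',\tilde{s}) = \sum_k F_k^{y|x}(s',s)\,\overline{F_k^{y|x}(\tilde{s}',\tilde{s})},
\]
which is precisely~\eqref{eq:def:channel:function:representation}. I then \emph{define} the completely positive operator $\operator{N}^{y|x}$ by its Kraus form as in~\eqref{eq:Kraus:quantum:state:channel}. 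Complete positivity is immediate from the Kraus decomposition.

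Next, I would verify that this definition satisfies~\eqref{eq:qsc:matrix:identification}, \ie, $[\operator{N}^{y|x}(\rho_\system{S})] = [W^{y|x}]\cdot [\rho_\system{S}]$. This is a direct entrywise calculation: the $(s',\tilde{s}')$-entry of the left-hand side expands via the Kraus form to $\sum_{s,\tilde{s}} W^{y|x}(s',s,\tilde{s}',\tilde{s})\,[\rho_\system{S}]_{s,\tilde{s}}$, which by definition~\eqref{eq:qsc:matrix:1} is exactly the $(s',\tilde{s}')$-entry of the matrix-vector product on the right. To confirm that $\{\operator{N}^{y|x}\}_{x,y}$ is a CC-QSC, I would check that $\sum_{y\in\set{Y}} \operator{N}^{y|x}$ is trace-preserving for every $x$: for any $\rho_\system{S}$,
\[
\tr\!\Bigl(\sum_{y}\operator{N}^{y|x}(\rho_\system{S})\Bigr)
= \sum_{s,\tilde{s}} \biggl(\sum_{y}\sum_{s'=\tilde{s}'} \llbracket W^{y|x}\rrbracket_{(s',s),(\tilde{s}',\tilde{s})}\biggr)[\rho_\system{S}]_{s,\tilde{s}},
\]
which by hypothesis~\eqref{eq:operator:sum:representation:condition:2} equals $\sum_{s}[\rho_\system{S}]_{s,s} = \tr(\rho_\system{S})$.

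Uniqueness is the easy direction and would come last: since \eqref{eq:qsc:matrix:identification} must hold for \emph{every} $\rho_\system{S}\in\DensOp(\hilbert_\system{S})$, and the density operators span the space of Hermitian operators on $\hilbert_\system{S}$ (and by complex-linearity the entire operator space), the linear map $\operator{N}^{y|x}$ is completely pinned down by $[W^{y|x}]$. I do not expect any step of this argument to present a serious obstacle; the only mildly delicate point is keeping the two index arrangements~\eqref{eq:qsc:matrix:1} and~\eqref{eq:qsc:matrix:2} straight, since the ``reshuffling'' between natural and Choi representations is where the p.s.d. hypothesis on $\llbracket W^{y|x}\rrbracket$ (rather than on $[W^{y|x}]$) becomes essential---a well-known subtlety, as p.s.d.-ness is not preserved under this index permutation in general.
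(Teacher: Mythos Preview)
Your proposal is correct and follows essentially the same approach as the paper: the paper's proof sketch says precisely to take the eigenvalue decomposition of $\llbracket W^{y|x}\rrbracket$ and reconstruct $\operator{N}^{y|x}$ by running~\eqref{eq:def:channel:function:representation} and~\eqref{eq:Kraus:quantum:state:channel} backward, which is exactly your spectral-decomposition-to-Kraus-form argument. You have simply filled in the details the paper omits, including the trace-preservation check via~\eqref{eq:operator:sum:representation:condition:2} and the uniqueness argument from~\eqref{eq:qsc:matrix:identification}.
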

\begin{proof}
The idea of the proof is to consider the eigenvalue decomposition of
$\llbracket W^{y|x}\rrbracket$, and reconstruct $\operator{N}^{y|x}$ by
following the equations~\eqref{eq:def:channel:function:representation}
and~\eqref{eq:Kraus:quantum:state:channel} backwardly.
We omit the details here.
\end{proof}
\par 
Let us conclude this section by pointing out that the functions 
$\{W^{y|x}\}_{x,y}$, particularly the corresponding NFG, can be constructed
from the channel-ensemble-measurement configuration ($\operator{N}$,
$\{\rho^{(x)}_\system{A}\}_{x\in\set{X}}$,
$\{\Lambda_\system{B}^{(y)}\}_{y\in\set{Y}}$) as in Fig.~\ref{fig:CFR}.
This can be justified by checking~\eqref{eq:def:qsc}
and~\eqref{eq:qsc:matrix:identification}.
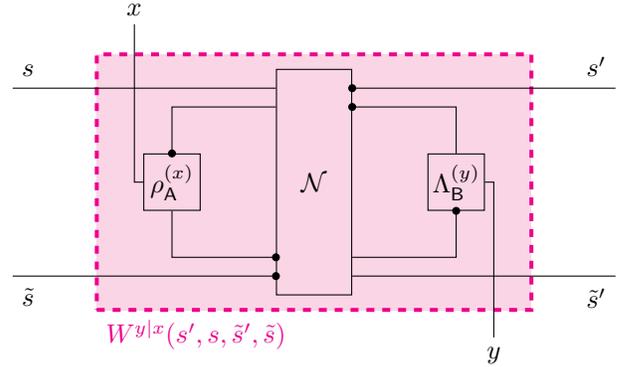
\begin{figure}
  \centering
  
    \begin{tikzpicture}[
    node/.style={draw=none},
    factor/.style={rectangle, minimum width=1cm, minimum height=3cm, draw},
    sfactor/.style={rectangle, minimum size=.75cm, draw, inner sep=1pt}]
\node[sfactor] (rhoX) {};
\node at (rhoX) {$\rho_\system{A}^{(x)}$};
\node[factor, right=1cm of rhoX] (N) {$\operator{N}$};
\draw[*-] (rhoX.north) |- ([yshift=1cm]N.west);
\draw[-*] (rhoX.south) |- ([yshift=-1cm]N.west);
\node at ([xshift=-.5cm,yshift=2.3cm]rhoX) (X) {$x$};
\draw (X) |- (rhoX);
    
\draw ([xshift=-3.50cm,yshift=1.25cm]N.west) -- ([yshift=1.25cm]N.west)
    node[above=1pt,pos=0,anchor=south west] {$s$};
\draw[-*] ([xshift=-3.50cm,yshift=-1.25cm]N.west) -- ([yshift=-1.25cm]N.west)
    node[below=1pt,pos=0,anchor=north west] {$\tilde{s}$};
\draw[-*] ([xshift=3.50cm,yshift=1.25cm]N.east) -- ([yshift=1.25cm]N.east)
    node[above=1pt,pos=0,anchor=south east] {$s'$};
\draw ([xshift=3.50cm,yshift=-1.25cm]N.east) -- ([yshift=-1.25cm]N.east)
    node[below=1pt,pos=0,anchor=north east] {$\tilde{s}'$};

\node[sfactor, right=1cm of N] (M) {};
\node at (M) {$\Lambda^{(y)}_{\system{B}}$};
\draw[-*] (M.north) |- ([yshift=1cm]N.east);
\draw[*-] (M.south) |- ([yshift=-1cm]N.east);
\node at ([xshift=.5cm,yshift=-2.3cm]M) (Y) {$y$};
\draw (Y) |- (M);
    
\begin{pgfonlayer}{bg}
    \draw[dashed, magenta, line width=1.5pt, fill= magenta!20]
        ([xshift=-1.0cm,yshift=1.7cm]rhoX) rectangle
        ([xshift=1cm,yshift=-1.7cm]M);
        \node[magenta,anchor=north west] at ([xshift=-1.0cm,yshift=-1.7cm]rhoX)
        {$W^{y|x}(s',s,\tilde{s}',\tilde{s})$};
\end{pgfonlayer}
\end{tikzpicture}
  \caption{NFG representation of the channel-ensemble-measurement 
    configuration ($\operator{N}$, $\{\rho^{(x)}_\system{A}\}_{x\in\set{X}}$,
    $\{\Lambda_\system{B}^{(y)}\}_{y\in\set{Y}}$).}
  \label{fig:CFR}
\end{figure}
\section{Information Rate and its Estimation}\label{sec:4:IR}
In this section, we focus on the information rate of the communication scheme
described in Section~\ref{sec:3:QCM}. As defined in~\eqref{eq:capacity:jointk},
the information rate is the limit superior of the average mutual information
$\frac{1}{n}\mutualInfo\left(\rv{X}_1^n;\rv{Y}_1^n\right)$ between the input
and output processes $\rv{X}_1^n$ and $\rv{Y}_1^n$ as $n$
tends to infinity. We assume that $\rv{X}_1^n$ is distributed
according to some i.i.d. process\footnote{For more general type
of sources, like a finite-state-machine source (FSMS), one can consider
``merging'' the memory of the source into that of the channel, and thus
obtaining an equivalent memoryless input process.
} characterized by the pmf $Q$, \ie,
$Q^{(n)}(\vx_1^n) = \prod_{\ell=1}^{n} Q(x_{\ell})$. In this case, 
the joint distribution of $(\rv{X}_1^n,\rv{Y}_1^n)$ is given by
\begin{equation}\label{eq:joint:5}\hspace{0pt}
P_{\rv{X}_1^n,\rv{Y}_1^n|\system{S}_0}(\vx_1^n,\vy_1^n|\rho_{\system{S}_0}\!)
\!=\! \prod_{\ell=1}^n Q(x_\ell) \!\cdot\!
  P_{\rv{Y}_1^n|\rv{X}_1^n;\system{S}_0}(\vy_1^n|\vx_1^n;\rho_{\system{S}_0}\!),
\hspace{-8pt}
\end{equation}
where $P_{\rv{Y}_1^n|\rv{X}_1^n;\system{S}_0}$ is specified
in~\eqref{eq:joint:1},~\eqref{eq:joint:2},~\eqref{eq:joint:3}
or~\eqref{eq:joint:4}, depending on which notation we use to specify the channel
(see Propositions~\ref{prop:quantum:state:channel} and~\ref{prop:CFR}).
It is obvious that the value of~\eqref{eq:joint:5}, and thus the information
rate, depends on the initial density operator $\rho_{\system{S}_0}$. In this
sense, we denote the information rate as a function of the input pmf $Q$,
the CC-QSC $\{\operator{N}^{y|x}\}_{x,y}$ describing the channel, and the
initial density operator $\rho_{\system{S}_0}$, namely
\begin{align}
\label{eq:def:qsc:IR:2}
&\infoRate(Q,\{\operator{N}^{y|x}\}_{x,y},\rho_{\system{S}_0})
\defeq \limsup_{n\rightarrow\infty}
    \infoRate^{(n)}(Q,\{\operator{N}^{y|x}\}_{x,y},\rho_{\system{S}_0}),
    \hspace{-5pt}\\
\label{eq:def:qsc:IR:1}
&\infoRate^{(n)}(Q,\{\operator{N}^{y|x}\}_{x,y},\rho_{\system{S}_0})
\defeq \frac{1}{n}\mutualInfo(\rv{X}_1^n;\rv{Y}_1^n)(\rho_{\system{S}_0}).
\end{align}
Here, $\mutualInfo(\rv{X}_1^n;\rv{Y}_1^n)(\rho_{\system{S}_0})$ is the mutual 
information between $\rv{X}_1^n$ and $\rv{Y}_1^n$; and the latter are jointly 
distributed according to~\eqref{eq:joint:5}. 
The argument $\rho_{\system{S}_0}$ emphasizes the dependency of 
$\frac{1}{n}\mutualInfo(\rv{X}_1^n;\rv{Y}_1^n)$ on $\rho_{\system{S}_0}$.
\par 
Similar to the case of an FSMC, the dependency of the information rate on the
initial density operator usually cannot be ignored. However, as already
mentioned in~Section~\ref{sec:FSMC:IR}, for a class of FSMCs, namely the
indecomposable FSMCs, it is known that the information rate is independent
from the initial channel state~\cite{gallager1968information}.
An indecomposable FSMC, intuitively speaking, is an FSMC whose state
distribution, given different initial states, tends to be indistinguishable as
$n\rightarrow\infty$, independently of the input sequence realized.
A quantum analogy was proposed by Bowen, Devetak, and
Mancini~\cite{bowen2005bounds}, where they defined the indecomposable quantum
channels with memory, and proved that the quantum entropic bound for such
channels is independent from the initial density operator.
\par 
In the remainder of this section we firstly define the indecomposability 
of CC-QSCs, and prove the independence of the information rate as
in~\eqref{eq:def:qsc:IR:2} from the initial density
operator. Secondly, we generalize the methods in Algorithm~\ref{alg:SPA}
for estimating such information rates efficiently.
\par 
The definition of an indecomposable CC-QSC in our paper is similar (but 
different) and closely related to that of an indecomposable (quantum)
channel with memory in~\cite{bowen2005bounds}. Namely, an indecomposable
channel with memory equipped with separable input ensemble and local output
measurement will always induce an indecomposable CC-QSC, but not necessarily
vice versa. Moreover, in~\cite{bowen2005bounds} the classical capacity of 
quantum channels with finite memory was considered, where the capacity is
essentially the Holevo bound, and where the latter was proven to be
achievable~\cite{bowen2004quantum}. However, in our work, we focus on the
situation where the ensemble and the measurement are fixed.
\subsection{Indecomposable Quantum-State Channel}
\begin{definition}A CC-QSC $\{\operator{N}^{y|x}\}_{x,y}$ is said to be
\emph{indecomposable} if for any initial density operators
$\alpha_{\system{S}_0}$ and $\beta_{\system{S}_0}$, the following statement
holds: for any $\epsilon>0$, there exists some positive integer $N$ s.t.
\begin{equation}
\norm{\alpha_{\system{S}_n}^{(\vx_1^n)} - \beta_{\system{S}_n}^{(\vx_1^n)}}_1
< \varepsilon \quad \forall n\geqslant N,\ \forall \vx_1^n\in\set{X}^n,
\end{equation}
where
\begin{align}
\alpha_{\system{S}_n}^{(\vx_1^n)} &\defeq 
    \sum_{\vy_1^n} \operator{N}^{y_n|x_n} \circ \cdots \circ 
    \operator{N}^{y_1|x_1}(\alpha_{\system{S}_0}),\\
\beta_{\system{S}_n}^{(\vx_1^n)} &\defeq 
    \sum_{\vy_1^n} \operator{N}^{y_n|x_n} \circ \cdots \circ
    \operator{N}^{y_1|x_1}(\beta_{\system{S}_0}),
\end{align}
and where $\norm{A}_1$ is the trace distance for an operator $A$ on 
$\hilbert_\system{S}$, \ie, $\norm{A}_1\defeq\frac{1}{2}\tr\sqrt{A^\Herm A}$.
\end{definition}
\begin{theorem}\label{thm:quantum:indecom}\hspace{-5pt}\footnote{
    A similar result regarding indecomposable/forgetful quantum channel
    with memory can be found in~\cite{kretschmann2005quantum}
    and~\cite{bowen2005bounds}.}
The information rate of an indecomposable CC-QSC with an i.i.d. input process
is independent of the initial density operator. Namely, if 
$\{\operator{N}^{y|x}\}_{x,y}$ is indecomposable, then
\[
  \infoRate^{(n)}(Q,\{\operator{N}^{y|x}\}_{x,y},\alpha_{\system{S}_0})
- \infoRate^{(n)}(Q,\{\operator{N}^{y|x}\}_{x,y},\beta_{\system{S}_0})
\stackrel{n\rightarrow\infty}{\longrightarrow} 0,
\]
for any initial density operators
$\alpha_{\system{S}_0},\ \beta_{\system{S}_0}
\in\DensOp(\hilbert_{\system{S}_0})$.
\end{theorem}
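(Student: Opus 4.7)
The plan is to bound $\bigl|\infoRate^{(n)}(Q,\{\operator{N}^{y|x}\}_{x,y},\alpha_{\system{S}_0}) - \infoRate^{(n)}(Q,\{\operator{N}^{y|x}\}_{x,y},\beta_{\system{S}_0})\bigr|$ directly and show that it tends to zero. Writing $\mutualInfo(\rv{X}_1^n;\rv{Y}_1^n) = \entropy(\rv{Y}_1^n) - \entropy(\rv{Y}_1^n|\rv{X}_1^n)$, it suffices to control the two entropy differences separately. The strategy is a burn-in argument: introduce a length $N=N(n)$ to be tuned at the end. Since conditioning reduces entropy and any entropy on an alphabet $\set{Y}^N$ is at most $N\log|\set{Y}|$, one has the sandwich
\begin{equation*}
\entropy_\rho(\rv{Y}_{N+1}^n|\rv{X}_1^n) \leq \entropy_\rho(\rv{Y}_1^n|\rv{X}_1^n) \leq N\log|\set{Y}| + \entropy_\rho(\rv{Y}_{N+1}^n|\rv{X}_1^n),
\end{equation*}
and similarly for $\entropy_\rho(\rv{Y}_1^n)$, reducing the problem to controlling the two late-time entropies.

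The key observation will be that the marginal distribution $P^\rho(\rv{Y}_{N+1}^n|\vx_1^n)$ depends on the initial operator only through the forward-evolved state $\rho_{\system{S}_N}^{(\vx_1^N)}\defeq\sum_{\vy_1^N}\operator{N}^{y_N|x_N}\circ\cdots\circ\operator{N}^{y_1|x_1}(\rho_{\system{S}_0})$. Iterated trace-norm contractivity of quantum instruments (for any instrument $\{\operator{E}^y\}_y$ one has $\sum_y\|\operator{E}^y(\rho-\sigma)\|_1\leq\|\rho-\sigma\|_1$, proved by splitting $\rho-\sigma$ into positive and negative parts) applied over the last $n-N$ uses then gives
\begin{equation*}
\sum_{\vy_{N+1}^n}\bigl|P^\alpha(\vy_{N+1}^n|\vx_1^n) - P^\beta(\vy_{N+1}^n|\vx_1^n)\bigr| \leq \bigl\|\alpha_{\system{S}_N}^{(\vx_1^N)} - \beta_{\system{S}_N}^{(\vx_1^N)}\bigr\|_1 < \epsilon_N,
\end{equation*}
uniformly in $\vx_1^n$, where the final step is exactly the indecomposability hypothesis. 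Feeding this into the Fannes--Audenaert continuity inequality for the Shannon entropy on the alphabet $\set{Y}^{n-N}$ yields
\begin{equation*}
\bigl|\entropy_\alpha(\rv{Y}_{N+1}^n|\rv{X}_1^n)-\entropy_\beta(\rv{Y}_{N+1}^n|\rv{X}_1^n)\bigr| \leq \epsilon_N(n-N)\log|\set{Y}| + h_2(\epsilon_N/2)
\end{equation*}
after averaging over $\rv{X}_1^n$; a further triangle inequality over $\vx_1^n$ with respect to $Q^{\tensor n}$ transfers the same bound to $|\entropy_\alpha(\rv{Y}_{N+1}^n)-\entropy_\beta(\rv{Y}_{N+1}^n)|$.

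Combining the sandwich with the continuity bound and dividing by $n$, the difference $|\infoRate^{(n)}(\alpha_{\system{S}_0})-\infoRate^{(n)}(\beta_{\system{S}_0})|$ is at most $\tfrac{2N\log|\set{Y}|}{n} + 2\epsilon_N\log|\set{Y}| + O(1/n)$. Choosing $N(n)$ so that $N/n\to 0$ and $\epsilon_{N(n)}\to 0$---always possible under indecomposability, e.g.\ $N(n)=\lfloor\sqrt{n}\rfloor$ for any rate $\epsilon_N\to 0$---drives both dominant terms to zero. The main delicate point is the second display: the Fannes--Audenaert bound carries a factor $(n-N)\log|\set{Y}|$ that is linear in the horizon, and only the fact that indecomposability controls the \emph{forward-evolved} (output-averaged) operator $\rho_{\system{S}_N}^{(\vx_1^N)}$---rather than the posterior state conditioned on specific observed outputs, which generically remains initial-state-dependent at all times---makes the offsetting $\epsilon_N$-factor available at all to cancel this horizon factor.
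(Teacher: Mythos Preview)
Your proof is correct and takes a genuinely more elementary route than the paper's. The paper embeds the classical input/output processes into classical--quantum systems $\system{A}_1^n,\system{B}_1^n$, invokes the data-processing inequality for quantum mutual information together with subadditivity to obtain the correlation bounds $\qmutualInfo(\system{A}_1^N\system{B}_1^N;\system{A}_{N+1}^n\system{B}_{N+1}^n)\leq 2\qEntropy(\system{S}_N)$ and $\qmutualInfo(\system{B}_1^N;\system{B}_{N+1}^n)\leq 2\qEntropy(\system{S}_N)$, and only then passes to the tail entropies and applies Fannes' inequality with contractivity of the trace distance. You bypass the quantum embedding entirely: your sandwich $\entropy_\rho(\rv{Y}_{N+1}^n|\rv{X}_1^n)\leq\entropy_\rho(\rv{Y}_1^n|\rv{X}_1^n)\leq N\log|\set{Y}|+\entropy_\rho(\rv{Y}_{N+1}^n|\rv{X}_1^n)$ plays the same structural role as the paper's $2\qEntropy(\system{S}_N)$ bound but needs only chain-rule and alphabet-size estimates, and your direct use of instrument contractivity to bound the total-variation distance of the tail output distributions is exactly the classical shadow of the paper's trace-distance contractivity step. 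What the paper's approach buys is a tighter accounting of the burn-in cost (through $\qEntropy(\system{S}_N)\leq\log d$ rather than $N\log|\set{Y}|$) and a closer parallel to the Bowen--Devetak--Mancini argument for fully quantum forgetful channels; what your approach buys is that it never leaves the classical probability world, which is natural given that the information rate in question is a classical mutual information.
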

In the proof below, we follow a similar idea as
in~\cite{gallager1968information} for indecomposable FSMCs, and as that
in~\cite{bowen2005bounds} for indecomposable quantum channels with memory.
\begin{proof}
Let $\system{A}$ and $\system{B}$ be quantum systems described by
Hilbert spaces $\hilbert_\system{A}$ and $\hilbert_\system{B}$, respectively,
where $\{\bra{x}\}_{x\in\set{X}}$ and $\{\bra{y}\}_{y\in\set{Y}}$ are
orthonormal bases of $\hilbert_\system{A}$ and $\hilbert_\system{B}$,
respectively.
Let $\system{A}_1^n$ and $\system{B}_1^n$ be $n$ copies of
$\system{A}$ and $\system{B}$, respectively. Let $\rho_{\system{S}_0}$ be some
initial density operator; and let the joint density operator on system 
$\system{A}_1^n\system{B}_1^n$ be 
\[
\rho_{\system{A}_1^n\system{B}_1^n} \!\defeq\!
\sum_{\vx_1^n} \!Q(\vx_1^n)\cdot\braket{\vx_1^n} \tensor
\sum_{\vy_1^n} \!\tr\!\left(\!
    \operator{N}^{\vy_1^n|\vx_1^n}(\rho_{\system{S}_0})\!\right)
    \cdot \braket{\vy_1^n},
\]
where $\operator{N}^{\vy_1^n|\vx_1^n}\defeq\operator{N}^{y_n|x_n}\circ
\cdots\circ\operator{N}^{y_1|x_1}$. In this case, it is not hard to see that
\[
\mutualInfo(\rv{X}_1^n;\rv{Y}_1^n)[\rho_{\system{S}_0}] = 
    \qmutualInfo(\system{A}_1^n;\system{B}_1^n)[\rho_{\system{S}_0}].
\]
In fact, one can easily check that
\begin{align*}
\qEntropy(\system{A}_1^n) &= \entropy(\rv{X}_1^n),\\
\qEntropy(\system{B}_1^n) &= \entropy(\rv{Y}_1^n),\\
\qEntropy(\system{A}_1^n,\system{B}_1^n) &= \entropy(\rv{X}_1^n,\rv{Y}_1^n).
\end{align*}
In particular, $\qEntropy(\system{A}_1^n)$ is independent of the initial
density operator $\rho_{\system{S}_0}$. We also claim that, for each 
$\rho_{\system{S}_0}\in\DensOp(\hilbert_{\system{S}_0})$ and 
positive integer $N < n$, 
\begin{align}
\label{eq:indecomposable:AB}
\qmutualInfo(\system{A}_1^N\system{B}_1^N;\system{A}_{N+1}^n\system{B}_{N+1}^n) 
    &\leqslant 2 \qEntropy(\system{S}_N),\\
\label{eq:indecomposable:B}
\qmutualInfo(\system{B}_1^N;\system{B}_{N+1}^n)
    &\leqslant 2 \qEntropy(\system{S}_N),
\end{align}
where the density operator for $\system{S}_N$ is defined as
(depending on $\rho_{\system{S}_0}$)
\[
\rho_{\system{S}_N} \defeq \sum_{\vx_1^N} Q(\vx_1^N)\cdot
\sum_{\vy_1^N} \operator{N}^{\vy_1^N|\vx_1^N}(\rho_{\system{S}_0}).
\]
\par
\textbf{Proof of~\eqref{eq:indecomposable:AB}: }
We define a class of CPTP maps 
$\{\Phi_a^b: \DensOp(\hilbert_{\system{S}_a})\rightarrow
\DensOp(\hilbert_{\system{A}_a^b \system{B}_a^b})\}_{a<b\in\mathbb{N}}$ as
\[
\Phi_a^b\!\!: \rho_{\system{S}_a} \!\!\mapsto\!\!
    \sum_{\vx_a^b}\! Q(\vx_a^b)\cdot \bra{\vx_a^b}\hspace{-3pt}\ket{\vx_a^b}
    \tensor\sum_{\vy_a^b} \!\tr\!\left(\!
        \operator{N}^{\vy_a^b|\vx_a^b}(\rho_{\system{S}_a})\!\right)
    \cdot \bra{\vy_a^b}\hspace{-3pt}\ket{\vy_a^b}\!.
\]
Since the input process $Q$ is i.i.d., we can rewrite
$\rho_{\system{A}_1^n\system{B}_1^n}$, for each positive integer $N<n$, as
\[
\rho_{\system{A}_1^n\system{B}_1^n} = 
\left(I_{\system{A}_1^N\system{B}_1^N}\otimes\Phi_{N+1}^n\right)
\left(\rho_{\system{A}_1^N\system{B}_1^N\system{S}_N}\right),
\]
where
\[
\rho_{\system{A}_1^N\!\system{B}_1^N\!\system{S}_N} \!\!\!\defeq\!
\sum_{\vx_1^N}\! Q(\vx_1^N) 
\bra{\vx_1^N\!}\hspace{-3pt}\ket{\vx_1^N\!}
\tensor
\sum_{\vy_1^N} \operator{N}^{\vy_1^N|\vx_1^N}(\rho_{\system{S}_0}\!)
               \bra{\vy_1^N\!}\hspace{-3pt}\ket{\vy_1^N\!}\!.
\]
Hence, by data processing inequality for quantum mutual
information (see \eg,~\cite[Theorem~11.9.4]{wilde2017quantum}),
one must have 
\[
\qmutualInfo(\system{A}_1^N\system{B}_1^N;\system{A}_{N+1}^n\system{B}_{N+1}^n)
\leqslant \qmutualInfo(\system{A}_1^N\system{B}_1^N;\system{S}_N).
\]
Additionally, by subadditivity of joint entropy, we have
\begin{align*}
\qmutualInfo(\system{A}_1^N\system{B}_1^N;\system{S}_N)
\defeq \qEntropy(\system{A}_1^N\system{B}_1^N) + \qEntropy(\system{S}_N) -
        \qEntropy(\system{A}_1^N\system{B}_1^N\system{S}_N)\\
\begin{aligned}
&\leqslant \qEntropy(\system{A}_1^N\system{B}_1^N) \!+\! 
           \qEntropy(\system{S}_N) \!-\!
           \abs{\qEntropy(\system{A}_1^N\system{B}_1^N) \!-\!
                \qEntropy(\system{S}_N)}\\
&\leqslant 2\qEntropy(\system{S}_N).
\end{aligned}
\end{align*}
Combining the above two inequalities, we have
proven~\eqref{eq:indecomposable:AB}.
\par
\textbf{Proof of~\eqref{eq:indecomposable:B}: }
We follow the same approach as above for proving~\eqref{eq:indecomposable:AB}
by defining another class of CPTP maps
$\{\Psi_a^b: \DensOp(\hilbert_{\system{S}_a})\rightarrow
\DensOp(\hilbert_{\system{B}_a^b})\}_{a<b\in\mathbb{N}}$ as
\[
\Psi_a^b: \rho_{\system{S}_a} \mapsto 
\sum_{\vx_a^b} Q(\vx_a^b)\cdot 
\sum_{\vy_a^b} \tr\left(
    \operator{N}^{\vy_a^b|\vx_a^b}(\rho_{\system{S}_a})\right)
\cdot \braket{\vy_a^b}.
\]
We omit the detailed derivation of~\eqref{eq:indecomposable:B}.
\par
We now return to the main proof. 
Given the initial density operators $\alpha_{\system{S}_0}$, and 
$\beta_{\system{S}_0}$, we define $\alpha_{\system{A}_1^n\system{B}_1^n}$,
$\beta_{\system{A}_1^n\system{B}_1^n}$ and $\alpha_{\system{S}_N}$,
$\beta_{\system{S}_N}$ in a similar fashion as we have defined 
$\rho_{\system{A}_1^n\system{B}_1^n}$ and $\rho_{\system{S}_N}$ based on
$\rho_{\system{S}_0}$. In this case, one obtains 
\begin{align}
\nonumber
&\bigabs{\qEntropy(\alpha_{\system{A}_1^n\system{B}_1^n})
\!-\!\qEntropy(\beta_{\system{A}_1^n\system{B}_1^n})}
\!-\!\bigabs{
\qEntropy(\alpha_{\system{A}_{N\!+\!1}^n\system{B}_{N\!+\!1}^n})
\!-\!\qEntropy(\beta_{\system{A}_{N\!+\!1}^n\system{B}_{N\!+\!1}^n})}\\
\nonumber
&\overset{(\mathrm{a})}{\leqslant}\ 
\begin{aligned}[t]
&\bigabs{\qEntropy(\alpha_{\system{A}_1^N\system{B}_1^N})
       -\qEntropy(\beta_{\system{A}_1^N\system{B}_1^N})} + \\
&\bigabs{\qmutualInfo(\system{A}_1^N\system{B}_1^N;
    \system{A}_{N\!+\!1}^n\system{B}_{N\!+\!1}^n)
    [\alpha_{\system{A}_1^n\system{B}_1^n}]\\
&\hspace{90pt}-\qmutualInfo(\system{A}_1^N\system{B}_1^N;
    \system{A}_{N\!+\!1}^n\system{B}_{N\!+\!1}^n)
    [\beta_{\system{A}_1^n\system{B}_1^n}]}
\end{aligned}\\
\label{eq:tail:1}
&\overset{(\mathrm{b})}{\leqslant}\ 
N\cdot\log(\dim{\hilbert_\system{AB}}) +
    2\cdot\max\left\{\qEntropy(\alpha_{\system{S}_N}),
    \qEntropy(\beta_{\system{S}_N})\right\},
\end{align}
where we have used the triangle inequality in step (a), and a basic property
of von Neumann entropy~\cite[Theorem~11.8]{nielsen2011quantum}
and~\eqref{eq:indecomposable:AB} in step (b).
Similarly, using~\eqref{eq:indecomposable:B}, one can prove
\begin{equation}\label{eq:tail:2}
\begin{aligned}
&\bigabs{\qEntropy(\alpha_{\system{B}_1^n})
    -\qEntropy(\beta_{\system{B}_1^n})}
-\bigabs{\qEntropy(\alpha_{\system{B}_{N+1}^n})
    -\qEntropy(\beta_{\system{B}_{N+1}^n})}\\
&\leqslant N\cdot\log(\dim{\hilbert_\system{B}}) +
    2\cdot\max\left\{\qEntropy(\alpha_{\system{S}_N}),
        \qEntropy(\beta_{\system{S}_N})\right\}.
\end{aligned}
\end{equation}
By assumption, there exists some positive integer $d$ such that
$\max\{\dim{\hilbert_\system{A}},\dim{\hilbert_\system{B}},
\dim{\hilbert_\system{S}}\} \leqslant d$. Thus, we have
\begin{align*}
&\frac{1}{n}\bigabs{
    \mutualInfo(\rv{X}_1^n;\rv{Y}_1^n)[\alpha_{\system{S}_0}]
   -\mutualInfo(\rv{X}_1^n;\rv{Y}_1^n)[\beta_{\system{S}_0}]}\\
=\ &\frac{1}{n}\bigabs{
    \qmutualInfo(\system{A}_1^n;\system{B}_1^n)[\alpha_{\system{S}_0}]
   -\qmutualInfo(\system{A}_1^n;\system{B}_1^n)[\beta_{\system{S}_0}]}\\
=\ &\frac{1}{n}\bigabs{
    \left(\qEntropy(\alpha_{\system{B}_1^n})
         -\qEntropy(\alpha_{\system{A}_1^n\system{B}_1^n})\right)
   -\left(\qEntropy(\beta_{\system{B}_1^n})
         -\qEntropy(\beta_{\system{A}_1^n\system{B}_1^n})\right)}\\
\overset{(\mathrm{c})}{\leqslant}\ & 
    \frac{1}{n} \bigabs{\qEntropy(\alpha_{\system{B}_1^n})
                       -\qEntropy(\beta_{\system{B}_1^n})}
   +\frac{1}{n}\bigabs{\qEntropy(\alpha_{\system{A}_1^n\system{B}_1^n})
                      -\qEntropy(\beta_{\system{A}_1^n\system{B}_1^n})}\\
\overset{(\mathrm{d})}{\leqslant}\ &
    \frac{3N+4}{n}\cdot\log{d}
    \begin{aligned}[t]
     &+\frac{1}{n}\bigabs{\qEntropy(\alpha_{\system{B}_{N+1}^n})
                      -\qEntropy(\beta_{\system{B}_{N+1}^n})}\\
     &+\frac{1}{n}\bigabs{\qEntropy(\alpha_{\system{A}_{N+1}^n\system{B}_{N+1}^n})
                      -\qEntropy(\beta_{\system{A}_{N+1}^n\system{B}_{N+1}^n})}
    \end{aligned}\\
=\ & \frac{3N+4}{n}\cdot\log{d}
    \begin{aligned}[t]
     &+\frac{1}{n}\bigabs{\qEntropy(\Psi_{N+1}^n(\alpha_{\system{S}_N}))
                      -\qEntropy(\Psi_{N+1}^n(\beta_{\system{S}_N}))}\\
     &+\frac{1}{n}\bigabs{\qEntropy(\Phi_{N+1}^n(\alpha_{\system{S}_N}))
                      -\qEntropy(\Phi_{N+1}^n(\beta_{\system{S}_N}))},
    \end{aligned}
\end{align*}
where we have used the triangle inequality in step (c), 
and~\cite[Theorem~11.8]{nielsen2011quantum},~\eqref{eq:tail:1},~\eqref{eq:tail:2} in step (d).
Using a loose variant of Fannes' inequality~\cite{fannes1973continuity}\footnote{
Namely, we used the inequality $\abs{\qEntropy(\rho)-\qEntropy(\sigma)}\leqslant
\log{\dim}\cdot\norm{\rho-\sigma}_1+e^{-1}$. Note that tighter variants of
Fannes' inequality exist, but the above inequality is good enough to prove
the desired result.}, we have
\begin{align*}
&\begin{aligned}
\bigabs{\qEntropy(\Psi_{N+1}^n(\alpha_{\system{S}_N}))
       -\qEntropy(\Psi_{N+1}^n(\beta_{\system{S}_N}))} \leqslant
    (n-N)\cdot\log{d}\ \cdot\\
    \norm{\Psi_{N+1}^n(\alpha_{\system{S}_N})
                                -\Psi_{N+1}^n(\beta_{\system{S}_N})}_1
    +e^{-1},
\end{aligned}\\
&\begin{aligned}
\bigabs{\qEntropy(\Phi_{N+1}^n(\alpha_{\system{S}_N}))
       -\qEntropy(\Phi_{N+1}^n(\beta_{\system{S}_N}))} \leqslant
    2\cdot(n-N)\cdot\log{d}\ \cdot\\
    \norm{\Phi_{N+1}^n(\alpha_{\system{S}_N})
                                      -\Phi_{N+1}^n(\beta_{\system{S}_N})}_1
   +e^{-1}.
\end{aligned}
\end{align*}
Moreover, by the contractivity of the trace distance, we have,
\begin{align*}
\norm{\Psi_{N+1}^n(\alpha_{\system{S}_N})-\Psi_{N+1}^n(\beta_{\system{S}_N})}_1
&\leqslant \norm{\alpha_{\system{S}_N}-\beta_{\system{S}_N}}_1,\\
\norm{\Phi_{N+1}^n(\alpha_{\system{S}_N})-\Phi_{N+1}^n(\beta_{\system{S}_N})}_1
&\leqslant \norm{\alpha_{\system{S}_N}-\beta_{\system{S}_N}}_1.
\end{align*}
This allows us to bound the difference of the information rates by
\[
\begin{aligned}
\frac{1}{n}\bigabs{\mutualInfo(\rv{X}_1^n;\rv{Y}_1^n)[\alpha_{\system{S}_0}]
                  -\mutualInfo(\rv{X}_1^n;\rv{Y}_1^n)[\beta_{\system{S}_0}]}
\leqslant \frac{3N+4}{n}\cdot\log{d}\\
+\frac{3(n-N)}{n}\cdot\log{d}
    \cdot\norm{\alpha_{\system{S}_n}-\beta_{\system{S}_n}}_1
+\frac{2}{n\cdot e}.
\end{aligned}
\]
Finally, because the CC-QSC is indecomposable, for any $\varepsilon>0$, 
we can choose $N$ large enough such that
\[
\norm{\alpha_{\system{S}_N}-\beta_{\system{S}_N}}_1 < 
\frac{\varepsilon}{6\cdot\log{d}},
\]
and then choose an integer $M>N$ such that
\[
\frac{3N+4}{M}\cdot\log{d} + \frac{2}{M\cdot e}< \frac{\varepsilon}{2}.
\]
This will ensure that for any $n>M$, we have
\[
\frac{3N+4}{n}\cdot\log{d}
         +\frac{3(n-N)}{n}\cdot\log{d}
          \cdot\norm{\alpha_{\system{S}_n}-\beta_{\system{S}_n}}_1
         +\frac{2}{n\cdot e}
< \varepsilon,
\]
which concludes the proof of the theorem.
\end{proof}
\subsection{Estimation of the Information Rate}
The development in this section is very similar to the development in
Section~\ref{sec:FSMC:IR}. In particular, we follow the same approach as in 
Eqs.~\eqref{eq:def:fsmc:ir:2}--\eqref{eq:FSMC:ir:estimate:1}. This similarity
stems from the similarity of the NFGs in Figs.~\ref{fig:FMSC:high:level:1}
and~\ref{fig:qsc:bcjr}, and highlights one of the benefits of the
factor-graph approach that we take to estimate information rates of
quantum channels with memory.
\par 
We make the following assumptions.
\begin{itemize}
\item As already mentioned, the derivations in this paper are for the case
  where the input process $\rv{X}_1^n=(\rv{X}_1,\ldots,\rv{X}_n)$ is an i.i.d.
  process. The results can be generalized to other stationary ergodic input
  processes that can be represented by a finite-state-machine source
  (FSMS). Technically, this is done by defining a new state that combines the
  FSMS state and the channel state.
\item We assume that the corresponding quantum-state channel 
  $\{\operator{N}^{y|x}\}_{x\in\set{X},y\in\set{Y}}$ is finite-dimensional
  and indecomposable. We also assume it can be represented by some functions
  $\{W^{y|x}\}_{x,y}$ as defined
  in~\eqref{eq:def:channel:function:representation}.
\end{itemize}
The major difference compared with Section~\ref{sec:FSMC:IR} is the conditional
pmf $P_{\rv{Y}_1^n|\rv{X}_1^n;\system{S}_0}$, and thus the joint 
pmf $P_{\rv{Y}_1^n,\rv{X}_1^n|\system{S}_0}$ as specified
in~\eqref{eq:joint:4} and~\eqref{eq:joint:5}, respectively.
In this case, in order to compute $-\frac{1}{n}\log{P_{\rv{Y}_1^n}(\cvy_1^n)}$
and $-\frac{1}{n}\log{P_{\rv{X}_1^n\rv{Y}_1^n}(\cvx_1^n,\cvy_1^n)}$ using
a similar method as in Section~\ref{sec:FSMC:IR}, we consider the state metrics
$\{\sigmaY_\ell\}_{\ell=1}^n$ and $\{\sigmaXY_\ell\}_{\ell=1}^n$
(which are operators on $\hilbert_{\system{S}_\ell}$ for each $\ell$) 
defined w.r.t. $\cvy_1^n$ and w.r.t. $\cvx_1^n$ and $\cvy_1^n$, respectively, as
\begin{align}
\sigmaY_\ell &\defeq\sum_{\vx_1^\ell} Q^{(\ell)}(\vx_1^\ell)\cdot 
\operator{N}^{\cy_n|x_n}\circ\cdots\circ\operator{N}^{\cy_1|x_1}
(\rho_{\system{S}_0}),\\
\sigmaXY_{\ell} &\defeq 
\operator{N}^{\cy_n|\cx_n}\circ\cdots\circ\operator{N}^{\cy_1|\cx_1}
(\rho_{\system{S}_0}).
\end{align}
In this case, we have
$P_{\rv{Y}_1^n}(\cvy_1^n) = \tr(\sigmaY_n)$, and
$P_{\rv{X}_1^n\rv{Y}_1^n}(\cvx_1^n,\cvy_1^n) = \tr(\sigmaXY_n)$.
Notice that $\{\sigmaY_\ell\}_{\ell}$ and 
$\{\sigmaXY_\ell\}_{\ell}$ can be computed iteratively as
\begin{align}
\label{eq:recursive:quantum:state:metric:Y:1}
[\sigmaY_{\ell}] &= \sum_{x_\ell} Q(x_\ell)\cdot 
    [W^{y|x}] \cdot [\sigmaY_{\ell-1}],\\
\label{eq:def:quantum:channel:state:metric:XY:1}
[\sigmaXY_{\ell}] &= [W^{y|x}] \cdot [\sigmaXY_{\ell-1}],
\end{align}
where we treat $[\sigma^\rv{Y}_{\ell}]$ and $[\sigma^\rv{XY}_{\ell}]$
as length-$d^2$ vectors indexed by $(s,\tilde{s})\in\set{S}^2$ in the
above two equations. (See~\eqref{eq:qsc:matrix:1} 
and~\eqref{eq:joint:4} for notations.) Moreover, we can also 
introduce normalizing coefficients $\{\lambda_\ell^\rv{Y}\}_{\ell}$ and
$\{\lambda_\ell^\rv{XY}\}_{\ell}$, similar 
to~\eqref{eq:recursive:state:metric:Y:2}, for the sake of numerical
stability. In the latter case, we have iterative updating rules
\begin{align}
\label{eq:recursive:quantum:state:metric:Y:2}
[\bsigmaY_\ell] &= \frac{1}{\lambdaY_\ell}\cdot\sum_{x_\ell} Q(x_\ell)\cdot 
    [W^{y|x}] \cdot [\bsigmaY_{\ell-1}],\\
\label{eq:def:quantum:channel:state:metric:XY:2}
[\bsigmaXY_\ell] &= \frac{1}{\lambdaXY_\ell}\cdot [W^{y|x}] 
    \cdot [\bsigmaXY_{\ell-1}],
\end{align}
where the scaling factors $\lambdaY_\ell>0$ and $\lambdaXY_\ell>0$ are chosen
such that $\tr(\bsigmaY_\ell)=1$ and $\tr(\bsigmaXY_\ell)=1$, respectively.
In addition, one can verify that 
$P_{\rv{Y}_1^n}(\cvy_1^n) = \prod_{\ell=1}^n \lambdaY_\ell$, and
$P_{\rv{X}_1^n\rv{Y}_1^n}(\cvx_1^n,\cvy_1^n) = \prod_{\ell=1}^n\lambdaXY_\ell$.
\par 
The above discussion is summarized as Algorithm~\ref{alg:SPA:2}.
The computations corresponding to Line 3, 5--9 and 12--16 are
visualized in
Figs.~\ref{fig:QFSM:channel:simulation:Y},~\ref{fig:QFSM:estimate:hY},
and~\ref{fig:QFSM:estimate:hXY} in the Appendix, respectively.
\begin{algorithm}[t]
\caption{Estimating the information rate of a CC-QSC}
\begin{algorithmic}[1] 
\Require{indecomposable CC-QSC 
         $\{\operator{N}^{y|x}\}_{x\in\set{X},y\in\set{Y}}$, which
         can be represented by functions $\{W^{y|x}\}_{x,y}$,
         input~distribution~$Q$,
         positive~integer~$n$ large enough.}
\Ensure{$\infoRate^{(n)}(Q,\{\operator{N}^{y|x}\}_{x,y}) \approx
         \entropy(\rv{X})+\hat\entropicRate(\rv{Y})
         -\hat\entropicRate(\rv{X},\rv{Y})$.}
\State Initialize the memory density operator
       $\rho_{\system{S}_0}\gets\braket{0_\system{S}}$
\State Generate an input sequence $\cvx_1^n \sim Q^{\tensor n}$
\State Generate a corresponding output sequence $\cvy_1^n$
\State $\bsigmaY_0\gets\rho_{\system{S}_0}$
\ForEach{$\ell=1,\ldots,n$}
\State $[\sigmaY_\ell] \gets \sum_{x_\ell} Q(x_\ell)\cdot 
       [W^{\cy_{\ell}|x}] \cdot [\bsigmaY_{\ell-1}]$
\State $\lambdaY_\ell \gets \tr(\sigmaY_\ell)$
\State $\bsigmaY_{\ell} \gets \sigmaY_\ell/\lambdaY_\ell$
\EndFor
\State $\hat\entropicRate(\rv{Y})\gets
        -\frac{1}{n} \sum_{\ell=1}^n \log(\lambdaY_{\ell})$
\State $\bsigmaXY_0\gets\rho_{\system{S}_0}$
\ForEach{$\ell=1,\ldots,n$}
\State $[\sigmaXY_\ell] \gets [W^{\cy_\ell|\cx_\ell}] \cdot
        [\bsigmaXY_{\ell-1}]$
\State $\lambdaXY_\ell \gets \tr(\sigmaXY_\ell)$
\State $\bsigmaXY_{\ell} \gets \sigmaXY_\ell/\lambdaXY_\ell$
\EndFor
\State $\hat\entropicRate(\rv{X},\rv{Y})\gets
        -\frac{1}{n} \sum_{\ell=1}^n \log(\lambdaXY_{\ell})$
\State $\entropy(\rv{X}) \gets -\sum_{x} Q(x) \log{Q(x)}$
\State Estimate $\infoRate^{(n)}(Q,\{\operator{N}^{y|x}\}_{x,y})$ as 
       $\entropy(\rv{X}) +\hat\entropicRate(\rv{Y}) - 
       \hat\entropicRate(\rv{X},\rv{Y})$.
\end{algorithmic}
\label{alg:SPA:2}
\end{algorithm}
\section{Information rate upper/lower bounds\\and their Optimization}
\label{sec:5:UBLB}
\bigformulatop{77}{
\begin{align}
\label{eq:IRUB:explicit}
\IRUB^{(n)}_W(\hat{W}) &= \frac{1}{n}\left\langle
\log\frac{\tr\left([W^{\rv{Y}_n|\rv{X}_n}]\cdots[W^{\rv{Y}_1|\rv{X}_1}]
             \cdot[\rho_{\system{S}_0}]\right)}
         {\sum_{\vx_1^n} Q^{(n)}(\vx_1^n) \cdot
          \tr\left([\hat{W}^{\rv{Y}_n|x_n}]\cdots[\hat{W}^{\rv{Y}_1|x_1}]
             \cdot[\rho_{\hat{\system{S}}_0}]\right)}
\right\rangle_{\!\!\!\rv{X}_1^n\rv{Y}_1^n},\\
\label{eq:IRLB:explicit}
\IRLB_{W}^{(n)}(\hat{W}) &= \frac{1}{n}\left\langle
\log\frac{\tr\left([\hat{W}^{\rv{Y}_n|\rv{X}_n}]\cdots
         [\hat{W}^{\rv{Y}_1|\rv{X}_1}]\cdot[\rho_{\hat{\system{S}}_0}]\right)}
         {\sum_{\vx_1^n} Q^{(n)}(\vx_1^n)\cdot
          \tr\left([\hat{W}^{\rv{Y}_n|x_n}]\cdots[\hat{W}^{\rv{Y}_1|x_1}]
             \cdot[\rho_{\hat{\system{S}}_0}]\right)}
\right\rangle_{\!\!\!\rv{X}_1^n\rv{Y}_1^n},\\
\label{eq:DIFF:explicit}
\Delta_{W}^{(n)}(\hat{W}) &= \frac{1}{n}\left\langle
\log\frac{\tr\left([W^{\rv{Y}_n|\rv{X}_n}]\cdots[W^{\rv{Y}_1|\rv{X}_1}]
             \cdot[\rho_{\system{S}_0}]\right)}
         {\tr\left([\hat{W}^{\rv{Y}_n|\rv{X}_n}]\cdots
            [\hat{W}^{\rv{Y}_1|\rv{X}_1}]
            \cdot[\rho_{\hat{\system{S}}_0}]\right)}
\right\rangle_{\!\!\!\rv{X}_1^n\rv{Y}_1^n},
\end{align}
\begin{align}
\left.\frac{\D}{\D t}\right|_{t=0} \hspace{-15pt} \IRUB_{W}^{(n)}(\hat{W}+tH)
&\propto-\frac{1}{n}\left\langle
    \sum_{k=1}^n \sum_{\vx_1^n} Q^{(n)}(\vx_1^n)\cdot
    \tr\left([\hat{W}^{\rv{Y}_n|x_n}]\cdots
             [\hat{W}^{\rv{Y}_{k\!+\!1}|x_{k\!+\!1}}]
             [H^{\rv{Y}_k|x_k}]
             [\hat{W}^{\rv{Y}_{k\!-\!1}|x_{k\!-\!1}}]\cdots
             [\hat{W}^{\rv{Y}_1|x_1}]\cdot
             [\rho_{\hat{\system{S}}_0}]\right)
    \right\rangle_{\!\!\!\rv{Y}_1^n} \nonumber\\
\label{eq:IRUB:dev:2}
&=-\frac{1}{n}
\sum_{\vx_1^n,\vy_1^n} P_{\rv{X}_1^n,\rv{Y}_1^n|\system{S}_0}
                         (\vx_1^n,\vy_1^n|\rho_{\system{S}_0})
\cdot\sum_{k}\sum_{s',s,\tilde{s}',\tilde{s}}
    \rvec{\varrho}_{\hat{\system{S}}_{k\!-\!1}}^{(\vy_1^{k\!-\!1})}(s,\tilde{s})
    \cdot H^{y_k|x_k}(s',s,\tilde{s}',\tilde{s})\cdot
    \lvec{\varrho}_{\hat{\system{S}}_k}^{(\vy_{k\!+\!1}^n)}(s',\tilde{s}'),\\
\left.\frac{\D}{\D t}\right|_{t=0} \hspace{-15pt} \IRLB_{W}^{(n)}(\hat{W}+tH)
&\propto\!\!\begin{aligned}[t] 
    &-\frac{1}{n}\left\langle
    \sum_{k=1}^n \sum_{\vx_1^n} Q^{(n)}(\vx_1^n)\cdot
    \tr\left([\hat{W}^{\rv{Y}_n|x_n}]\cdots
             [\hat{W}^{\rv{Y}_{k\!+\!1}|x_{k\!+\!1}}]
             [H^{\rv{Y}_k|x_k}]
             [\hat{W}^{\rv{Y}_{k\!-\!1}|x_{k\!-\!1}}]
             \cdots[\hat{W}^{\rv{Y}_1|x_1}]\cdot
             [\rho_{\hat{\system{S}}_0}]\right)
    \right\rangle_{\!\!\!\rv{Y}_1^n}\\
    &+\frac{1}{n}\left\langle
    \sum_{k=1}^n \tr\left([\hat{W}^{\rv{Y}_n|\rv{X}_n}]\cdots
                          [\hat{W}^{\rv{Y}_{k\!+\!1}|\rv{X}_{k\!+\!1}}]
                          [H^{\rv{Y}_k|\rv{X}_k}]
                          [\hat{W}^{\rv{Y}_{k\!-\!1}|\rv{X}_{k\!-\!1}}]\cdots
                          [\hat{W}^{\rv{Y}_1|\rv{X}_1}]\cdot
                          [\rho_{\hat{\system{S}}_0}]\right)
    \right\rangle_{\!\!\!\rv{X}_1^n\rv{Y}_1^n}
  \end{aligned}\nonumber\\
\label{eq:IRLB:dev:2}
&=\!\!\begin{aligned}[t] 
    &-\frac{1}{n}
    \sum_{\vx_1^n,\vy_1^n} P_{\rv{X}_1^n,\rv{Y}_1^n|\system{S}_0}
                           (\vx_1^n,\vy_1^n|\rho_{\system{S}_0})
    \cdot\sum_{k}\sum_{s',s,\tilde{s}',\tilde{s}}
         \rvec{\varrho}_{\hat{\system{S}}_{k\!-\!1}}^{\,(\vy_1^{k\!-\!1})}
         (s,\tilde{s})\cdot H^{y_k|x_k}(s',s,\tilde{s}',\tilde{s})\cdot
         \lvec{\varrho}_{\hat{\system{S}}_k}^{\,(\vy_{k\!+\!1}^n)}
         (s',\tilde{s}')\\
    &+\frac{1}{n}
    \sum_{\vx_1^n,\vy_1^n} P_{\rv{X}_1^n,\rv{Y}_1^n|\system{S}_0}
                           (\vx_1^n,\vy_1^n|\rho_{\system{S}_0})
    \cdot\sum_{k}\sum_{s',s,\tilde{s}',\tilde{s}}
         \rvec{\varrho}_{\hat{\system{S}}_{k\!-\!1}}
         ^{(\vx_1^{k\!-\!1},\vy_1^{k\!-\!1})}(s,\tilde{s})\cdot
         H^{y_k|x_k}(s',s,\tilde{s}',\tilde{s})\cdot
         \lvec{\varrho}_{\hat{\system{S}}_k}^{(\vx_{k\!+\!1}^n,\vy_{k\!+\!1}^n)}
         (s'\!,\tilde{s}'),
  \end{aligned}\hspace{-24pt}\\
\left.\frac{\D}{\D t}\right|_{t=0} \hspace{-15pt} \Delta_{W}^{(n)}(\hat{W}+tH)
&\propto-\frac{1}{n}\left\langle
    \sum_{k=1}^n 
    \tr\left([\hat{W}^{\rv{Y}_n|\rv{X}_n}]\cdots
             [\hat{W}^{\rv{Y}_{k\!+\!1}|\rv{X}_{k\!+\!1}}]
             [H^{\rv{Y}_k|\rv{X}_k}]
             [\hat{W}^{\rv{Y}_{k\!-\!1}|\rv{X}_{k\!-\!1}}]\cdots
             [\hat{W}^{\rv{Y}_1|\rv{X}_1}]\cdot
             [\rho_{\hat{\system{S}}_0}]\right)
    \right\rangle_{\!\!\!\rv{X}_1^n\rv{Y}_1^n}\nonumber\\
\label{eq:DIFF:dev:2}
&\hspace{-26pt}
=-\frac{1}{n}\sum_{\vx_1^n,\vy_1^n} P_{\rv{X}_1^n,\rv{Y}_1^n|\system{S}_0}
                                     (\vx_1^n,\vy_1^n|\rho_{\system{S}_0})
    \cdot\sum_{k}\sum_{s',s,\tilde{s}',\tilde{s}}
         \rvec{\varrho}_{\hat{\system{S}}_{k\!-\!1}}
         ^{\,(\vx_1^{k\!-\!1},\vy_1^{k\!-\!1})}(s,\tilde{s})\cdot
         H^{y_k|x_k}(s',s,\tilde{s}',\tilde{s})\cdot
         \lvec{\varrho}_{\hat{\system{S}}_k}
         ^{\,(\vx_{k\!+\!1}^n,\vy_{k\!+\!1}^n)}(s',\tilde{s}'),
\end{align}
\begin{align}\label{eq:IRUB:grad:1}
\left(\grad\IRUB_{W,\mathrm{ext}}^{(n)}(\hat{W})\right)^{y|x} &\propto -\frac{1}{n}
    \left\langle\sum_{k=1}^{n}\delta_{\rv{X_k},x}\cdot\delta_{\rv{Y_k},y}\cdot
    \rvec{\varrho}_{\hat{\system{S}}_{k\!-\!1}}^{(\rv{Y}_1^{k\!-\!1})}
    \tensor\lvec{\varrho}_{\hat{\system{S}}_k}^{(\rv{Y}_{k\!+\!1}^n)}
    \right\rangle_{\!\!\!\rv{X}_1^n\rv{Y}_1^n},\\
\label{eq:IRLB:grad:1}
\left(\grad \IRLB_{W,\mathrm{ext}}^{(n)}(\hat{W})\right)^{y|x} &\propto-\frac{1}{n}
    \left\langle\sum_{k=1}^{n}\delta_{\rv{X_k},x}\cdot\delta_{\rv{Y_k},y}\cdot
    \left(\rvec{\varrho}_{\hat{\system{S}}_{k\!-\!1}}^{(\rv{Y}_1^{k\!-\!1})}
    \tensor\lvec{\varrho}_{\hat{\system{S}}_k}^{(\rv{Y}_{k\!+\!1}^n)}
    -
    \rvec{\varrho}_{\hat{\system{S}}_{k\!-\!1}}
        ^{(\rv{X}_1^{k\!-\!1},\rv{Y}_1^{k\!-\!1})}
    \tensor\lvec{\varrho}_{\hat{\system{S}}_k}
        ^{(\rv{X}_{k\!+\!1}^n,\rv{Y}_{k\!+\!1}^n)}\right)
    \right\rangle_{\!\!\!\rv{X}_1^n\rv{Y}_1^n},\\
\label{eq:DIFF:grad:1}
\left(\grad \Delta_{W,\mathrm{ext}}^{(n)}(\hat{W})\right)^{y|x} &\propto -\frac{1}{n}
    \left\langle\sum_{k=1}^{n}\delta_{\rv{X_k},x}\cdot\delta_{\rv{Y_k},y}\cdot
    \rvec{\varrho}_{\hat{\system{S}}_{k\!-\!1}}
        ^{(\rv{X}_1^{k\!-\!1},\rv{Y}_1^{k\!-\!1})}
    \tensor\lvec{\varrho}_{\hat{\system{S}}_k}
        ^{(\rv{X}_{k\!+\!1}^n,\rv{Y}_{k\!+\!1}^n)}
    \right\rangle_{\!\!\!\rv{X}_1^n\rv{Y}_1^n}.
\end{align}}
In this section, we consider auxiliary channels and their induced upper and
lower bounds on the information rate. As already mentioned in the introduction,
auxiliary channels are often introduced as a low-complexity approximation of
the original channel, which are useful in mismatch decoding.
The techniques developed in this section only require the channel input/output
data, but not the channel model itself. This is particularly useful when the channel is only made physically, but not mathematically, available.
In this case, the task of minimizing the difference between the upper and
lower bound is equivalent to finding the channel model (within a specified class 
of channel models) best fitting the \emph{empirical} channel law. Similarly,
minimizing the upper bound corresponds to finding the channel model best
fitting the \emph{empirical} channel output distribution, and maximizing the
lower bound corresponds to finding the channel model best fitting the
\emph{empirical} reverse channel law.
Motivated by the above scenarios, we particularly consider the
auxiliary channels chosen from the domain of all CC-QSCs with the same input and
output alphabet as the original channel, and acting on a memory system of a
certain dimension (which can be different from the memory dimension of the 
original channel). 
Throughout this section, we assume the original channel as described
in Section~\ref{sec:3:QCM} is indecomposable, and that all the involved Hilbert 
spaces are of finite dimension, and that the alphabets $\set{X}$ and
$\set{Y}$ are finite.
\par 
Suppose we have some auxiliary CC-QSC $\{\hat{\operator{N}}^{y|x}\}_{x,y}$,
describable by some functions $\{\hat{W}^{y|x}\}_{x,y}$ as 
in~\eqref{eq:def:channel:function:representation}. Let
$\hat{P}_{\rv{Y}_1^n|\rv{X}_1^n,\hat{\system{S}}_0}$ denote its joint channel
law, similar to~\eqref{eq:joint:2},~\eqref{eq:joint:3}, or~\eqref{eq:joint:4}. 
Namely,
\begin{equation}\hspace{0pt}
\hat{P}_{\rv{Y}_1^n|\rv{X}_1^n,\hat{\system{S}}_0}
    (\vy_1^n|\vx_1^n;\hat{\rho}_{\system{S}_0}) \defeq
\tr\!\left(\![\What^{y_{n}|x_{n}}]\cdots[\What^{y_{1}|x_{1}}]
    \!\cdot\![\hat{\rho}_{\system{S}_0}]\!\right)\!.\hspace{-11pt}
\end{equation}
We follow a similar approach as in~\cite{arnold2006simulation,
sadeghi2009optimization}, and define the quantities
\begin{align}
\label{eq:IRUB}
\IRUB_{W}^{(n)}(\hat{W}) & \defeq \begin{aligned}[t]
    &\frac{1}{n} \sum_{\vx_1^n,\vy_1^n}
    Q^{(n)}(\vx_1^n)\cdot P_{\rv{Y}_1^n|\rv{X}_1^n;\system{S}_0}
                          (\vy_1^n|\vx_1^n;\rho_{\system{S}_0})\\
    &\cdot\log\frac{P_{\rv{Y}_1^n|\rv{X}_1^n;\system{S}_0}
                   (\vy_1^n|\vx_1^n;\rho_{\system{S}_0})}
                  {\sum_{\cvx_1^n} Q^{(n)}(\cvx_1^n)
                   \hat{P}_{\rv{Y}_1^n|\rv{X}_1^n,\hat{\system{S}}_0}
                   (\vy_1^n|\cvx_1^n;\rho_{\hat{\system{S}}_0})},\hspace{-10pt}
 \end{aligned}\\
\label{eq:IRLB}
\IRLB_{W}^{(n)}(\hat{W}) & \defeq \begin{aligned}[t]
    &\frac{1}{n} \sum_{\vx_1^n,\vy_1^n}
    Q^{(n)}(\vx_1^n)\cdot P_{\rv{Y}_1^n|\rv{X}_1^n;\system{S}_0}
                          (\vy_1^n|\vx_1^n;\rho_{\system{S}_0})\\
    &\cdot\log\frac{\hat P_{\rv{Y}_1^n|\rv{X}_1^n;\system{S}_0}
                   (\vy_1^n|\vx_1^n;\rho_{\system{S}_0})}
                  {\sum_{\cvx_1^n} Q^{(n)}(\cvx_1^n)
                   \hat{P}_{\rv{Y}_1^n|\rv{X}_1^n,\hat{\system{S}}_0}
                   (\vy_1^n|\cvx_1^n;\rho_{\hat{\system{S}}_0})},\hspace{-10pt}
	\end{aligned}
\end{align}
where $P_{\rv{Y}_1^n|\rv{X}_1^n;\system{S}_0}$ is defined
in~\eqref{eq:joint:2},~\eqref{eq:joint:3} or~\eqref{eq:joint:4}.
By following similar arguments like those
in~\eqref{eq:IRUB:minus:IR} and~\eqref{eq:IR:minus:IRLB},
one can verify that 
\begin{equation}
\IRLB_{W}^{(n)}(\hat{W})\leqslant\infoRate^{(n)}_{W}\leqslant\IRUB_{W}^{(n)}(\hat{W}),
\end{equation}\addtocounter{equation}{9}
where the first inequality holds with equality if and only if 
$\hat{P}_{\rv{Y}_1^n|\rv{X}_1^n,\hat{\system{S}}_0}
    (\vy_1^n|\vx_1^n;\rho_{\hat{\system{S}}_0})$
and
$P_{\rv{Y}_1^n|\rv{X}_1^n;\system{S}_0}
    (\vy_1^n|\vx_1^n;\rho_{\system{S}_0})$
coincide for all $\vx_1^n$ and $\vy_1^n$ with positive support of
$P_{\rv{Y}_1^n|\rv{X}_1^n;\system{S}_0}$, 
and where the second inequalities holds with equality if and only if 
$\hat{P}_{\rv{Y}_1^n|\hat{\system{S}}_0}
    (\vy_1^n|\rho_{\hat{\system{S}}_0})$
and
$P_{\rv{Y}_1^n|\system{S}_0}
    (\vy_1^n|\rho_{\system{S}_0})$
coincide for all $\vy_1^n$ with positive support of
$P_{\rv{Y}_1^n|\system{S}_0}$.
Another quantity of interest is the \emph{difference function} defined as 
\begin{equation}\label{eq:DIFF}
\Delta_{W}^{(n)}(\hat{W}) \defeq \IRUB_{W}^{(n)}(\hat{W}) - \IRLB_{W}^{(n)}(\hat{W}).
\end{equation}
Explicit expressions of~\eqref{eq:IRUB},~\eqref{eq:IRLB}, and~\eqref{eq:DIFF}
are given by~\eqref{eq:IRUB:explicit},~\eqref{eq:IRLB:explicit},
and~\eqref{eq:DIFF:explicit}, respectively, at the top of this page, 
where $\rv{X}_1^n$ and $\rv{Y}_1^n$ are random variables
distributed according to the joint distribution 
$Q^{(n)}(\vx_1^n)\cdot P_{\rv{Y}_1^n|\rv{X}_1^n;\system{S}_0}
                       (\vy_1^n|\vx_1^n;\rho_{\system{S}_0})$,
and where $\left\langle\cdot\right\rangle$ stands for the expectation function.
\par
In the remainder of this section, we propose an algorithm based on the
gradient-descent method and the techniques described in Section~\ref{sec:NFGs}
and~\ref{sec:4:IR} for optimizing the quantities
in~\eqref{eq:IRUB},~\eqref{eq:IRLB}, and~\eqref{eq:DIFF}.
In particular, we consider $\{\hat{W}^{y|x}\}_{x,y}$
to be an \emph{interior} point in the domain of CC-QSCs, namely
\begin{itemize}
\item The Choi--Jamio{\l}kowski matrices
      $\llbracket \hat{W}^{y|x}\rrbracket$,
      defined similarly as \eqref{eq:qsc:matrix:2}, are strictly positive
      definite for each $x$ and $y$,
\item Eq.~\eqref{eq:operator:sum:representation:condition:2} holds
  by replacing $W^{y|x}$ with $\hat{W}^{y|x}$, namely 
  $\sum_{y\in\set{Y}} \sum_{s',\tilde{s}':\: s'=\tilde{s}'}
   \llbracket \hat{W}^{y|x}\rrbracket_{(s',s),(\tilde{s}',\tilde{s})}
   =\delta_{s,\tilde{s}}$ for all $x\in\set{X}$.
\end{itemize}
For any set of functions $\{H^{y|x}:\set{S}^4\rightarrow \mathbb{C}\}_{x,y}$ 
such that $\llbracket H^{y|x}\rrbracket$ (again, defined similarly
as~\eqref{eq:qsc:matrix:2}) is Hermitian for each $x$ and $y$, and such that
\begin{equation}\label{eq:tangent:linear:constraint}
  \sum_{y\in\set{Y}}\sum_{s',\tilde{s}':\: s'=\tilde{s}'}
  \llbracket H^{y|x}\rrbracket_{(s',s),(\tilde{s}',\tilde{s})}
  = 0 \quad \forall x\in\set{X},
\end{equation}
the functions $\{\hat{W}^{y|x}+t\cdot H^{y|x}\}_{x,y}$
describe a valid CC-QSC, for all $t$ in some neighborhood of $0$.
In this case, the directional derivatives of functions $\IRLB_{W}^{(n)}$,
$\IRUB_{W}^{(n)}$, and $\Delta_{W}^{(n)}$ at $\{\hat{W}^{y|x}\}_{x,y}$ \emph{along}
$\{H^{y|x}\}_{x,y}$ is well defined, and can be expressed
as~\eqref{eq:IRUB:dev:2},~\eqref{eq:IRLB:dev:2},
and~\eqref{eq:DIFF:dev:2} at the top of the last page, 
where we define the messages
$\{\rvec{\varrho}_{\system{S}_\ell}^{(\cvy_1^\ell)}\}_\ell$,
$\{\lvec{\varrho}_{\system{S}_\ell}^{(\cvy_{\ell\!+\!1}^n)}\}_\ell$,
$\{\rvec{\varrho}_{\system{S}_\ell}^{(\cvx_1^\ell,\cvy_1^\ell)}\}_\ell$,
and $\{\lvec{\varrho}_{\system{S}_\ell}
    ^{(\cvx_{\ell\!+\!1}^n,\cvy_{\ell\!+\!1}^n)}\}_\ell$
in a recursive manner as 
\begin{align}
&[\rvec{\varrho}_{\system{S}_\ell}^{(\cvy_1^\ell)}] \defeq
    \sum_{\vx_1^\ell} Q(\vx_1^\ell)\cdot
    [\hat{W}^{\cy_\ell|x_\ell}]\cdots
    [\hat{W}^{\cy_1|x_1}]\cdot
    [\rho_{\system{S}_0}],\\
&[\lvec{\varrho}_{\system{S}_\ell}^{(\cvy_{\ell\!+\!1}^n)}] \defeq 
    \sum_{\vx_{\ell\!+\!1}^n} Q(\vx_{\ell+1}^n) \!\cdot\!
    [I_{\system{S}_n}]\!\cdot\!
    [\hat{W}^{\cy_n|x_n}]\cdots
    [\hat{W}^{\cy_{\ell\!+\!1}|x_{\ell\!+\!1}}],\hspace{-3pt}\\
&[\rvec{\varrho}_{\system{S}_\ell}^{(\cvx_1^\ell,\cvy_1^\ell)}] \defeq
    [\hat{W}^{\cy_\ell|\cx_\ell}]\cdots
    [\hat{W}^{\cy_1|\cx_1}]\cdot
    [\rho_{\system{S}_0}],\\
&[\lvec{\varrho}_{\system{S}_\ell}
    ^{(\cvx_{\ell\!+\!1}^n,\cvy_{\ell\!+\!1}^n)}] \defeq
    [I_{\system{S}_n}]\cdot
    [\hat{W}^{\cy_n|\cx_n}]\cdots
    [\hat{W}^{\cy_{\ell\!+\!1}|\cx_{\ell\!+\!1}}].
\end{align}
Recall that, in above equations, $[I_{\system{S}_n}]$ is a row vector, whereas
$[\rho_{\system{S}_0}]$ is a column vector.
\par
By extending the domain of the functions $\IRLB_{W}^{(n)}$, $\IRUB_{W}^{(n)}$, and
$\Delta_{W}^{(n)}$ to include \emph{all} p.s.d. matrices
$\llbracket \hat{W}^{y|x}\rrbracket$, one can omit the linear
constraint~\eqref{eq:tangent:linear:constraint}. Namely, the ``direction''
$\{\llbracket H^{y|x}\rrbracket\}_{x,y}$ can take any Hermitian matrices.
Using some linear algebra, the gradient w.r.t. $\hat{W}$ of these functions on
this \emph{extended} domain can be expressed
as~\eqref{eq:IRUB:grad:1},~\eqref{eq:IRLB:grad:1}, and~\eqref{eq:DIFF:grad:1},
respectively, at the top of the last page.
For stationary and ergodic input and output processes $(\rv{X}_1^n,\rv{Y}_1^n)$,
we can \emph{estimate}~\eqref{eq:IRUB:grad:1} and~\eqref{eq:DIFF:grad:1},
respectively, as
\begin{align}
\label{eq:IRUB:grad:2}
\left(\grad\IRUB^{(n)}_{W,\mathrm{ext}}(\hat{W})\right)^{y|x}\hspace{-3pt} 
    &\overset{\cdot}{\propto}
    -\frac{1}{n} \sum_{k:{\cx_k=x \atop \cy_k=y}}\!\!
    \rvec{\varrho}_{\hat{\system{S}}_{k\!-\!1}}^{(\cvy_1^{k\!-\!1}\!)}
    \tensor\lvec{\varrho}_{\hat{\system{S}}_k}^{(\cvy_{k\!+\!1}^n\!)},\\
\label{eq:DIFF:grad:2}
\left(\grad\Delta_{W,\mathrm{ext}}^{(n)}(\hat{W})\right)^{y|x}\hspace{-3pt}
    &\overset{\cdot}{\propto}
    -\frac{1}{n} \sum_{k:{\cx_k=x \atop \cy_k=y}}\!\!
    \rvec{\varrho}_{\hat{\system{S}}_{k\!-\!1}}
        ^{(\cvx_1^{k\!-\!1}\!\!,\cvy_1^{k\!-\!1}\!)}
    \!\tensor\lvec{\varrho}_{\hat{\system{S}}_k}
        ^{(\cvx_{k\!+\!1}^n,\cvy_{k\!+\!1}^n\!)}\!,\hspace{-5pt}
\end{align}
where $(\cvx_1^n,\cvy_1^n)$ is a realization of the channel
input/output processes generated by the original channel model.
The dot in~\eqref{eq:IRUB:grad:2} and~\eqref{eq:DIFF:grad:2} stands for
``approximation''.
Notice that the messages 
$\rvec{\varrho}_{\system{S}_{k\!-\!1}}^{(\cvy_1^{k\!-\!1})}$,
$\lvec{\varrho}_{\system{S}_{k}}^{(\cvy_{k\!+\!1}^n)}$,
$\rvec{\varrho}_{\system{S}_{k\!-\!1}}
    ^{(\cvx_1^{k\!-\!1},\cvy_1^{k\!-\!1})}$, and 
$\lvec{\varrho}_{\system{S}_k}
    ^{(\cvx_{k\!+\!1}^n,\cvy_{k\!+\!1}^n)}$ can be computed
iteratively. Thus,~\eqref{eq:IRUB:grad:2} and \eqref{eq:DIFF:grad:2}
provide efficient means to estimate the gradient. However, due to the extension
of the domain, the gradients computed above may not satisfy 
constraint~\eqref{eq:tangent:linear:constraint}. This can be compensated using 
a projection w.r.t. the linear constraint, which can be solved using linear 
programming. On the other hand, the above gradient method may lead to a
violation of the p.s.d. condition required by CC-QSCs.
However, since the feasible domain of CC-QSCs is convex and bounded, this can be
corrected using convex programming at each step.
\par
\begin{algorithm}[h!]
\caption{Optimizing the difference function}
\begin{algorithmic}[1] 
\Require{indecomposable~CC-QSC,
         input~distribution~$Q$,
         positive~integer~$n$ large enough,
         initial~auxiliary~CC-QSC~$\{\hat{W}^{y|x}\}_{x,y}$,
         step~size~$\gamma>0$.}
\Ensure{$\{\hat{W}^{y|x}\}_{x,y}$, an estimated local minimum point of
        $\Delta_{W}^{(n)}$.}
\State Initialize the memory density operator
       $\rho_{\hat{\system{S}_0}}\gets\braket{0}$
\State Generate an input sequence $\cvx_1^n \sim Q^{\tensor n}$
\State Generate a corresponding output sequence $\cvy_1^n$
\Repeat
\State $\rvec{\varrho}_{\hat{\system{S}}_0}\gets\rho_{\hat{\system{S}}_0}$
\ForEach{$\ell=1,\ldots,n$}
\State $[\rvec{\varrho}_{\hat{\system{S}}_\ell}] \gets
        [\hat{W}^{\cy_\ell|\cx_\ell}] \cdot
        [\rvec{\varrho}_{\hat{\system{S}}_{\ell\! -\!1}}]$
\State $\lambda_\ell \gets \tr(\rvec{\varrho}_{\hat{\system{S}}_\ell})$
\State $\rvec{\varrho}_{\hat{\system{S}}_\ell} \gets \lambda_\ell^{-1}\cdot
        \rvec{\varrho}_{\hat{\system{S}}_\ell}$
\EndFor
\State $\lvec{\varrho}_{\hat{\system{S}}_n}\gets I_{\hat{\system{S}}_n}$
\ForEach{$\ell=n,\ldots,1$}
\State $[\lvec{\varrho}_{\hat{\system{S}}_{\ell\!-\!1}}] \gets
    [\lvec{\varrho}_{\hat{\system{S}}_\ell}]\cdot[\hat{W}^{\cy_\ell|\cx_\ell}]$
\State $\lvec{\varrho}_{\hat{\system{S}}_{\ell\!-\!1}} \gets
        \left(\tr(
        \lvec{\varrho}_{\hat{\system{S}}_{\ell\!-\!1}})\right)^{-1}\cdot
        \lvec{\varrho}_{\hat{\system{S}}_{\ell\!-\!1}}$
\EndFor
\State for each $x,y$, let
       $\left(\grad\Delta_{W,\mathrm{ext}}^{(n)}(\hat{W})\right)^{y|x}\gets \mathbf{0}$ 
\ForEach{$k=1,\ldots,n$}
    \State $\left(\grad\Delta_{W,\mathrm{ext}}^{(n)}(\hat{W})\right)^{\cy_k|\cx_k} \mathrel{+}=
            \frac{1}{n}\cdot
            \frac{\rvec{\varrho}_{\hat{\system{S}}_{k\!-\!1}}
            \tensor\lvec{\varrho}_{\hat{\system{S}}_k}}{
            \lambda_k\cdot \tr(\rvec{\varrho}_{\hat{\system{S}}_k}
            \cdot\lvec{\varrho}_{\hat{\system{S}}_k})}$
\EndFor
\State Project $\{\left(\grad\Delta_{W,\mathrm{ext}}^{(n)}(\hat{W})\right)^{y|x}\}_{x,y}$
    onto the subspace satisfying~\eqref{eq:tangent:linear:constraint};
    denoting the result by 
    $\left\{\left(\grad\Delta_{W}^{(n)}(\hat{W})\right)^{y|x}\right\}_{x,y}$
\State $\{\hat{W}^{y|x}\}_{x,y}\gets\{\hat{W}^{y|x}\}_{x,y}-\gamma\cdot
    \left\{\left(\grad\Delta_{W}^{(n)}(\hat{W})\right)^{y|x}\right\}_{x,y}$
\State Solve the following convex program w.r.t. $\{\tilde{W}^{y|x}\}_{x,y}$:
\begin{center}\begin{tabular}{rl}
min & $\Sum_{x,y}\tr\!\left(\!
       (\llbracket\tilde{W}^{y|x}\rrbracket-
       \llbracket\hat{W}^{y|x}\rrbracket)\cdot
       (\llbracket\tilde{W}^{y|x}\rrbracket
       -\llbracket\hat{W}^{y|x}\rrbracket)^\Herm\!\right) $\\
s.t. & $\llbracket \tilde{W}^{y|x}\rrbracket \in
       \mathbb{C}^{\set{S}^2\times\set{S}^2}$ is p.s.d. for each $x,y$ \\
& $\sum_{y\in\set{Y}}\sum_{s',\tilde{s}':\: s'=\tilde{s}'}
   \llbracket \tilde{W}^{y|x}\rrbracket_{(s',s),(\tilde{s}',\tilde{s})}
   =\delta_{s,\tilde{s}}\quad \forall x$
\end{tabular}\end{center}
\State $\{\hat{W}^{y|x}\}\gets\{\tilde{W}^{y|x}\}$
\Until{$\{\hat{W}^{y|x}\}_{x,y}$ has converged.}
\end{algorithmic}
\label{alg:grad:1}
\end{algorithm}
We summarize the above discussion as Algorithm~\ref{alg:grad:1}, which is an
iterative gradient-descent method for minimizing $\Delta_{W}^{(n)}$.
Notice that the quantity $\lambda_\ell$ in this case is the conditional
probability $P_{\rv{X}_\ell\rv{Y}_\ell|\rv{X}_{1}^{\ell\!-\!1}\rv{Y}_{1}^{\ell\!-\!1}}(\cx_{\ell},\cy_\ell|\cvx_1^{\ell\!-\!1},\cvy_1^{\ell\!-\!1})$.
The algorithm for minimizing the upper and lower bounds are similar, and we
omit the details.
\section{Example: Quantum Gilbert--Elliott Channels}\label{sec:6:example}
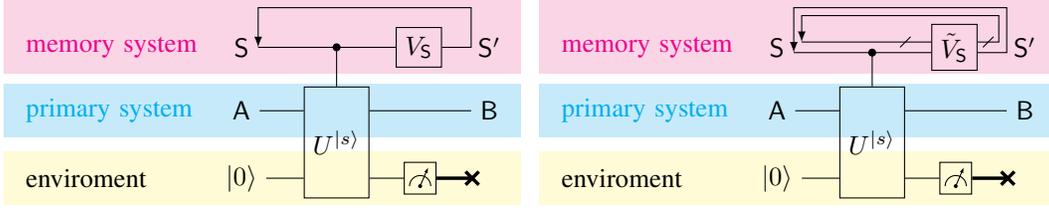
\begin{figure*}
    \centering
    
    \begin{tikzpicture}[every node/.style={transform shape}]
\node (S) {$\system{S}$};
\node[below = 10pt of S] (A) {$\system{A}$};
\node[below = 10pt of A] (E) {$\bra{0}$};
\path[draw=none] (A) edge[draw=none] node[midway] (AE) {} (E);
\node[draw, minimum width = 20pt, minimum height = 42pt, right = 20pt of AE] (CBF)
    {$U^{\bra{s}}$};
\node[right = 80pt of A] (B) {$\system{B}$};
\node[right = 52pt of E, draw, minimum size = 12pt] (M) {};

\node[minimum size = 1.5pt, fill = black, inner sep = 0pt, outer sep = 0pt,
    below = 3pt of M.center, circle, anchor = center] (m) {};
\draw [decoration={markings,mark=at position 1 with
    {\arrow[scale=0.5,>=latex]{>}}},postaction={decorate}, draw = none]
    (m.center) -- ([yshift=7pt, xshift=3pt]m.center);
\draw (m.center) -- ([yshift=6.3pt, xshift=2.7pt]m.center);
\draw ([xshift=4pt]m.center) arc (0:180:4pt);
\draw[line width = 1.3pt] (M.east) -- ([xshift=13.5pt]M.east) 
    node[pos=1, minimum size = 5pt, inner sep = 0pt, outer sep = 0pt] (Eend) {};
\draw[line width = 1.3pt] (Eend.south west) -- (Eend.north east);
\draw[line width = 1.3pt] (Eend.north west) -- (Eend.south east);

\node[right = 52pt of S, draw, minimum size = 12pt] (Us) {$V_\system{S}$};
\node at (Us-|B) {$\system{S}'$};

\draw (A) -- (A-|CBF.west);
\draw (E) -- (E-|CBF.west);
\draw (M-|CBF.east) -- (M);
\draw (B-|CBF.east) -- (B);
\draw (S) -- (Us);
\draw[-latex] (Us) -- (Us-|B.west) -- ([yshift=15pt]Us-|B.west) -- ([yshift=15pt]S.east)
              -- (S.east);
\draw[-*] (CBF) -- (S-|CBF);

\begin{pgfonlayer}{bg}
\draw [fill=magenta!20,draw=none] ([xshift=-90pt, yshift=18pt]S) rectangle 
      ([xshift=5pt,yshift=-10pt]Us-|B.east);
\node [left=85pt of S.center, anchor=west, color = magenta] {memory system};
\draw [fill=cyan!20,draw=none] ([xshift=-90pt, yshift=10pt]A) rectangle 
      ([xshift=5pt,yshift=-10pt]B.east);
\node [left=85pt of A.center, anchor=west, color = cyan] {primary system};
\draw [fill=yellow!20,draw=none] ([xshift=-90pt, yshift=10pt]E) rectangle 
      ([xshift=5pt,yshift=-10pt]E-|B.east);
\node [left=85pt of E.center, anchor=west] {enviroment};
\end{pgfonlayer}
\end{tikzpicture}
    
    \begin{tikzpicture}[every node/.style={transform shape}]
\node (S) {$\system{S}$};
\node[below = 10pt of S] (A) {$\system{A}$};
\node[below = 10pt of A] (E) {$\bra{0}$};
\path[draw=none] (A) edge[draw=none] node[midway] (AE) {} (E);
\node[draw, minimum width = 20pt, minimum height = 42pt, right = 20pt of AE] (CBF)
    {$U^{\bra{s}}$};
\node[right = 80pt of A] (B) {$\system{B}$};
\node[right = 52pt of E, draw, minimum size = 12pt] (M) {};

\node[minimum size = 1.5pt, fill = black, inner sep = 0pt, outer sep = 0pt,
    below = 3pt of M.center, circle, anchor = center] (m) {};
\draw [decoration={markings,mark=at position 1 with
    {\arrow[scale=0.5,>=latex]{>}}},postaction={decorate}, draw = none]
    (m.center) -- ([yshift=7pt, xshift=3pt]m.center);
\draw (m.center) -- ([yshift=6.3pt, xshift=2.7pt]m.center);
\draw ([xshift=4pt]m.center) arc (0:180:4pt);
\draw[line width = 1.3pt] (M.east) -- ([xshift=13.5pt]M.east) 
    node[pos=1, minimum size = 5pt, inner sep = 0pt, outer sep = 0pt] (Eend) {};
\draw[line width = 1.3pt] (Eend.south west) -- (Eend.north east);
\draw[line width = 1.3pt] (Eend.north west) -- (Eend.south east);

\node[right = 52pt of S, draw, minimum size = 12pt] (Us) {$\tilde{V}_\system{S}$};
\node at (Us-|B) {$\system{S}'$};

\draw (A) -- (A-|CBF.west);
\draw (E) -- (E-|CBF.west);
\draw (M-|CBF.east) -- (M);
\draw (B-|CBF.east) -- (B);
\draw ([yshift=-2pt]S.east) -- ([yshift=-2pt]Us.west);
\draw[-latex] ([yshift=-2pt]Us.east) -- ([yshift=-2pt]Us.east-|B.west)
              -- ([yshift=15pt]Us-|B.west) -- ([yshift=15pt]S.east)
              -- ([yshift=-2pt]S.east);
\draw[-*] (CBF) -- ([yshift=-2pt]S-|CBF);
\draw ([xshift=3pt, yshift=2pt]S.east) -- node[pos=0.8] (marker1) {} ([yshift=2pt]Us.west);
\draw[-latex] ([yshift=2pt]Us.east) -- node[midway] (marker2) {} ([yshift=2pt,xshift=-3pt]Us.east-|B.west)
              -- ([yshift=12pt,xshift=-3pt]Us-|B.west) -- ([xshift=3pt,yshift=12pt]S.east)
              -- ([xshift=3pt,yshift=2pt]S.east);
\draw ([xshift=-2pt, yshift=-2pt]marker1.center) -- ([xshift=2pt, yshift=2pt]marker1.center);
\draw ([xshift=-2pt, yshift=-2pt]marker2.center) -- ([xshift=2pt, yshift=2pt]marker2.center);

\begin{pgfonlayer}{bg}
\draw [fill=magenta!20,draw=none] ([xshift=-90pt, yshift=18pt]S) rectangle 
      ([xshift=5pt,yshift=-10pt]Us-|B.east);
\node [left=85pt of S.center, anchor=west, color = magenta] {memory system};
\draw [fill=cyan!20,draw=none] ([xshift=-90pt, yshift=10pt]A) rectangle 
      ([xshift=5pt,yshift=-10pt]B.east);
\node [left=85pt of A.center, anchor=west, color = cyan] {primary system};
\draw [fill=yellow!20,draw=none] ([xshift=-90pt, yshift=10pt]E) rectangle 
      ([xshift=5pt,yshift=-10pt]E-|B.east);
\node [left=85pt of E.center, anchor=west] {enviroment};
\end{pgfonlayer}
\end{tikzpicture}
    \caption{A quantum Gilbert--Elliott channel (LHS),
             and a variant where the memory system consists of multiple
             qubits with only one of them controlling $U^{\bra{s}}$ (RHS).}
    \label{fig:QGEC}
\end{figure*}
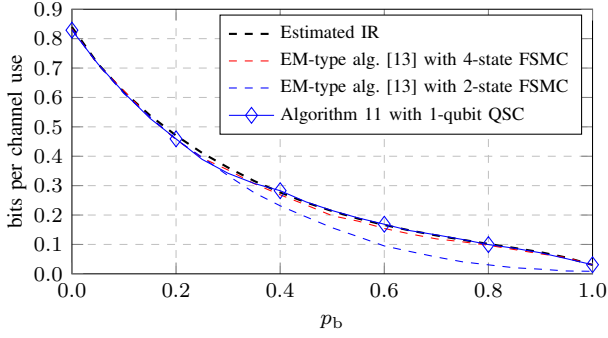
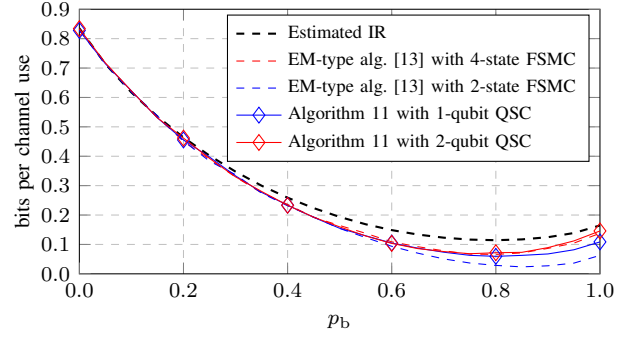
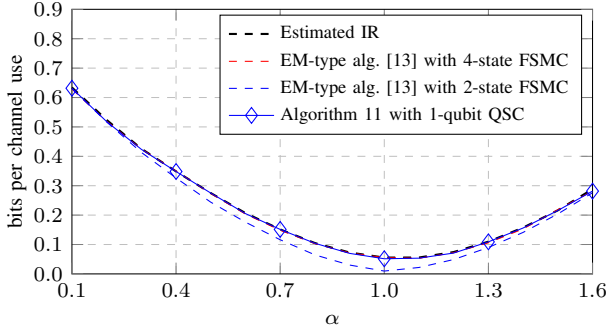
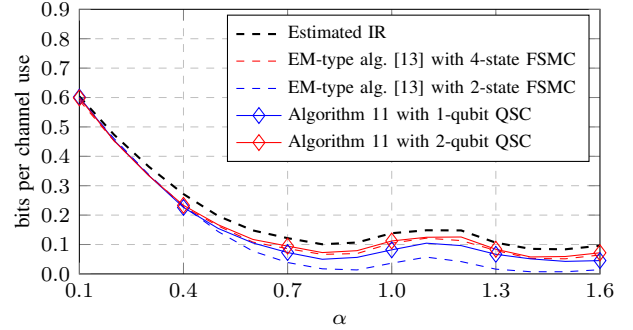
\begin{figure*}
\centering
\begin{subfigure}[t]{0.96\columnwidth}
    
    \begin{tikzpicture}[every axis/.append style={font=\footnotesize},
                    every mark/.append style={scale=1.5}]
    \begin{axis}[xlabel={$\pbad$}, 
                 ylabel={bits per channel use},
                 ylabel shift=-4 pt,
                 xmin=0, xmax=1, ymin=0,  ymax=0.9,
                 legend style={font=\scriptsize}, 
                 mark repeat={4}, 
                 width=\columnwidth, height=.6\columnwidth,
                 ytick={0.1,0.2,0.3,0.4,0.5,0.6,0.7,0.8},
                 extra y ticks={0,0.9},
                 extra y tick style={grid=none},
                 y tick label style={
                     /pgf/number format/fixed,
                     /pgf/number format/fixed zerofill,
                     /pgf/number format/precision=1},
                 xtick={0.2,0.4,0.6,0.8},
                 extra x ticks={0.0,1.0},
                 extra x tick style={grid=none},
                 x tick label style={
                     /pgf/number format/fixed,
                     /pgf/number format/fixed zerofill,
                     /pgf/number format/precision=1},
                 grid=both,
                 grid style={line width=0.1pt, dashed},
                 legend cell align={left}]
        \addplot[dashed, thick, black] coordinates {
            (0.00, 0.839601745174)
            (0.05, 0.714120015688)
            (0.10, 0.618261781187)
            (0.15, 0.538893269110)
            (0.20, 0.470418506076)
            (0.25, 0.411550042780)
            (0.30, 0.361325752184)
            (0.35, 0.315858686126)
            (0.40, 0.276693640181)
            (0.45, 0.242634877108)
            (0.50, 0.213364467275)
            (0.55, 0.187646703615)
            (0.60, 0.166111113269)
            (0.65, 0.147285664160)
            (0.70, 0.130630670279)
            (0.75, 0.115773844628)
            (0.80, 0.101787384142)
            (0.85, 0.087948744743)
            (0.90, 0.073629956869)
            (0.95, 0.055286167568)
            (1.00, 0.030194299618)
        };
        \addlegendentry{Estimated IR}
        \addplot[solid, red, dashed] coordinates {
            (0.00, 0.833686348679)
            (0.05, 0.711496512010)
            (0.10, 0.622781653924)
            (0.15, 0.531787273503)
            (0.20, 0.456691524228)
            (0.25, 0.394254900201)
            (0.30, 0.353083453813)
            (0.35, 0.306942832207)
            (0.40, 0.268742612068)
            (0.45, 0.234172015381)
            (0.50, 0.197625745231)
            (0.55, 0.179086497776)
            (0.60, 0.154340415942)
            (0.65, 0.135508136380)
            (0.70, 0.118932943912)
            (0.75, 0.108928054432)
            (0.80, 0.097114178557)
            (0.85, 0.082541871698)
            (0.90, 0.067974760037)
            (0.95, 0.055364118464)
            (1.00, 0.030712333955)
        };
        \addlegendentry{EM-type alg.~[13] with 4-state FSMC}
        \addplot[solid, blue, dashed] coordinates {
            (0.00, 0.831741905571)
            (0.05, 0.712427595410)
            (0.10, 0.613164794327)
            (0.15, 0.533958257063)
            (0.20, 0.457099591967)
            (0.25, 0.396516510418)
            (0.30, 0.333931200180)
            (0.35, 0.276208896384)
            (0.40, 0.231405341707)
            (0.45, 0.193564225477)
            (0.50, 0.156714715230)
            (0.55, 0.126192277424)
            (0.60, 0.094889516489)
            (0.65, 0.075748267428)
            (0.70, 0.057208456686)
            (0.75, 0.040925933725)
            (0.80, 0.030531953773)
            (0.85, 0.020645659910)
            (0.90, 0.015760096465)
            (0.95, 0.009658573661)
            (1.00, 0.008579667796)
        };
        \addlegendentry{EM-type alg.~[13] with 2-state FSMC}
        \addplot[solid, mark = diamond, blue] coordinates {
            (0.00, 0.828760809658)
            (0.05, 0.713985929442)
            (0.10, 0.615994378803)
            (0.15, 0.528854930308)
            (0.20, 0.459058469386)
            (0.25, 0.389432982447)
            (0.30, 0.341934887183)
            (0.35, 0.307475093096)
            (0.40, 0.282804615465)
            (0.45, 0.244481977767)
            (0.50, 0.215103229847)
            (0.55, 0.190116309030)
            (0.60, 0.168708522517)
            (0.65, 0.146737015048)
            (0.70, 0.132491368761)
            (0.75, 0.116971911450)
            (0.80, 0.099518348532)
            (0.85, 0.086967374828)
            (0.90, 0.071632787144)
            (0.95, 0.051482497374)
            (1.00, 0.030962259381)
        };
        \addlegendentry{Algorithm~11 with 1-qubit QSC}
    \end{axis}
\end{tikzpicture}
	\caption{
    Quantum Gilbert–Elliott Channel: $\pgood = 0.05$ is fixed; $\pbad$
    varies from $0$ to $1$; $V_\system{S} = \exp(-j \alpha H)$, where
    $H$ is some fixed $2$-by-$2$
    Hermitian matrix and where $\alpha = 1$ is fixed;
    $n=10^5$.
    } \label{fig:QGEC:plot:1}
\end{subfigure}
~
\begin{subfigure}[t]{0.96\columnwidth}
    
    \begin{tikzpicture}[every axis/.append style={font=\footnotesize},
                    every mark/.append style={scale=1.5}]
    \begin{axis}[xlabel={$\pbad$}, 
                 ylabel={bits per channel use},
                 ylabel shift=-4 pt,
                 xmin=0, xmax=1, ymin=0,  ymax=0.9,
                 legend style={font=\scriptsize}, 
                 mark repeat={4}, 
                 width=\columnwidth, height=.6\columnwidth,
                 y tick label style={
                     /pgf/number format/fixed,
                     /pgf/number format/fixed zerofill,
                     /pgf/number format/precision=1},
                 ytick={0.1,0.2,0.3,0.4,0.5,0.6,0.7,0.8},
                 extra y ticks={0,0.9},
                 extra y tick style={grid=none},
                 xtick={0.2,0.4,0.6,0.8},
                 extra x ticks={0.0,1.0},
                 extra x tick style={grid=none},
                 x tick label style={
                     /pgf/number format/fixed,
                     /pgf/number format/fixed zerofill,
                     /pgf/number format/precision=1},
                 grid=both,
                 grid style={line width=0.1pt, dashed},
                 legend cell align={left}]
        \addplot[dashed, black, thick] coordinates {
            (0.00, 0.835334547072)
            (0.05, 0.713817138457)
            (0.10, 0.616995721117)
            (0.15, 0.535170417981)
            (0.20, 0.463871511452)
            (0.25, 0.402218381517)
            (0.30, 0.347600395229)
            (0.35, 0.299817809855)
            (0.40, 0.258942093992)
            (0.45, 0.223665276283)
            (0.50, 0.193521758116)
            (0.55, 0.168561414250)
            (0.60, 0.148733205093)
            (0.65, 0.133686946797)
            (0.70, 0.122912164107)
            (0.75, 0.116431195033)
            (0.80, 0.114371410381)
            (0.85, 0.117102579296)
            (0.90, 0.124549747387)
            (0.95, 0.138826478090)
            (1.00, 0.164582211924)
        };
        \addlegendentry{Estimated IR}
        \addplot[solid, red, dashed] coordinates {
            (0.00, 0.831517212717)
            (0.05, 0.714764648111)
            (0.10, 0.620863015847)
            (0.15, 0.532861848575)
            (0.20, 0.450692752931)
            (0.25, 0.389546432757)
            (0.30, 0.329290008573)
            (0.35, 0.275553175585)
            (0.40, 0.236190228422)
            (0.45, 0.192090727923)
            (0.50, 0.164389649731)
            (0.55, 0.137826355554)
            (0.60, 0.108796192389)
            (0.65, 0.092290544995)
            (0.70, 0.078342868664)
            (0.75, 0.067132930525)
            (0.80, 0.065980300478)
            (0.85, 0.071230799275)
            (0.90, 0.086533592418)
            (0.95, 0.104559960603)
            (1.00, 0.137889776752)
        };
        \addlegendentry{EM-type alg.~[13] with 4-state FSMC}
        \addplot[solid, blue, dashed] coordinates {
            (0.00, 0.828833977289)
            (0.05, 0.713401902069)
            (0.10, 0.618383090362)
            (0.15, 0.536796068619)
            (0.20, 0.452103146643)
            (0.25, 0.383414537377)
            (0.30, 0.337204267691)
            (0.35, 0.274681469889)
            (0.40, 0.231386008521)
            (0.45, 0.193295725224)
            (0.50, 0.154478319411)
            (0.55, 0.123336899762)
            (0.60, 0.093653236031)
            (0.65, 0.070335085534)
            (0.70, 0.050775825078)
            (0.75, 0.036778413585)
            (0.80, 0.028658913432)
            (0.85, 0.023617608576)
            (0.90, 0.027313636512)
            (0.95, 0.036456654565)
            (1.00, 0.062739939978)
        };
        \addlegendentry{EM-type alg.~[13] with 2-state FSMC}
        \addplot[solid, mark = diamond, blue] coordinates {
            (0.00, 0.827985156616)
            (0.05, 0.714668236552)
            (0.10, 0.622321223551)
            (0.15, 0.530413033978)
            (0.20, 0.455413779164)
            (0.25, 0.397241693779)
            (0.30, 0.333206279519)
            (0.35, 0.277076762749)
            (0.40, 0.233492782705)
            (0.45, 0.192237937528)
            (0.50, 0.154956559171)
            (0.55, 0.129387917312)
            (0.60, 0.104722394072)
            (0.65, 0.087434544657)
            (0.70, 0.073503966393)
            (0.75, 0.062853525376)
            (0.80, 0.059483982086)
            (0.85, 0.062788063958)
            (0.90, 0.067646229180)
            (0.95, 0.081847657751)
            (1.00, 0.108539715332)
        };
        \addlegendentry{Algorithm~11 with 1-qubit QSC}
        \addplot[solid, mark = diamond, red] coordinates {
            (0.00, 0.834042100608)
            (0.05, 0.717419247603)
            (0.10, 0.621893971509)
            (0.15, 0.530616289996)
            (0.20, 0.461705781004)
            (0.25, 0.392711338005)
            (0.30, 0.329821195930)
            (0.35, 0.280192615343)
            (0.40, 0.233190489326)
            (0.45, 0.192814793929)
            (0.50, 0.159635194920)
            (0.55, 0.128742999094)
            (0.60, 0.103480124610)
            (0.65, 0.085811168978)
            (0.70, 0.075800104426)
            (0.75, 0.068367278168)
            (0.80, 0.071753040441)
            (0.85, 0.073104108032)
            (0.90, 0.089161880328)
            (0.95, 0.112025283761)
            (1.00, 0.145801283377)
        };
        \addlegendentry{Algorithm~11 with 2-qubit QSC}
    \end{axis}
\end{tikzpicture}
    \caption{
    Variant of the Quantum Gilbert–Elliott Channel described in the
    RHS of Fig.~\ref{fig:QGEC}.
    Here, the memory system $\system{S}$ consists of two qubits, with only the 
    first one interacting with the primary system by serving as the controlling
    qubit of the controlled bit-flip channel.
    Parameters: $\pgood = 0.05$; $\pbad\in[0,1]$;
    $\tilde{V}_\system{S} = \exp(-j\alpha H)$, where $H$ is some fixed $4$-by-$4$
    Hermitian matrix and where $\alpha = 1$ is fixed; $n = 10^5$.
    } \label{fig:QGEC:plot:2}
\end{subfigure}
~
\begin{subfigure}[t]{0.96\columnwidth}
    
    \begin{tikzpicture}[every axis/.append style={font=\footnotesize},
                    every mark/.append style={scale=1.5}]
    \begin{axis}[xlabel={$\alpha$}, 
                 ylabel={bits per channel use},
                 ylabel shift=-4 pt,
                 xmin=0.1, xmax=1.6, ymin=0,  ymax=0.9,
                 legend style={font=\scriptsize}, 
                 mark repeat={3}, 
                 width=\columnwidth, height=.6\columnwidth,
                 y tick label style={
                     /pgf/number format/fixed,
                     /pgf/number format/fixed zerofill,
                     /pgf/number format/precision=1},
                 ytick={0.1,0.2,0.3,0.4,0.5,0.6,0.7,0.8},
                 extra y ticks={0,0.9},
                 extra y tick style={grid=none},
                 xtick={0.4,0.7,1.0,1.3},
                 extra x ticks={0.1,1.6},
                 extra x tick style={grid=none},
                 x tick label style={
                     /pgf/number format/fixed,
                     /pgf/number format/fixed zerofill,
                     /pgf/number format/precision=1},
                 grid=both,
                 grid style={line width=0.1pt, dashed},
                 legend cell align={left}]
        \addplot[dashed, black, thick] coordinates {
            (0.1, 0.634246436353)
            (0.2, 0.523724519704)
            (0.3, 0.425843507895)
            (0.4, 0.348767614102)
            (0.5, 0.274949201240)
            (0.6, 0.206807177630)
            (0.7, 0.151334883495)
            (0.8, 0.104972113295)
            (0.9, 0.072702348555)
            (1.0, 0.055474107518)
            (1.1, 0.055705182488)
            (1.2, 0.074364078064)
            (1.3, 0.109701367088)
            (1.4, 0.157408739672)
            (1.5, 0.216248463821)
            (1.6, 0.288383375526)
        };
        \addlegendentry{Estimated IR}
        \addplot[solid, red, dashed] coordinates {
            (0.1, 0.628757790182)
            (0.2, 0.519154991495)
            (0.3, 0.424800128488)
            (0.4, 0.349505951177)
            (0.5, 0.273691145329)
            (0.6, 0.203689311260)
            (0.7, 0.148728055607)
            (0.8, 0.103202914342)
            (0.9, 0.071644182670)
            (1.0, 0.055364118464)
            (1.1, 0.053273311931)
            (1.2, 0.072222309436)
            (1.3, 0.105810164407)
            (1.4, 0.153058635133)
            (1.5, 0.218322703723)
            (1.6, 0.281969975885)
        };
        \addlegendentry{EM-type alg.~[13] with 4-state FSMC}
        \addplot[solid, blue, dashed] coordinates {
            (0.1, 0.630537416883)
            (0.2, 0.516839131244)
            (0.3, 0.413357758219)
            (0.4, 0.326340217404)
            (0.5, 0.244019998069)
            (0.6, 0.176033806634)
            (0.7, 0.116161198638)
            (0.8, 0.064796428188)
            (0.9, 0.030125167285)
            (1.0, 0.009658573661)
            (1.1, 0.021889961949)
            (1.2, 0.048929762310)
            (1.3, 0.089895723650)
            (1.4, 0.141175364024)
            (1.5, 0.203941640222)
            (1.6, 0.275966557336)
        };
        \addlegendentry{EM-type alg.~[13] with 2-state FSMC}
        \addplot[solid, mark = diamond, blue] coordinates {
            (0.100, 0.631573943140)
            (0.200, 0.518053760695)
            (0.300, 0.423761299113)
            (0.400, 0.348255291277)
            (0.500, 0.274996844752)
            (0.600, 0.205089306142)
            (0.700, 0.151321209016)
            (0.800, 0.104187042637)
            (0.900, 0.069819436449)
            (1.000, 0.051482497374)
            (1.100, 0.053512560728)
            (1.200, 0.070980855572)
            (1.300, 0.109099922099)
            (1.400, 0.156097904004)
            (1.500, 0.215592931815)
            (1.600, 0.281598938534)
        };
        \addlegendentry{Algorithm~11 with 1-qubit QSC}
    \end{axis}
\end{tikzpicture}
    \caption{
    Quantum Gilbert–Elliott Channel: $\pgood = 0.05$ is fixed; $\pbad =
    0.95$ is fixed; $V_\system{S} = \exp(-j \alpha H)$, where $H$ is the same
    $2$-by-$2$ Hermitian matrix as in Fig.~\ref{fig:QGEC:plot:1} and where
    $\alpha$ varies from $0.1$ to $+1.5$; $n = 10^5$.
    } \label{fig:QGEC:plot:3}
\end{subfigure}
~
\begin{subfigure}[t]{0.96\columnwidth}
    
    \begin{tikzpicture}[every axis/.append style={font=\footnotesize},
                    every mark/.append style={scale=1.5}]
    \begin{axis}[xlabel={$\alpha$}, 
                 ylabel={bits per channel use},
                 ylabel shift=-4 pt,
                 xmin=0.1, xmax=1.6, ymin=0,  ymax=0.9,
                 legend style={font=\scriptsize}, 
                 mark repeat={3}, 
                 width=\columnwidth, height=.6\columnwidth,
                 y tick label style={
                     /pgf/number format/fixed,
                     /pgf/number format/fixed zerofill,
                     /pgf/number format/precision=1},
                 ytick={0.1,0.2,0.3,0.4,0.5,0.6,0.7,0.8},
                 extra y ticks={0,0.9},
                 extra y tick style={grid=none},
                 xtick={0.4,0.7,1.0,1.3},
                 extra x ticks={0.1,1.6},
                 extra x tick style={grid=none},
                 x tick label style={
                     /pgf/number format/fixed,
                     /pgf/number format/fixed zerofill,
                     /pgf/number format/precision=1},
                 grid=both,
                 grid style={line width=0.1pt, dashed},
                 legend cell align={left}]
        \addplot[dashed, black, thick] coordinates {
            (0.1, 0.604100063356)
            (0.2, 0.473401110905)
            (0.3, 0.364883790848)
            (0.4, 0.271535196296)
            (0.5, 0.196980525949)
            (0.6, 0.147892471850)
            (0.7, 0.121579631965)
            (0.8, 0.100800736650)
            (0.9, 0.106741942766)
            (1.0, 0.138425721984)
            (1.1, 0.148411159003)
            (1.2, 0.147873920267)
            (1.3, 0.106213495216)
            (1.4, 0.085179900573)
            (1.5, 0.083684790791)
            (1.6, 0.095801652946)
        };
        \addlegendentry{Estimated IR}
        \addplot[solid, red, dashed] coordinates {
            (0.1, 0.590030613376)
            (0.2, 0.452373967863)
            (0.3, 0.336164446674)
            (0.4, 0.228386828122)
            (0.5, 0.163171954985)
            (0.6, 0.108116007087)
            (0.7, 0.085836992514)
            (0.8, 0.066406308630)
            (0.9, 0.069213426994)
            (1.0, 0.104559960603)
            (1.1, 0.121506799903)
            (1.2, 0.113347442020)
            (1.3, 0.079694559896)
            (1.4, 0.052635475909)
            (1.5, 0.051345016335)
            (1.6, 0.064265969362)
        };
        \addlegendentry{EM-type alg.~[13] with 4-state FSMC}
        \addplot[solid, blue, dashed] coordinates {
            (0.1, 0.596576347658)
            (0.2, 0.460288337401)
            (0.3, 0.337629300284)
            (0.4, 0.230703408080)
            (0.5, 0.143804485696)
            (0.6, 0.077428614807)
            (0.7, 0.038734579522)
            (0.8, 0.017294618485)
            (0.9, 0.013494816112)
            (1.0, 0.036456654565)
            (1.1, 0.057035136631)
            (1.2, 0.041939227789)
            (1.3, 0.016095530905)
            (1.4, 0.007397882922)
            (1.5, 0.007319105337)
            (1.6, 0.014187593031)
        };
        \addlegendentry{EM-type alg.~[13] with 2-state FSMC}
        \addplot[solid, mark = diamond, blue] coordinates {
            (0.1, 0.601528340475)
            (0.2, 0.454463594618)
            (0.3, 0.335020097620)
            (0.4, 0.226052428000)
            (0.5, 0.154639172891)
            (0.6, 0.105263825537)
            (0.7, 0.072710389970)
            (0.8, 0.049759820290)
            (0.9, 0.056187338246)
            (1.0, 0.081847657751)
            (1.1, 0.104479497919)
            (1.2, 0.096208892501)
            (1.3, 0.066664735539)
            (1.4, 0.051219739998)
            (1.5, 0.042557259501)
            (1.6, 0.045233504158)
        };
        \addlegendentry{Algorithm~11 with 1-qubit QSC}
        \addplot[solid, mark = diamond, red] coordinates {
            (0.1, 0.598825525120)
            (0.2, 0.452793632504)
            (0.3, 0.333983755168)
            (0.4, 0.233252042999)
            (0.5, 0.167063892604)
            (0.6, 0.117258612189)
            (0.7, 0.094584955365)
            (0.8, 0.072521235449)
            (0.9, 0.079008564509)
            (1.0, 0.112025283761)
            (1.1, 0.124629280264)
            (1.2, 0.125535074107)
            (1.3, 0.083770542149)
            (1.4, 0.057645455501)
            (1.5, 0.059133475287)
            (1.6, 0.071769388212)
        };
        \addlegendentry{Algorithm~11 with 2-qubit QSC}
    \end{axis}
\end{tikzpicture}
    \caption{
    Same variant of the Quantum Gilbert–Elliott Channel as in
    Fig.~\ref{fig:QGEC:plot:2} with different parameters: $\pgood = 0.05$;
    $\pbad = 0.95$; $V_\system{S} = \exp(-j \alpha H)$, where $H$ is the same 
    $4$-by-$4$ Hermitian matrix as in Fig.~\ref{fig:QGEC:plot:2} and where
    $\alpha$ varies from $0.1$ to $+1.5$; $n = 10^5$.
    } \label{fig:QGEC:plot:4}
\end{subfigure}
\caption{Some numerical information rate lower bounds estimated for a QGEC
    and a variant of a QGEC, equipped with ``trivial'' orthonormal ensemble
    and projective measurements.
    The estimated information rates were obtained using Alg.~\ref{alg:SPA:2}.}
\label{fig:QGEC:plots}
\end{figure*}
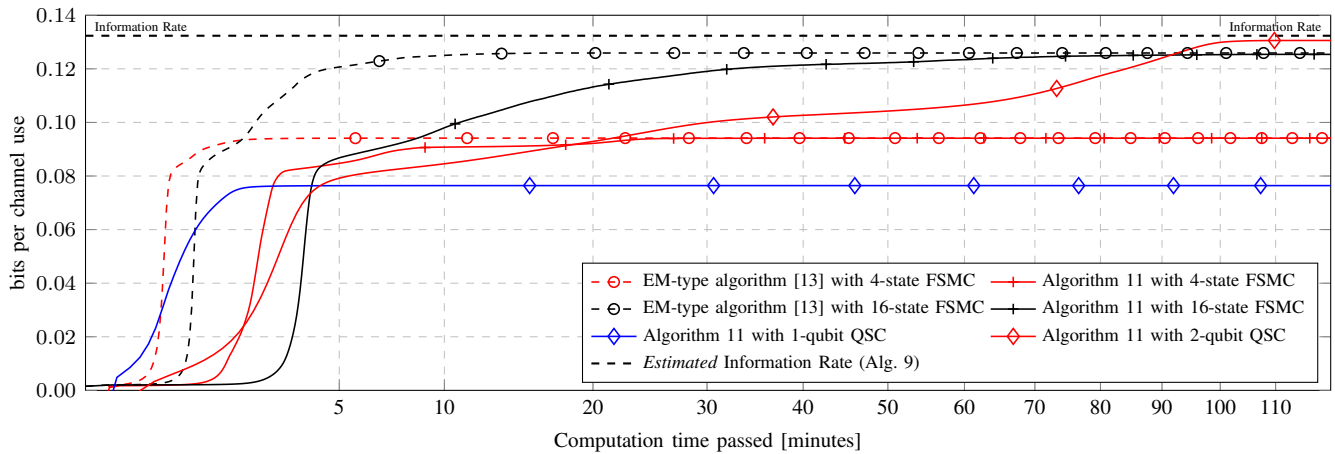
\begin{figure*}
    \centering
    
    \begin{tikzpicture}[every axis/.append style={font=\footnotesize},
                    every mark/.append style={scale=1.5}]
    \begin{axis}[xlabel={Computation time passed [minutes]}, 
                 ylabel={bits per channel use},
                 ylabel shift=-4 pt,
                 xmin=0, xmax=85, ymin=0,  ymax=0.14,
                 legend style={font=\scriptsize},
                 legend columns=2,
                 mark repeat=200, 
                 mark phase = 200,
                 width=\textwidth, height=187pt,
                 y tick label style={
                     /pgf/number format/fixed,
                     /pgf/number format/fixed zerofill,
                     /pgf/number format/precision=2},
                 ytick={0.02,0.04,0.06,0.08,0.1,0.12},,
                 extra y ticks={0,0.14},
                 extra y tick style={grid=none},
                 xticklabels={5,10,20,30,40,50,60,70,80,90,100,110},
                 xtick={17.3205,24.4949,34.6410,42.4264,48.9898,54.7723,
                     60.0000,64.8074,69.2820,73.4847,77.4597,81.2404},
                 x tick label style={
                     /pgf/number format/fixed,
                     /pgf/number format/fixed zerofill,
                     /pgf/number format/precision=0},
                 grid=both,
                 grid style={line width=0.1pt, dashed},
                 legend cell align={left},
                 legend style={at={(0.99,0.02)},anchor=south east},
                 every axis plot/.append style={semithick}]
        \addplot[red, dashed, mark = o, mark options={solid}]
            table[x expr=((\thisrow{step}-1)*0.848424200)^(0.5),y=IR_L]
            {data/cem_4_half.data};
        \addlegendentry{EM-type algorithm~[13] with 4-state FSMC} 
        \addplot[red, solid, mark = +, mark options={solid}]
            table[x expr=((\thisrow{step}-1)*1.343300420)^(0.5),y=IR_L]
            {data/cgd_4_half.data};
        \addlegendentry{Algorithm~11 with 4-state FSMC}
        \addplot[black, dashed, mark = o, mark options={solid}]
            table[x expr=((\thisrow{step}-1)*1.010961860)^(0.5),y=IR_L]
            {data/cem_16_half.data};
        \addlegendentry{EM-type algorithm~[13] with 16-state FSMC} 
        \addplot[black, solid, mark = +, mark options={solid}]
            table[x expr=((\thisrow{step}-1)*1.599697330)^(0.5),y=IR_L]
            {data/cgd_16_half.data};
        \addlegendentry{Algorithm~11 with 16-state FSMC}
        \addplot[blue, mark=diamond]
            table[x expr=((\thisrow{step}-1)*2.298483867)^(0.5),y=IR_L]
            {data/qgd_1_half.data};
        \addlegendentry{Algorithm~11 with 1-qubit QSC}
        \addplot[red, mark=diamond]
            table[x expr=((\thisrow{step}-1)*5.480697869)^(0.5),y=IR_L]
            {data/qgd_2_half.data};
        \addlegendentry{Algorithm~11 with 2-qubit QSC}
        \addplot[dashed,black,thick] coordinates{(0000,0.132352)(90,0.132352)};
        \addlegendentry{\emph{Estimated} Information Rate (Alg.~9)} 
        \node[anchor=north west, font=\tiny, yshift=1pt] at (rel axis cs:0,1)
             {Information Rate};
        \node[anchor=north east, font=\tiny, yshift=1pt] at (rel axis cs:1,1)
             {Information Rate};
    \end{axis}
\end{tikzpicture}
    \caption{Minimizing the difference function $\Delta_{W}^{(n)}$ using 
    different methods. The markers appear after every 400 updates.}
    \label{fig:QGEC:plot:5}
    \vspace{-10pt}
\end{figure*}
In this section we present some numerical results as a demonstration of the 
    algorithms introduced in this paper.
In particular, as a generalization of Example~\ref{example:GEC},
    we consider a class of quantum channels with memory named the
    quantum Gilbert--Elliott channels (QGECs), which were introduced
    in~\cite{cao2017estimating}, and consider their information
    rates using some separable input ensemble and local output measurement.
\par
A QGEC is a quantum channel with memory defined by\footnote{We put the system
$\system{S}$ ahead of $\system{A}$ and $\system{B}$ in this example to
emphasize the role of $\system{S}$ as a \emph{control} qubit, and also for
simplicity reasons.}
\begin{alignat*}{2}
&\operator{N}:\:& \DensOp(\hilbert_\system{S}\tensor\hilbert_\system{A}) 
& \rightarrow \DensOp(\hilbert_{\system{S}'}\tensor\hilbert_\system{B})\\
& & \rho_{\system{SA}} & \mapsto (V_\system{S}\tensor I_\system{B})\cdot
\Phi^{\mathrm{CBF}}(\rho_{\system{SA}})\cdot
(V_\system{S}^\Herm\tensor I_\system{B}),
\end{alignat*}
where $\hilbert_\system{A}$, $\hilbert_\system{B}$, and 
$\hilbert_\system{S}=\hilbert_{\system{S}'}$ are of dimension 2,
namely each of them is made up of one qubit; and where
$\Phi^{\mathrm{CBF}}$ is the \emph{controlled bit-flip channel} defined by
$\Phi^{\mathrm{CBF}}(\rho_\system{SA}) \defeq
E_0\rho^{\mathrm{SA}}E_0^\Herm + E_1\rho^{\mathrm{SA}}E_1^\Herm $
with
\[
E_0 \defeq\left[
\begin{smallmatrix}
\!\!\sqrt{1\!-\!\pgood}\!\! & 0                & 0   & 0 \\
0                & \!\!\sqrt{1\!-\!\pgood}\!\! & 0   & 0 \\
0                & 0                & \!\!\sqrt{1\!-\!\pbad}\!\! & 0 \\
0                & 0                & 0   & \!\!\sqrt{1\!-\!\pbad}\!\!
\end{smallmatrix}\right],\ 
E_1 \defeq \left[
\begin{smallmatrix}
0                  & \sqrt{\pgood} & 0     & 0 \\
\sqrt{\pgood} & 0            & 0           & 0 \\
0                  & 0            & 0   & \sqrt{\pbad} \\
0                  & 0            & \sqrt{\pbad} & 0
\end{smallmatrix}\right];
\]
and where $V_\system{S}$ is some unitary operator on $\hilbert_\system{S}$
to be specified later.
The controlled bit-flip channel $\Phi^{\mathrm{CBF}}$ applies a quantum
bit-flip channel on system $\system{A}$ with flipping probability $\pgood$
when the system $\system{S}$ is in the state of $\bra{0}$, and
with flipping probability $\pbad$ when the system $\system{S}$ is in
the state of $\bra{1}$.
The action of a QGEC is the combined effect of a controlled bit-flip channel
and a unitary evolution on $\system{S}$; as depicted in the following circuit
diagram in Fig.~\ref{fig:QGEC}, where $U^{\bra{s}}$ is a Stinespring
representation of $\Phi^{\mathrm{CBF}}$:
\[
\begin{aligned}
&U^{\bra{0}} \defeq\left[
\begin{smallmatrix}
\sqrt{1\!-\!\pgood}  & 0
& 0                          & -\sqrt{\pgood}  \\
0                            & \sqrt{1\!-\!\pgood}
&\sqrt{\pgood}      & 0 \\
0                            & -\sqrt{\pgood}
& \sqrt{1\!-\!\pgood}& 0 \\
\sqrt{\pgood}        & 0
& 0                          & \sqrt{1\!-\!\pgood}
\end{smallmatrix}\right],\\
&U^{\bra{1}} \defeq \left[
\begin{smallmatrix}
\sqrt{1\!-\!\pbad}  & 0
& 0                          & -\sqrt{\pbad}  \\
0                            & \sqrt{1\!-\!\pbad}
&\sqrt{\pbad}      & 0 \\
0                            & -\sqrt{\pbad}
& \sqrt{1\!-\!\pbad}& 0 \\
\sqrt{\pbad}        & 0
& 0                          & \sqrt{1\!-\!\pbad}
\end{smallmatrix}\right].
\end{aligned}\]
\par
In Fig.~\ref{fig:QGEC:plots}, we present some
    numerical information rate lower bounds estimated for a QGEC
    and a variant of a QGEC (as depicted in Fig.~\ref{fig:QGEC}),
    equipped with ``trivial'' orthonormal ensemble and projective measurements.
Namely, the original channel in Fig.~\ref{fig:QGEC:plot:1}
and~\ref{fig:QGEC:plot:3} can be described by the CC-QSC
\begin{equation}
\label{eq:qgec:qsc:1}
\operator{N}^{y|x}(\rho_\system{S}) = 
\tr_\system{B}\!\left(\!
(V_\system{S}^\Herm V_\system{S}\!\tensor\! \braket{y})\cdot
\Phi^{\mathrm{CBF}}(\rho_\system{S}\!\tensor\!\braket{x})
\!\right),
\end{equation}
whereas that in Fig.~\ref{fig:QGEC:plot:2}
and~\ref{fig:QGEC:plot:4} is described by
\begin{equation}\hspace{0pt}
\label{eq:qgec:qsc:2}
\operator{N}^{y|x}\!(\rho_\system{S}) \!=\! 
\tr_\system{B}\!\!\left(\!
(\tilde{V}_\system{S}^\Herm \tilde{V}_\system{S}\!\tensor\!
    \bra{y}\hspace{-3pt}\ket{y})\!\cdot\!
(\id\!\tensor\!\Phi^{\mathrm{CBF}})(\rho_\system{S}\!\tensor\!
    \bra{x}\hspace{-3pt}\ket{x})\!
\right)\!,\hspace{-8pt}
\end{equation}
where $\{\bra{x}\}_{x\in\set{X}}$ and $\{\bra{y}\}_{y\in\set{Y}}$ are
some orthonormal basis of $\hilbert_{\system{A}}$ and $\hilbert_{\system{B}}$,
respectively.
In the latter case, the memory system $\system{S}$ is
extended as $\hilbert_{\system{S}} =
\hilbert_{\system{S}_{1}}
\tensor\hilbert_{\system{S}_{0}}$.
More specifically, in~\eqref{eq:qgec:qsc:2}, $\rho_\system{S}$ and
$\tilde{V}_\system{S}$ are operators on
$\hilbert_{\system{S}}$,
and $\Phi^{\mathrm{CBF}}$ acts on
$\DensOp(\hilbert_{\system{S}_{0}}\tensor
\hilbert_{\system{A}})$, and $\id$ is the identity map on
$\system{S}_{1}$.
For both scenarios, the input processes 
are binary symmetric i.i.d. processes,
\ie, $Q^{(n)}(\vx_1^n)\defeq 2^{-n}$ for all $\vx_1^n\in\{0,1\}^n$.
The lower bounds in those figures were obtained by minimizing the 
difference function $\Delta_{W}^{(n)}$ defined in~\eqref{eq:delta} w.r.t.
different classes of auxiliary channels (subject to certain time and threshold
constraints). For the case where the auxiliary channels are CC-QSCs, 
Alg.~\ref{alg:grad:1} was applied. For FSMC
auxiliary channels, we implemented the expectation-maximization type algorithm
in~\cite{sadeghi2009optimization} for comparison.
As already emphasized beforehand, these lower bounds represent rates that are
achievable with the help of a mismatched decoder~\cite{ganti2000mismatched}.
Fig.~\ref{fig:QGEC:plot:5} is an example illustrating the typical convergence 
time of different methods (including our own) for minimizing the difference
function.
In all of the above figures, $n=10^5$, and we have used Alg.~\ref{alg:SPA:2}
to \emph{estimate} the information rate. According to our experience, the error
of the estimation in this case lies within the line-width in the figures.
\section{Conclusion}\label{sec:7:conclusion}
In this article, we have considered the scenario of transmitting classical
information over a quantum channel with finite memory using separable-state
ensembles and local measurements. We defined the notion of CC-QSCs
as an equivalent way to describe such communication setups, and
demonstrated how NFGs can be used to visualize such channels.
We have shown that the
information rate of a quantum-state channel is independent of the initial
density operator under suitable conditions, and proposed algorithms for
estimating and bounding such information rate. The computations in such
algorithms can be carried out using the corresponding NFGs of the CC-QSC.
We emphasize that our approach for optimizing the lower bound is data-driven,
and does not require the knowledge of the true channel model.
\section*{Acknowledgment}
It is a great pleasure to acknowledge discussions on topics related to this
paper with Andi Loeliger.
\bibliographystyle{IEEEtran}
\bibliography{reference}
\onecolumn
\appendix
\begin{figure}[h!]
    \centering
    
    \begin{tikzpicture}[scale=\scalefactorA,every node/.style={transform shape},
    factor/.style={rectangle, minimum width=1cm, minimum height=.7cm, draw},
    sfactor/.style={rectangle, minimum size=.4cm, draw},
    darksolid/.style={rectangle, minimum size=.15cm, draw,fill = black,
    inner sep=0pt, outer sep = 0pt},
    label/.style={magenta,anchor=north east,xshift = .1cm}]
\node[darksolid] (S) {}; \node [left=0pt of S] {$\cs_0$};
\node[factor] (E1) [right=.7cm of S] {$W$};
\draw (S) -- (E1);
\node[darksolid] (X1) [above=.8cm of E1] {}; \node[above = 0pt of X1] {$\cx_1$};
\draw (X1) -- (E1);
\draw (E1.south) -- ([yshift=-.8cm]E1.south) node (Y1) [right] {$y_1$};
    
\node[factor] (E2) [right=1cm of E1] {$W$};
\draw (E1) -- (E2) node[above,midway] {$s_1$};
\node[darksolid] (X2) [above=.8cm of E2] {}; \node[above = 0pt of X2] {$\cx_2$};
\draw (X2) -- (E2);
\draw (E2.south) -- ([yshift=-.8cm]E2.south) node[right] {$y_2$};
    
\node[factor, draw=none] (Edummy1) [right=1cm of E2] {$\cdots$};
\draw (E2) -- (Edummy1) node[above,midway] {$s_2$};
\node at (X1-|Edummy1) {$\cdots$};
\node at (Y1-|Edummy1) {$\cdots$};
    
\node[factor] (El) [right=1cm of Edummy1] {$W$};
\draw (Edummy1) -- (El) node[above,midway] {$\cs_{\ell-1}$};
\node[darksolid] (Xl) [above=.8cm of El] {};
\node[above = 0pt of Xl] {$\cx_\ell$};
\draw (Xl) -- (El);
\draw (El.south) -- ([yshift=-.8cm]El.south) node[right] {$y_\ell$};
    
\node[factor] (El2) [right=1cm of El] {$W$};
\draw (El) -- (El2) node[above,midway] {$s_\ell$};
\node[darksolid] (Xl2) [above=.8cm of El2] {};
    \node[above = 0pt of Xl2] {$\cx_{\ell\!+\!1}$};
\draw (Xl2) -- (El2);
\draw (El2.south) -- ([yshift=-.8cm]El2.south) node[right] {$y_{\ell\!+\!1}$};
    
\node[factor, draw=none] (Edummy2) [right=1cm of El2] {$\cdots$};
\draw (El2) -- (Edummy2) node[above,midway] {$s_{\ell\!+\!1}$};
\node at (X1-|Edummy2) {$\cdots$};
\node at (Y1-|Edummy2) {$\cdots$};
    
\node[factor] (En) [right=1cm of Edummy2] {$W$};
\draw (Edummy2) -- (En) node[above,pos=0.4] {$s_{n\!-\!1}$};
\node[darksolid] (Xn) [above=.8cm of En] {}; \node[above = 0pt of Xn] {$\cx_n$};
\draw (Xn) -- (En);
\draw (En.south) -- ([yshift=-.8cm]En.south) node[right] {$y_{n}$};
	
\node[sfactor, inner sep=0pt] (ee) [right=.5cm of En] {$\mathbf{1}$};
\draw (En) -- (ee) node[above,midway] {$s_n$};
    
\begin{pgfonlayer}{bg}
    \draw[dashed, cyan, line width=1.5pt, fill=cyan!10]
        ([xshift=-2.1cm,yshift=2.8cm]E1) rectangle
        ([xshift=1.5cm,yshift=-2.4cm]ee);
    \draw[dashed, cyan, line width=1.5pt, fill=cyan!20]
        ([xshift=-.6cm,yshift=2.6cm]E1) rectangle
        ([xshift=1.3cm,yshift=-2.2cm]ee);
    \draw[dashed, cyan, line width=1.5pt, fill=cyan!30]
    ([xshift=-.7cm,yshift=2.4cm]E2) rectangle
    ([xshift=1.1cm,yshift=-2cm]ee);
    \draw[dashed, cyan, line width=1.5pt, fill=cyan!40]
        ([xshift=-.6cm,yshift=2.2cm]El) rectangle
        ([xshift=.9cm,yshift=-1.8cm]ee);
    \draw[dashed, cyan, line width=1.5pt, fill=cyan!50]
        ([xshift=-.7cm,yshift=2cm]El2) rectangle
        ([xshift=.7cm,yshift=-1.6cm]ee);
    \draw[dashed, cyan, line width=1.5pt, fill=cyan!60]
        ([xshift=-.7cm,yshift=1.8cm]En) rectangle
        ([xshift=.5cm,yshift=-1.4cm]ee);
\end{pgfonlayer}
\end{tikzpicture}
    \caption{Verification of~\eqref{eq:verfy:FSMC:conditional:distribution}.
        Note that every closing-the-box operation yields a function node
        representing the constant function~$1$.}
    \label{fig:FMSC:high:level:2}
\end{figure}
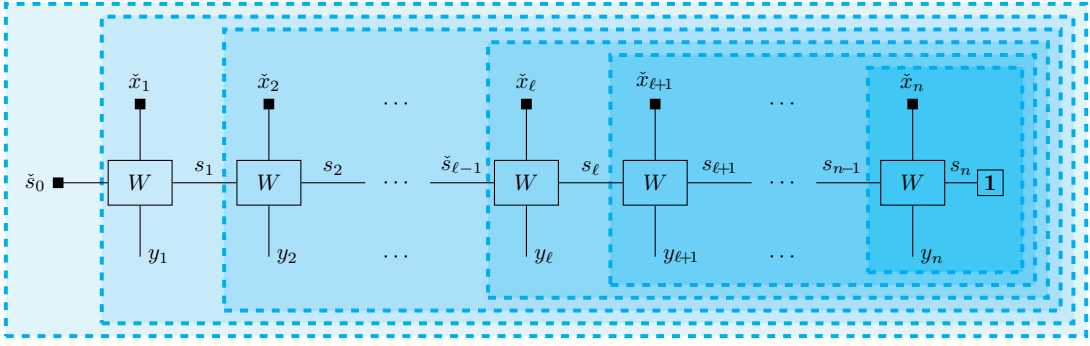
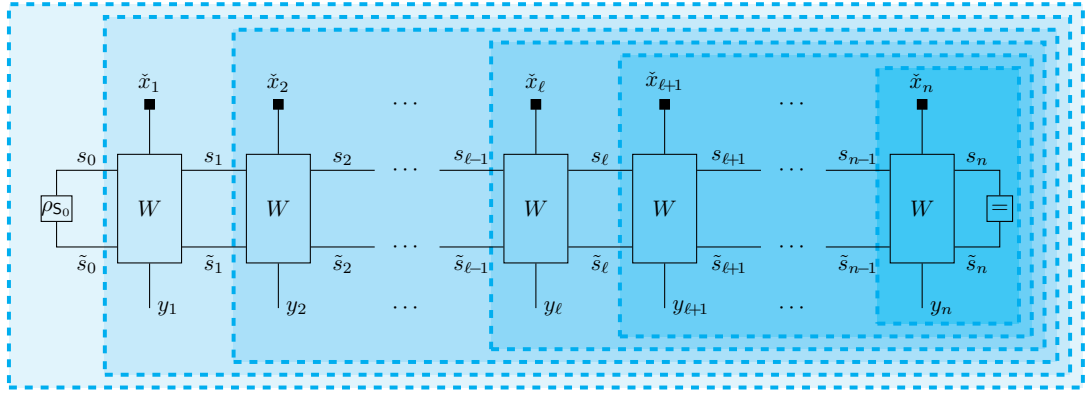
\begin{figure}[h!]
  \centering
  
    \begin{tikzpicture}[scale=\scalefactorA,every node/.style={transform shape},
    factor/.style={rectangle, minimum width=1cm, minimum height=1.7cm, draw},
    sfactor/.style={rectangle, minimum size=.4cm, draw},
    darksolid/.style={rectangle, minimum size=.15cm, draw,fill = black,
    inner sep=0pt, outer sep = 0pt},
    label/.style={magenta,anchor=north east,xshift = .1cm}]
\node[sfactor,inner sep=0pt] (S) {$\rho_{\system{S}_0}$};
\node[factor] (E1) [right=.7cm of S] {$W$};
\draw (S.north) |- ([yshift=.6cm]E1.west) node[above=-0.05cm,pos=.75] {$s_0$};
\draw (S.south) |- ([yshift=-.6cm]E1.west) node[below,pos=.75] {$\tilde{s}_0$};
\node[darksolid] (X1) [above=.7cm of E1] {}; \node[above = 0pt of X1] {$\cx_1$};
\draw (X1) -- (E1);
\draw (E1.south) -- ([yshift=-.7cm]E1.south) node (Y1) [right] {$y_1$};
    
\node[factor] (E2) [right=1cm of E1] {$W$};
\draw ([yshift=.6cm]E1.east) |- ([yshift=.6cm]E2.west) 
    node[above=-0.05cm,pos=.75] {$s_1$};
\draw ([yshift=-.6cm]E1.east) |- ([yshift=-.6cm]E2.west) 
    node[below,pos=.75] {$\tilde{s}_1$};
\node[darksolid] (X2) [above=.7cm of E2] {}; \node[above = 0pt of X2] {$\cx_2$};
\draw (X2) -- (E2);
\draw (E2.south) -- ([yshift=-.7cm]E2.south) node[right] {$y_2$};
    
\node[factor, draw=none] (Edummy1) [right=1cm of E2] {};
\node at ([yshift=.6cm]E2.east-|Edummy1) {$\cdots$};
\node at ([yshift=-.6cm]E2.east-|Edummy1) {$\cdots$};
\draw ([yshift=.6cm]E2.east) |- ([yshift=.6cm]Edummy1.west)
    node[above=-0.05cm,pos=.75] {$s_2$};
\draw ([yshift=-.6cm]E2.east) |- ([yshift=-.6cm]Edummy1.west) 
    node[below,pos=.75] {$\tilde{s}_2$};
\node at (X1-|Edummy1) {$\cdots$};
\node at (Y1-|Edummy1) {$\cdots$};
    
\node[factor] (El) [right=1cm of Edummy1] {$W$};
\draw ([yshift=.6cm]Edummy1.east) |- ([yshift=.6cm]El.west)
    node[above=-0.05cm,pos=.75] {$s_{\ell\!-\!1}$};
\draw ([yshift=-.6cm]Edummy1.east) |- ([yshift=-.6cm]El.west)
    node[below,pos=.75] {$\tilde{s}_{\ell\!-\!1}$};
\node[darksolid] (Xl)[above=.7cm of El] {};\node[above = 0pt of Xl]{$\cx_\ell$};
\draw (Xl) -- (El);
\draw (El.south) -- ([yshift=-.7cm]El.south) node[right] {$y_\ell$};
    
\node[factor] (El2) [right=1cm of El] {$W$};
\draw ([yshift=.6cm]El.east) |- ([yshift=.6cm]El2.west)
    node[above=-0.05cm,pos=.75] {$s_\ell$};
\draw ([yshift=-.6cm]El.east) |- ([yshift=-.6cm]El2.west)
    node[below,pos=.75] {$\tilde{s}_\ell$};
\node[darksolid] (Xl2) [above=.7cm of El2] {};
    \node[above = 0pt of Xl2] {$\cx_{\ell\!+\!1}$};
\draw (Xl2) -- (El2);
\draw (El2.south) -- ([yshift=-.7cm]El2.south) node[right] {$y_{\ell\!+\!1}$};
    
\node[factor, draw=none] (Edummy2) [right=1cm of El2] {};
\node at ([yshift=.6cm]El2.east-|Edummy2) {$\cdots$};
\node at ([yshift=-.6cm]El2.east-|Edummy2) {$\cdots$};
\draw ([yshift=.6cm]El2.east) |- ([yshift=.6cm]Edummy2.west)
    node[above=-0.05cm,pos=.75] {$s_{\ell\!+\!1}$};
\draw ([yshift=-.6cm]El2.east) |- ([yshift=-.6cm]Edummy2.west)
    node[below,pos=.75] {$\tilde{s}_{\ell\!+\!1}$};
\node at (X1-|Edummy2) {$\cdots$};
\node at (Y1-|Edummy2) {$\cdots$};
    
\node[factor] (En) [right=1cm of Edummy2] {$W$};
\draw ([yshift=.6cm]Edummy2.east) |- ([yshift=.6cm]En.west)
    node[above=-0.05cm,pos=.75] {$s_{n\!-\!1}$};
\draw ([yshift=-.6cm]Edummy2.east) |- ([yshift=-.6cm]En.west)
    node[below,pos=.75] {$\tilde{s}_{n\!-\!1}$};
\node[darksolid] (Xn) [above=.7cm of En] {}; \node[above = 0pt of Xn] {$\cx_n$};
\draw (Xn) -- (En);
\draw (En.south) -- ([yshift=-.7cm]En.south) node[right] {$y_n$};
    
\node[sfactor, inner sep=0pt] (ee) [right=.5cm of En] {$=$};
\draw ([yshift=.6cm]En.east) -| (ee.north) node[above=-0.05cm,pos=.25] {$s_n$};
\draw ([yshift=-.6cm]En.east) -| (ee.south) node[below,pos=.25] {$\tilde{s}_n$};
    
\begin{pgfonlayer}{bg}
    \draw[dashed, cyan, line width=1.5pt, fill=cyan!10]
        ([xshift=-2.2cm,yshift=3.2cm]E1) rectangle
        ([xshift=1.3cm,yshift=-2.8cm]ee);
    \draw[dashed, cyan, line width=1.5pt, fill=cyan!20]
        ([xshift=-.7cm,yshift=3cm]E1) rectangle
        ([xshift=1.1cm,yshift=-2.6cm]ee);
    \draw[dashed, cyan, line width=1.5pt, fill=cyan!30]
        ([xshift=-.7cm,yshift=2.8cm]E2) rectangle
        ([xshift=.9cm,yshift=-2.4cm]ee);
    \draw[dashed, cyan, line width=1.5pt, fill=cyan!40]
    ([xshift=-.7cm,yshift=2.6cm]El) rectangle 
    ([xshift=.7cm,yshift=-2.2cm]ee);
    \draw[dashed, cyan, line width=1.5pt, fill=cyan!50]
    ([xshift=-.7cm,yshift=2.4cm]El2) rectangle
    ([xshift=.5cm,yshift=-2cm]ee);
    \draw[dashed, cyan, line width=1.5pt, fill=cyan!60]
    ([xshift=-.7cm,yshift=2.2cm]En) rectangle
    ([xshift=.3cm,yshift=-1.8cm]ee);
\end{pgfonlayer}
\end{tikzpicture}
  \caption{Counterpart of Fig.~\ref{fig:FMSC:high:level:2} for QSCs.
    Note that every closing-the-box operation yields a function node representing
    a Kronecker-delta function node, i.e., a degree-two equality function node.}
  \label{fig:QFSM:closing:the:box}
\end{figure}
\begin{figure}[h!]
    \centering
    
    \begin{tikzpicture}[scale=\scalefactorA,every node/.style={transform shape},
    node/.style={draw=none},
    factor/.style={rectangle, minimum width=1cm, minimum height=.7cm, draw},
    sfactor/.style={rectangle, minimum size=.4cm, draw},
    darksolid/.style={rectangle, minimum size=.15cm, draw, fill = black,
        inner sep=0pt, outer sep = 0pt}]
\node[sfactor,inner sep=0pt] (S) {$P_{\rv{S}_0}$};
\node[factor] (E1) [right=.7cm of S] {$W$};
\draw (S) -- (E1) node[above,midway] {$s_0$};
\draw (E1.north) -- ([yshift=.8cm]E1.north) node (X1) [darksolid]{} 
    node[right] {$\cx_1$};
\draw (E1.south) -- ([yshift=-.8cm]E1.south) node (Y1) [darksolid]{} 
    node[right] {$\cy_1$};

\node[factor, draw=none] (Edummy1) [right=1cm of E1] {$\cdots$};
\draw (E1) -- (Edummy1) node[above,midway] {$s_1$};
\node at (X1-|Edummy1) {$\cdots$};
\node at (Y1-|Edummy1) {$\cdots$};

\node[factor] (El0) [right=1cm of Edummy1] {$W$};
\draw (Edummy1) -- (El0) node[above,midway] {$s_{\ell-2}$};
\draw (El0.north) -- ([yshift=.8cm]El0.north) node (Xl0) [darksolid]{} 
    node[right] {$\cx_{\ell\!-\!1}$};
\draw (El0.south) -- ([yshift=-.8cm]El0.south)
    node[darksolid]{} node[right] {$\cy_{\ell\!-\!1}$};
    
\node[factor] (El) [right=1cm of El0] {$W$};
\draw (El0) -- (El) node[above,pos=.65] {$s_{\ell\!-\!1}$};
\draw (El.north) -- ([yshift=.8cm]El.north) node (Xl) [darksolid]{} 
    node[right] {$\cx_\ell$};
\draw (El.south) -- ([yshift=-1.8cm]El.south) node[right] {$y_\ell$};
    
\node[factor] (El2) [right=1cm of El] {$W$};
\draw (El) -- (El2) node[above,midway] {$s_\ell$};
\draw (El2.north) -- ([yshift=.8cm]El2.north) node (Xl2) [darksolid]{} 
    node[right] {$\cx_{\ell\!+\!1}$};
\draw (El2.south) -- ([yshift=-.8cm]El2.south) node[right] {$y_{\ell\!+\!1}$};
	
\node[factor, draw=none] (Edummy2) [right=1cm of El2] {$\cdots$};
\draw (El2) -- (Edummy2) node[above,midway] {$s_{\ell+1}$};
\node at (X1-|Edummy2) {$\cdots$};
\node at (Y1-|Edummy2) {$\cdots$};
    
\node[factor] (En) [right=1cm of Edummy2] {$W$};
\draw (Edummy2) -- (En) node[above,midway] {$s_{n\!-\!1}$};
\draw (En.north) -- ([yshift=.8cm]En.north) node (Xn) [darksolid]{} 
    node[right] {$\cx_n$};
\draw (En.south) -- ([yshift=-.8cm]En.south) node[right] {$y_n$};
    
\node[sfactor, inner sep=0pt] (ee) [right=.5cm of En] {$\mathbf{1}$};
\draw (En) -- (ee) node[above,midway] {$s_n$};
    
\begin{pgfonlayer}{bg}
    \draw[dashed, black, line width=1.5pt, fill=yellow!20]
        ([xshift=-.8cm,yshift=1.6cm]S) rectangle 
        ([xshift=.6cm,yshift=-1.6cm]ee);
    \node[anchor=north west] at ([xshift=-.6cm,yshift=-1.6cm]S|-ee)
        {$P_{\rv{Y}_\ell,\rv{Y}_1^{\ell\!-\!1}|\rv{X}_1^\ell}
        (y_\ell,\cvy_1^{\ell\!-\!1}|\cvx_1^\ell)$};
    \draw[dashed, cyan, line width=1.5pt, fill=cyan!20]
        ([xshift=-.7cm,yshift=1.4cm]El2) rectangle 
        ([xshift=.4cm,yshift=-1.4cm]ee);
    \node[anchor=south east, fill=cyan!50] at ([xshift=.4cm,yshift=-1.4cm]ee)
        {$\mathbf{1}$};
    \draw[dashed, magenta, line width=1.5pt, fill=magenta!20]
    ([xshift=-.6cm,yshift=1.4cm]S) rectangle 
    ([xshift=.8cm,yshift=-1.4cm]El0);
\end{pgfonlayer}
\end{tikzpicture}
    \caption{Efficient simulation of the channel output at step $\ell$ given 
        channel input $\cvx_1^n$ and channel output $\cvy_1^{\ell-1}$ for
        an FSMC.}
    \label{fig:CFSM:channel:simulation:Y}
\end{figure}
\begin{figure}[h!]
  \centering
  
    \begin{tikzpicture}[scale=\scalefactorA,every node/.style={transform shape},
    factor/.style={rectangle, minimum width=1cm, minimum height=1.7cm, draw},
    sfactor/.style={rectangle, minimum size=.4cm, draw},
    darksolid/.style={rectangle, minimum size=.15cm, draw,fill = black,
    inner sep=0pt, outer sep = 0pt}]
\node[sfactor,inner sep=0pt] (S) {$\rho_{\system{S}_0}$};
\node[factor] (E1) [right=.7cm of S] {$W$};
\draw (S.north) |- ([yshift=.6cm]E1.west) node[above=-0.05cm,pos=.75] {$s_0$};
\draw (S.south) |- ([yshift=-.6cm]E1.west) node[below,pos=.75] {$\tilde{s}_0$};
\draw (E1.north) -- ([yshift=.8cm]E1.north) node (X1) [darksolid]{} 
    node[right] {$\cx_1$};
\draw (E1.south) -- ([yshift=-.8cm]E1.south) node[darksolid]{}
    node (Y1) [right] {$\cy_1$};
    
\node[factor, draw=none] (Edummy1) [right=1cm of E1] {};
\node at ([yshift=.6cm]E1.east-|Edummy1) {$\cdots$};
\node at ([yshift=-.6cm]E1.east-|Edummy1) {$\cdots$};
\draw ([yshift=.6cm]E1.east) |- ([yshift=.6cm]Edummy1.west)
    node[above=-0.05cm,pos=.75] {$s_1$};
\draw ([yshift=-.6cm]E1.east) |- ([yshift=-.6cm]Edummy1.west)
    node[below,pos=.75] {$\tilde{s}_1$};
\node at (X1-|Edummy1) {$\cdots$};
\node at (Y1-|Edummy1) {$\cdots$};
    
\node[factor] (El0) [right=1cm of Edummy1] {$W$};
\draw ([yshift=.6cm]Edummy1.east) |- ([yshift=.6cm]El0.west)
    node[above=-0.05cm,pos=.75] {$s_{\ell\!-\!2}$};
\draw ([yshift=-.6cm]Edummy1.east) |- ([yshift=-.6cm]El0.west)
    node[below,pos=.75] {$\tilde{s}_{\ell\!-\!2}$};
\draw (El0.north) -- ([yshift=.8cm]El0.north) node (Xl0) [darksolid]{} 
    node[right] {$\cx_{\ell\!-\!1}$};
\draw (El0.south) -- ([yshift=-.8cm]El0.south) node[darksolid]{}
    node[right] {$\cy_{\ell\!-\!1}$};
    
\node[factor] (El) [right=1cm of El0] {$W$};
\draw ([yshift=.6cm]El0.east) |- ([yshift=.6cm]El.west)
    node[above=-0.05cm,pos=.85] {$s_{\ell\!-\!1}$};
\draw ([yshift=-.6cm]El0.east) |- ([yshift=-.6cm]El.west)
    node[below,pos=.85] {$\tilde{s}_{\ell\!-\!1}$};
\draw (El.north) -- ([yshift=.8cm]El.north) node (Xl) [darksolid]{} 
    node[right] {$\cx_\ell$};
\draw (El.south) -- ([yshift=-2.2cm]El.south) node[right] {$y_\ell$};
    
\node[factor] (El2) [right=1cm of El] {$W$};
\draw ([yshift=.6cm]El.east) |- ([yshift=.6cm]El2.west)
    node[above=-0.05cm,pos=.75] {$s_\ell$};
\draw ([yshift=-.6cm]El.east) |- ([yshift=-.6cm]El2.west)
    node[below,pos=.75] {$\tilde{s}_\ell$};
\draw (El2.north) -- ([yshift=.8cm]El2.north) node (Xl2) [darksolid]{} 
    node[right] {$\cx_{\ell\!+\!1}$};
\draw (El2.south) -- ([yshift=-.8cm]El2.south) node[right] {$y_{\ell\!+\!1}$};
    
\node[factor, draw=none] (Edummy2) [right=1cm of El2] {};
\node at ([yshift=.6cm]El2.east-|Edummy2) {$\cdots$};
\node at ([yshift=-.6cm]El2.east-|Edummy2) {$\cdots$};
\draw ([yshift=.6cm]El2.east) |- ([yshift=.6cm]Edummy2.west)
    node[above=-0.05cm,pos=.75] {$s_{\ell\!+\!1}$};
\draw ([yshift=-.6cm]El2.east) |- ([yshift=-.6cm]Edummy2.west)
    node[below,pos=.75] {$\tilde{s}_{\ell\!+\!1}$};
\node at (X1-|Edummy2) {$\cdots$};
\node at (Y1-|Edummy2) {$\cdots$};
    
\node[factor] (En) [right=1cm of Edummy2] {$W$};
\draw ([yshift=.6cm]Edummy2.east) |- ([yshift=.6cm]En.west)
    node[above=-0.05cm,pos=.75] {$s_{n\!-\!1}$};
\draw ([yshift=-.6cm]Edummy2.east) |- ([yshift=-.6cm]En.west)
    node[below,pos=.75] {$\tilde{s}_{n\!-\!1}$};
\draw (En.north) -- ([yshift=.8cm]En.north) node (Xn) [darksolid]{} 
    node[right] {$\cx_n$};
\draw (En.south) -- ([yshift=-.8cm]En.south) node[right] {$y_n$};
    
\node[sfactor, inner sep=0pt] (ee) [right=.5cm of En] {$=$};
\draw ([yshift=.6cm]En.east) -| (ee.north) node[above=-0.05cm,pos=.25] {$s_n$};
\draw ([yshift=-.6cm]En.east) -| (ee.south) node[below,pos=.25] {$\tilde{s}_n$};
    
\begin{pgfonlayer}{bg}
    \draw[dashed, black, line width=1.5pt, fill=yellow!20]
        ([xshift=-.8cm,yshift=2.4cm]S) rectangle 
        ([xshift=.8cm,yshift=-2.3cm]ee);
    \node[anchor=north west] at ([xshift=-.8cm,yshift=-2.3cm]S|-ee)
        {$P_{\rv{Y}_\ell,\rv{Y}_1^{\ell\!-\!1}|\rv{X}_1^\ell,\system{S}}
        (y_\ell,\cvy_1^{\ell\!-\!1}|\cvx_1^\ell;\rho_{\system{S}_0})$};
    \draw[dashed, cyan, line width=1.5pt, fill=cyan!20]
        ([xshift=-.7cm,yshift=2.2cm]El2) rectangle
        ([xshift=.6cm,yshift=-2.1cm]ee);
    \node[anchor=south east, fill=cyan!50] at ([xshift=.6cm,yshift=-2.1cm]ee)
        {$\delta$};
    \draw[dashed, magenta, line width=1.5pt, fill=magenta!20]
        ([xshift=-.6cm,yshift=2.2cm]S) rectangle 
        ([xshift=.8cm,yshift=-2.1cm]El0);
\end{pgfonlayer}
\end{tikzpicture}
  \caption{Efficient simulation of the channel output at step $\ell$ given 
    channel input $\cvx_1^n$ and channel output $\cvy_1^{\ell-1}$ for a QSC.}
  \label{fig:QFSM:channel:simulation:Y}
\end{figure}
\begin{figure}[h!]
  \centering
  
    \begin{tikzpicture}[scale=\scalefactorB,every node/.style={transform shape},
    factor/.style={rectangle, minimum width=1cm, minimum height=.7cm, draw},
    sfactor/.style={rectangle, minimum size=.5cm, draw},
    darksolid/.style={rectangle, minimum size=.15cm, draw,fill = black,
    inner sep=0pt, outer sep = 0pt},
    label/.style={magenta,anchor=north east,xshift = .1cm}]
\node[sfactor,inner sep=0pt] (S) {$P_{\rv{S}_0}$};
\node[factor] (E1) [right=.7cm of S] {$W$};
\draw (S) -- (E1) node[above,midway] {$s_0$};
\node[sfactor] (X1) [above=.8cm of E1] {$Q$};
\draw (X1) -- (E1) node[right, midway] {$x_1$};
\draw (E1.south) -- ([yshift=-.8cm]E1.south) node (Y1) [darksolid]{}
    node[right] {$\cy_1$};
    
\node[factor] (E2) [right=1cm of E1] {$W$};
\draw (E1) -- (E2) node[above,midway] {$s_1$};
\node[sfactor] (X2) [above=.8cm of E2] {$Q$};
\draw (X2) -- (E2) node[right, midway] {$x_2$};
\draw (E2.south) -- ([yshift=-.8cm]E2.south) node[darksolid]{}
    node[right] {$\cy_2$};
    
\node[factor, draw=none] (Edummy1) [right=1cm of E2] {$\cdots$};
\draw (E2) -- (Edummy1) node[above,midway] {$s_2$};
\node at (X1-|Edummy1) {$\cdots$};
\node at (Y1-|Edummy1) {$\cdots$};
    
\node[factor] (El) [right=1cm of Edummy1] {$W$};
\draw (Edummy1) -- (El) node[above,midway] {$s_{\ell-1}$};
\node[sfactor] (Xl) [above=.8cm of El] {$Q$};
\draw (Xl) -- (El) node[right, midway] {$x_\ell$};
\draw (El.south) -- ([yshift=-.8cm]El.south) node[darksolid]{}
    node[right] {$\cy_\ell$};
    
    
\node[factor, draw=none] (Edummy2) [right=1cm of El] {$\cdots$};
\draw (El) -- (Edummy2) node[above,pos=.7] {$s_{\ell\!+\!1}$};
\node at (X1-|Edummy2) {$\cdots$};
\node at (Y1-|Edummy2) {$\cdots$};
    
\node[factor] (En) [right=1cm of Edummy2] {$W$};
\draw (Edummy2) -- (En) node[above,midway] {$s_{n\!-\!1}$};
\node[sfactor] (Xn) [above=.8cm of En] {$Q$};
\draw (Xn) -- (En) node[right, midway] {$x_n$};
\draw (En.south) -- ([yshift=-.8cm]En.south) node[darksolid]{}
    node[right] {$\cy_{n}$};
    
\node[sfactor, inner sep=0pt] (ee) [right=1cm of En] {$\mathbf{1}$};
\draw (En) -- (ee) node[above,midway] {$s_n$};
    
\begin{pgfonlayer}{bg}
    \draw[dashed, magenta, line width=1.5pt, fill=magenta!15]
        ([xshift=-1.2cm,yshift=2.5cm]S) rectangle
        ([xshift=.7cm,yshift=-3.2cm]En);
    \node[label] at ([xshift=.7cm,yshift=-3.2cm]En) {$\muY_{n}$};
    \draw[dashed, magenta, line width=1.5pt, fill=magenta!30]
        ([xshift=-1cm,yshift=2.3cm]S) rectangle
        ([xshift=.7cm,yshift=-2.6cm]El);
    \node[label] at ([xshift=.7cm,yshift=-2.6cm]El) {$\muY_{\ell}$};
    \draw[dashed, magenta, line width=1.5pt, fill=magenta!45]
        ([xshift=-.8cm,yshift=2.1cm]S) rectangle
        ([xshift=.7cm,yshift=-2cm]E2);
    \node[label] at ([xshift=.7cm,yshift=-2cm]E2) {$\muY_2$};
    \draw[dashed, magenta, line width=1.5pt, fill=magenta!60]
        ([xshift=-.6cm,yshift=1.9cm]S) rectangle
        ([xshift=.7cm,yshift=-1.4cm]E1);
    \node[label] at ([xshift=.7cm,yshift=-1.4cm]E1) {$\muY_1$};
\end{pgfonlayer}
\end{tikzpicture}
  \caption{The iterative computation of $\muY_{\ell}$ as 
    in~\eqref{eq:recursive:state:metric:Y:1} can be understood as a sequence
    of CTB operations as shown above.}
  \label{fig:CFSM:estimate:hY}
\end{figure}
\begin{figure}[h!]
  \centering
  
    \begin{tikzpicture}[scale=\scalefactorB,every node/.style={transform shape},
    factor/.style={rectangle, minimum width=1cm, minimum height=1.7cm, draw},
    sfactor/.style={rectangle, minimum size=.4cm, draw},
    darksolid/.style={rectangle, minimum size=.15cm, draw, fill = black,
    inner sep=0pt, outer sep = 0pt},
    label/.style={magenta, anchor=north east, xshift = .1cm}]
\node[sfactor,inner sep=0pt] (S) {$\rho_{\system{S}_0}$};
\node[factor] (E1) [right=.7cm of S] {$W$};
\draw (S.north) |- ([yshift=.6cm]E1.west) node[above=-0.05cm,pos=.75] {$s_0$};
\draw (S.south) |- ([yshift=-.6cm]E1.west) node[below,pos=.75] {$\tilde{s}_0$};
\node[sfactor] (X1) [above=.7cm of E1] {$Q$};
\draw (X1) -- (E1) node[right, midway] {$x_1$};
\draw (E1.south) -- ([yshift=-.7cm]E1.south) node[darksolid]{}
    node (Y1) [right] {$\cy_1$};
    
\node[factor] (E2) [right=1cm of E1] {$W$};
\draw ([yshift=.6cm]E1.east) |- ([yshift=.6cm]E2.west)
    node[above=-0.05cm,pos=.75] {$s_1$};
\draw ([yshift=-.6cm]E1.east) |- ([yshift=-.6cm]E2.west)
    node[below,pos=.75] {$\tilde{s}_1$};
\node[sfactor] (X2) [above=.7cm of E2] {$Q$};
\draw (X2) -- (E2) node[right, midway] {$x_2$};
\draw (E2.south) -- ([yshift=-.7cm]E2.south) node[darksolid]{}
    node[right] {$\cy_2$};
    
\node[factor, draw=none] (Edummy1) [right=1cm of E2] {};
\node at ([yshift=.6cm]E2.east-|Edummy1) {$\cdots$};
\node at ([yshift=-.6cm]E2.east-|Edummy1) {$\cdots$};
\draw ([yshift=.6cm]E2.east) |- ([yshift=.6cm]Edummy1.west) node[above=-0.05cm,pos=.75] {$s_2$};
\draw ([yshift=-.6cm]E2.east) |- ([yshift=-.6cm]Edummy1.west)
    node[below,pos=.75] {$\tilde{s}_2$};
\node at (X1-|Edummy1) {$\cdots$};
\node at (Y1-|Edummy1) {$\cdots$};
    
\node[factor] (El) [right=1cm of Edummy1] {$W$};
\draw ([yshift=.6cm]Edummy1.east) |- ([yshift=.6cm]El.west)
    node[above=-0.05cm,pos=.75] {$s_{\ell-1}$};
\draw ([yshift=-.6cm]Edummy1.east) |- ([yshift=-.6cm]El.west)
    node[below,pos=.75] {$\tilde{s}_{\ell-1}$};
\node[sfactor] (Xl) [above=.7cm of El] {$Q$};
\draw (Xl) -- (El) node[right, midway] {$x_\ell$};
\draw (El.south) -- ([yshift=-.7cm]El.south) node[darksolid]{}
    node[right] {$\cy_\ell$};
    
    
\node[factor, draw=none] (Edummy2) [right=1cm of El] {};
\node at ([yshift=.6cm]El.east-|Edummy2) {$\cdots$};
\node at ([yshift=-.6cm]El.east-|Edummy2) {$\cdots$};
\draw ([yshift=.6cm]El.east) |- ([yshift=.6cm]Edummy2.west)
    node[above=-0.05cm,pos=.85] {$s_{\ell\!+\!1}$};
\draw ([yshift=-.6cm]El.east) |- ([yshift=-.6cm]Edummy2.west)
    node[below,pos=.85] {$\tilde{s}_{\ell\!+\!1}$};
\node at (X1-|Edummy2) {$\cdots$};
\node at (Y1-|Edummy2) {$\cdots$};
    
\node[factor] (En) [right=1cm of Edummy2] {$W$};
\draw ([yshift=.6cm]Edummy2.east) |- ([yshift=.6cm]En.west)
    node[above=-0.05cm,pos=.75] {$s_{n\!-\!1}$};
\draw ([yshift=-.6cm]Edummy2.east) |- ([yshift=-.6cm]En.west)
    node[below,pos=.75] {$\tilde{s}_{n\!-\!1}$};
\node[sfactor] (Xn) [above=.7cm of En] {$Q$};
\draw (Xn) -- (En) node[right, midway] {$x_n$};
\draw (En.south) -- ([yshift=-.7cm]En.south) node[darksolid]{}
    node[right] {$\cy_n$};
    
\node[sfactor, inner sep=0pt] (ee) [right=1cm of En] {$=$};
\draw ([yshift=.6cm]En.east) -| (ee.north) node[above=-0.05cm,pos=.25] {$s_n$};
\draw ([yshift=-.6cm]En.east) -| (ee.south) node[below,pos=.25] {$\tilde{s}_n$};
    
\begin{pgfonlayer}{bg}
    \draw[dashed, magenta, line width=1.5pt, fill=magenta!15]
        ([xshift=-1.2cm,yshift=2.9cm]S) rectangle
        ([xshift=.7cm,yshift=-3.6cm]En);
    \node[label] at ([xshift=.7cm,yshift=-3.6cm]En) {$\sigmaY_n$};
    \draw[dashed, magenta, line width=1.5pt, fill=magenta!30]
        ([xshift=-1cm,yshift=2.7cm]S) rectangle
        ([xshift=.7cm,yshift=-3cm]El);
    \node[label] at ([xshift=.7cm,yshift=-3cm]El) {$\sigmaY_{\ell}$};
    \draw[dashed, magenta, line width=1.5pt, fill=magenta!45]
        ([xshift=-.8cm,yshift=2.5cm]S) rectangle
        ([xshift=.7cm,yshift=-2.4cm]E2);
    \node[label] at ([xshift=.7cm,yshift=-2.4cm]E2) {$\sigmaY_2$};
    \draw[dashed, magenta, line width=1.5pt, fill=magenta!60]
        ([xshift=-.6cm,yshift=2.3cm]S) rectangle
        ([xshift=.7cm,yshift=-1.8cm]E1);
    \node[label] at ([xshift=.7cm,yshift=-1.8cm]E1) {$\sigmaY_1$};
\end{pgfonlayer}
\end{tikzpicture}
  \caption{The iterative computation of $\sigmaY_{\ell}$ as 
    in~\eqref{eq:recursive:quantum:state:metric:Y:1} can be understood as a
    sequence of CTB operations as shown above.}
  \label{fig:QFSM:estimate:hY}
\end{figure}
\begin{figure}[h!]
  \centering
  
    \begin{tikzpicture}[scale=\scalefactorB,every node/.style={transform shape},
    factor/.style={rectangle, minimum width=1cm, minimum height=.7cm, draw},
    sfactor/.style={rectangle, minimum size=.4cm, draw},
    darksolid/.style={rectangle, minimum size=.15cm, draw, fill = black,
    inner sep=0pt, outer sep = 0pt},
    label/.style={magenta, anchor=north east, xshift = .1cm}]
\node[sfactor,inner sep=0pt] (S) {$P_{\rv{S}_0}$};
\node[factor] (E1) [right=.7cm of S] {$W$};
\draw (S) -- (E1) node[above,midway] {$s_0$};
\node[darksolid] (X1) [above=.3cm of E1.north,anchor = south] {};
\draw (X1) -- (E1) node[right, pos = 0] {$\cx_1$};
\draw (X1) -- ([yshift=.3cm]X1.north) 
    node (pX1) [sfactor, anchor = south, pos=1] {$Q$};
\draw (E1.south) -- ([yshift=-.8cm]E1.south) node (Y1) [darksolid]{} 
    node[right] {$\cy_1$};
    
\node[factor] (E2) [right=1cm of E1] {$W$};
\draw (E1) -- (E2) node[above,midway] {$s_1$};
\node[darksolid] (X2) [above=.3cm of E2.north,anchor = south] {};
\draw (X2) -- (E2) node[right, pos = 0] {$\cx_2$};
\draw (X2) -- ([yshift=.3cm]X2.north)
    node[sfactor, anchor = south, pos=1] {$Q$};
\draw (E2.south) -- ([yshift=-.8cm]E2.south) 
    node[darksolid]{} node[right] {$\cy_2$};
    
\node[factor, draw=none] (Edummy1) [right=1cm of E2] {$\cdots$};
\draw (E2) -- (Edummy1) node[above,midway] {$s_2$};
\node at (pX1-|Edummy1) {$\cdots$};
\node at (Y1-|Edummy1) {$\cdots$};
    
\node[factor] (El) [right=1cm of Edummy1] {$W$};
\draw (Edummy1) -- (El) node[above,midway] {$s_{\ell-1}$};
\node[darksolid] (Xl) [above=.3cm of El.north,anchor = south] {};
\draw (Xl) -- (El) node[right, pos = 0] {$\cx_\ell$};
\draw (Xl) -- ([yshift=.3cm]Xl.north)
    node[sfactor, anchor = south, pos=1] {$Q$};
\draw (El.south) -- ([yshift=-.8cm]El.south) node[darksolid]{}
    node[right] {$\cy_\ell$};
    
    
\node[factor, draw=none] (Edummy2) [right=1cm of El] {$\cdots$};
\draw (El) -- (Edummy2) node[above,pos=.7] {$s_{\ell\!+\!1}$};
\node at (pX1-|Edummy2) {$\cdots$};
\node at (Y1-|Edummy2) {$\cdots$};
    
\node[factor] (En) [right=1cm of Edummy2] {$W$};
\draw (Edummy2) -- (En) node[above,midway] {$s_{n\!-\!1}$};
\node[darksolid] (Xn) [above=.3cm of En.north,anchor = south] {};
\draw (Xn) -- (En) node[right, pos = 0] {$\cx_n$};
\draw (Xn) -- ([yshift=.3cm]Xn.north)
    node[sfactor, anchor = south, pos=1] {$Q$};
\draw (En.south) -- ([yshift=-.8cm]En.south) node[darksolid]{}
    node[right] {$\cy_{n}$};
    
\node[sfactor, inner sep=0pt] (ee) [right=1cm of En] {$\mathbf{1}$};
\draw (En) -- (ee) node[above,midway] {$s_n$};
    
\begin{pgfonlayer}{bg}
    \draw[dashed, magenta, line width=1.5pt, fill=magenta!15]
        ([xshift=-1.2cm,yshift=2.5cm]S) rectangle
        ([xshift=.7cm,yshift=-3.2cm]En);
    \node[label] at ([xshift=.7cm,yshift=-3.2cm]En) {$\muXY_{n}$};
    \draw[dashed, magenta, line width=1.5pt, fill=magenta!30]
        ([xshift=-1cm,yshift=2.3cm]S) rectangle
        ([xshift=.7cm,yshift=-2.6cm]El);
    \node[label] at ([xshift=.7cm,yshift=-2.6cm]El) {$\muXY_{\ell}$};
    \draw[dashed, magenta, line width=1.5pt, fill=magenta!45]
        ([xshift=-.8cm,yshift=2.1cm]S) rectangle
        ([xshift=.7cm,yshift=-2cm]E2);
    \node[label] at ([xshift=.7cm,yshift=-2cm]E2) {$\muXY_2$};
    \draw[dashed, magenta, line width=1.5pt, fill=magenta!60]
        ([xshift=-.6cm,yshift=1.9cm]S) rectangle
        ([xshift=.7cm,yshift=-1.4cm]E1);
    \node[label] at ([xshift=.7cm,yshift=-1.4cm]E1) {$\muXY_1$};
\end{pgfonlayer}
\end{tikzpicture}
  \caption{The iterative computation of $\muXY_{\ell}$ can be understood as a 
    sequence of CTB operations as shown above.}
  \label{fig:CFSM:estimate:hXY}
\end{figure}
\begin{figure}[h!]
  \centering
  
    \begin{tikzpicture}[scale=\scalefactorB,every node/.style={transform shape},
    factor/.style={rectangle, minimum width=1cm, minimum height=1.7cm, draw},
    sfactor/.style={rectangle, minimum size=.4cm, draw},
    darksolid/.style={rectangle, minimum size=.15cm, draw, fill = black,
    inner sep=0pt, outer sep = 0pt},
    label/.style={magenta, anchor=north east, xshift = .1cm}]
\node[sfactor,inner sep=0pt] (S) {$\rho_{\system{S}_0}$};
\node[factor] (E1) [right=.7cm of S] {$W$};
\draw (S.north) |- ([yshift=.6cm]E1.west) node[above=-0.05cm,pos=.75] {$s_0$};
\draw (S.south) |- ([yshift=-.6cm]E1.west) node[below,pos=.75] {$\tilde{s}_0$};
\node[darksolid] (X1) [above=.3cm of E1.north,anchor = south] {};
\draw (X1) -- (E1) node[right, pos = 0] {$\cx_1$};
\draw (X1) -- ([yshift=.3cm]X1.north)
    node (pX1) [sfactor, anchor = south, pos=1] {$Q$};
\draw (E1.south) -- ([yshift=-.7cm]E1.south) node[darksolid]{}
    node (Y1) [right] {$\cy_1$};
    
\node[factor] (E2) [right=1cm of E1] {$W$};
\draw ([yshift=.6cm]E1.east) |- ([yshift=.6cm]E2.west)
    node[above=-0.05cm,pos=.75] {$s_1$};
\draw ([yshift=-.6cm]E1.east) |- ([yshift=-.6cm]E2.west)
    node[below,pos=.75] {$\tilde{s}_1$};
\node[darksolid] (X2) [above=.3cm of E2.north,anchor = south] {};
\draw (X2) -- (E2) node[right, pos = 0] {$\cx_2$};
\draw (X2) -- ([yshift=.3cm]X2.north) node[sfactor, anchor = south, pos=1] {$Q$};
\draw (E2.south) -- ([yshift=-.7cm]E2.south) node[darksolid]{} 
    node[right] {$\cy_2$};
    
\node[factor, draw=none] (Edummy1) [right=1cm of E2] {};
\node at ([yshift=.6cm]E2.east-|Edummy1) {$\cdots$};
\node at ([yshift=-.6cm]E2.east-|Edummy1) {$\cdots$};
\draw ([yshift=.6cm]E2.east) |- ([yshift=.6cm]Edummy1.west)
    node[above=-0.05cm,pos=.75] {$s_2$};
\draw ([yshift=-.6cm]E2.east) |- ([yshift=-.6cm]Edummy1.west)
    node[below,pos=.75] {$\tilde{s}_2$};
\node at (pX1-|Edummy1) {$\cdots$};
\node at (Y1-|Edummy1) {$\cdots$};
    
\node[factor] (El) [right=1cm of Edummy1] {$W$};
\draw ([yshift=.6cm]Edummy1.east) |- ([yshift=.6cm]El.west)
    node[above=-0.05cm,pos=.75] {$s_{\ell-1}$};
\draw ([yshift=-.6cm]Edummy1.east) |- ([yshift=-.6cm]El.west)
    node[below,pos=.75] {$\tilde{s}_{\ell-1}$};
\node[darksolid] (Xl) [above=.3cm of El.north,anchor = south] {};
\draw (Xl) -- (El) node[right, pos = 0] {$\cx_\ell$};
\draw (Xl) -- ([yshift=.3cm]Xl.north) node[sfactor, anchor = south, pos=1] {$Q$};
\draw (El.south) -- ([yshift=-.7cm]El.south) node[darksolid]{}
    node[right] {$\cy_\ell$};
    
    
\node[factor, draw=none] (Edummy2) [right=1cm of El] {};
\node at ([yshift=.6cm]El.east-|Edummy2) {$\cdots$};
\node at ([yshift=-.6cm]El.east-|Edummy2) {$\cdots$};
\draw ([yshift=.6cm]El.east) |- ([yshift=.6cm]Edummy2.west)
    node[above=-0.05cm,pos=.85] {$s_{\ell\!+\!1}$};
\draw ([yshift=-.6cm]El.east) |- ([yshift=-.6cm]Edummy2.west)
    node[below,pos=.85] {$\tilde{s}_{\!\ell+\!1}$};
\node at (pX1-|Edummy2) {$\cdots$};
\node at (Y1-|Edummy2) {$\cdots$};

\node[factor] (En) [right=1cm of Edummy2] {$W$};
\draw ([yshift=.6cm]Edummy2.east) |- ([yshift=.6cm]En.west)
    node[above=-0.05cm,pos=.75] {$\tilde{s}_{n\!-\!1}$};
\draw ([yshift=-.6cm]Edummy2.east) |- ([yshift=-.6cm]En.west)
    node[below,pos=.75] {$\tilde{s}_{n\!-\!1}$};
\node[darksolid] (Xn) [above=.3cm of En.north,anchor = south] {};
\draw (Xn) -- (En) node[right, pos = 0] {$\cx_n$};
\draw (Xn) -- ([yshift=.3cm]Xn.north)
    node[sfactor, anchor = south, pos=1] {$Q$};
\draw (En.south) -- ([yshift=-.7cm]En.south) node[darksolid]{}
    node[right] {$\cy_n$};
    
\node[sfactor, inner sep=0pt] (ee) [right=1cm of En] {$=$};
\draw ([yshift=.6cm]En.east) -| (ee.north) node[above=-0.05cm,pos=.25] {$s_n$};
\draw ([yshift=-.6cm]En.east) -| (ee.south) node[below,pos=.25] {$\tilde{s}_n$};
    
\begin{pgfonlayer}{bg}
    \draw[dashed, magenta, line width=1.5pt, fill=magenta!15]
        ([xshift=-1.2cm,yshift=3cm]S) rectangle
        ([xshift=.7cm,yshift=-3.6cm]En);
    \node[label] at ([xshift=.7cm,yshift=-3.6cm]En) {$\sigmaXY_n$};
    \draw[dashed, magenta, line width=1.5pt, fill=magenta!30]
        ([xshift=-1cm,yshift=2.8cm]S) rectangle
        ([xshift=.7cm,yshift=-3cm]El);
    \node[label] at ([xshift=.7cm,yshift=-3cm]El) {$\sigmaXY_{\ell}$};
    \draw[dashed, magenta, line width=1.5pt, fill=magenta!45]
        ([xshift=-.8cm,yshift=2.6cm]S) rectangle
        ([xshift=.7cm,yshift=-2.4cm]E2);
    \node[label] at ([xshift=.7cm,yshift=-2.4cm]E2) {$\sigmaXY_2$};
    \draw[dashed, magenta, line width=1.5pt, fill=magenta!60]
        ([xshift=-.6cm,yshift=2.4cm]S) rectangle
        ([xshift=.7cm,yshift=-1.8cm]E1);
    \node[label] at ([xshift=.7cm,yshift=-1.8cm]E1) {$\sigmaXY_1$};
\end{pgfonlayer}
\end{tikzpicture}
  \caption{The iterative computation of $\sigmaXY_{\ell}$ as 
    in~\eqref{eq:def:quantum:channel:state:metric:XY:1} can be understood as a
    sequence of CTB operations as shown above.}
  \label{fig:QFSM:estimate:hXY}
\end{figure}
\end{document}